\newcommand{\blind}{1}
\newtheorem{lemma}{Lemma}[section]
\newtheorem{prop}[lemma]{Proposition}  % shares lemma counter
\newtheorem{assum}{Assumption}
\newtheorem{proposition}{Proposition}%
\newtheorem{remark}{Remark}%
\newtheorem{remarkapp}[lemma]{Remark} % shares lemma counter
\DeclareMathOperator*{\esssup}{ess\,sup}
\DeclareMathOperator*{\essinf}{ess\,inf}
\begin{document}

% Start with numeric section numbering
\setcounter{section}{0}
\renewcommand{\thesection}{\arabic{section}}

\def\spacingset#1{\renewcommand{\baselinestretch}%
{#1}\small\normalsize} \spacingset{1}

%%%%%%%%%%%%%%%%%%%%%%%%%%%%%%%%%%%%%%%%%%%%%%%%%%%%%%%%%%%%%%%%%%%%%%%%%%%%%%

\date{}
\if1\blind
{
  \title{\bf A new class of non-stationary Gaussian fields with arbitrary smoothness on metric graphs}
  \author{
    David Bolin\thanks{The authors are listed alphabetically.}, 
    Lenin Riera-Segura\footnotemark[1], 
    and Alexandre B. Simas\footnotemark[1]\\
    Statistics Program, Computer, Electrical and Mathematical\\
    Sciences and Engineering (CEMSE) Division,\\
    King Abdullah University of Science and Technology (KAUST),\\
    Thuwal, 23955-6900, Kingdom of Saudi Arabia
  }
  \maketitle
} \fi

\if0\blind
{
  \bigskip
  \bigskip
  \bigskip
  \begin{center}
    {\LARGE\bf Computationally efficient inference for non-stationary Gaussian fields with general smoothness on  metric graphs}
\end{center}
  \medskip
} \fi

\bigskip
\begin{abstract}
    The increasing availability of network data has motivated statistical models on metric graphs, with Gaussian processes playing a central role. Existing models such as Whittle--Mat\'ern fields are limited in their ability to handle non-stationary covariance structures and arbitrary smoothness. We propose a new class of generalized Whittle--Mat\'ern fields on compact metric graphs allowing both non-stationarity and arbitrary smoothness, and establish new regularity results that also apply to the Whittle--Mat\'ern setting. We further develop an efficient approximation of the covariance operator using finite elements and rational approximations of fractional powers, enabling scalable Bayesian inference for large datasets. Explicit convergence rates are derived, and the approach is validated through simulations and an application to traffic speed data, demonstrating the flexibility and effectiveness of the proposed model class.
\end{abstract}

\noindent%
{\it Keywords:} covariance operator, finite element method, Gaussian process, non-stationary model, rational approximation

\spacingset{1} % DON'T change the spacing!

\section{Introduction}
\label{sec:intro}

In recent years, the ever-increasing availability of data collected from networks such as streets or water systems has encouraged researchers to propose and develop new statistical models for the analysis of network data \citep{Borovitskiy2021Matern, Bolin2023Statistical,Cressie2006Spatial, VerHoef2006Spatial}. A network of this type can be conveniently represented by a metric graph, which is a graph equipped with a notion of distance and where the edges are curves that connect the vertices \citep{Berkolaiko2013Introduction}. The difference from a combinatorial graph is thus that the edges are not only connecting the vertices, but are curves on which we want to define statistical models (see Figure~\ref{graph_and_sim} for a simple example).

As on Euclidean domains, the formulation of Gaussian processes is foundational for developing new statistical models on metric graphs. For practical applications, it is typically important that the class of Gaussian processes allows for

\begin{enumerate}[(i)]
    \item arbitrary smoothness, indexed explicitly by a parameter that can be estimated from data, which is particularly important for asymptotically optimal spatial prediction and interpolation \citep{Stein1999Interpolation, Kirchner2022Necessary};

    \item non-stationary covariance functions, where variances and practical correlation ranges can vary and be controlled, which is crucial when modeling on large or heterogeneous domains \citep{Paciorek2006Spatial, Xiong2024Covariance}; 

    \item computationally efficient inference and prediction, so that it can be applied to large data sets \citep{Lindgren2011AnExplicit, Heaton2019ACase}.  
\end{enumerate}

There have been several attempts to define Gaussian processes on metric graphs. An early example of these can be found in \cite{VerHoef2006Spatial} which studied Gaussian processes designed for river networks. More recently, \cite{Anderes2020Isotropic}  defined Gaussian processes with valid isotropic covariance functions for graphs with Euclidean edges. Their approach has later been extended to spatio-temporal models \citep{Porcu2023Stationary, Filosi2023Temporally} and log-Gaussian Cox processes \citep{Moller2024Cox}. However, the restriction to have Euclidean edges, for example, excludes graphs where multiple edges connect the same two vertices and can be rather restrictive for applications. An example where this is evident is the application we will present later, which is defined on a graph that does not have Euclidean edges (see Figure~\ref{replicate14}). Other drawbacks of these models are that they cannot be used to model differentiable processes (which thus is a restriction to our first requirement), and they are by construction isotropic and can therefore not model non-stationarity.  Other attempts to define Gaussian processes for data on metric graphs can be found in \citet{Borovitskiy2021Matern} and \citet{SanzAlonso2022TheSPDE}, where both used models based on the so-called graph Laplacian. However, contrary to the processes proposed by \citet{Anderes2020Isotropic}, these are defined only on the vertices. 

\begin{figure}[t]
    \centering
    \includegraphics[width=0.99\textwidth]{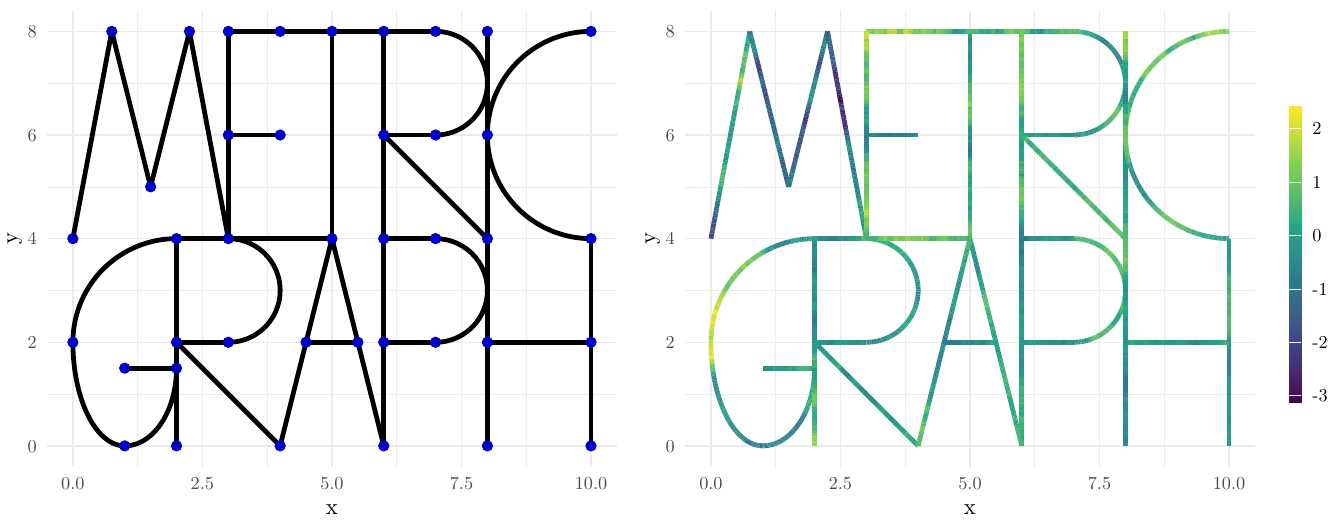}
    \caption{\texttt{MetricGraph} package's logo as a metric graph, with vertices in blue and edges in black (left), and a simulated non-stationary process on it (right).}
    \label{graph_and_sim}
\end{figure}

Recently, \citet{Bolin2024Gaussian} introduced Whittle--Mat\'ern fields as a class of Gaussian processes defined as solutions $u$ to the fractional order differential equation 
\begin{equation}\label{eq:spde}
(\kappa^2 - \Delta_{\Gamma})^{\sfrac{\alpha}{2}}(\tau u) = \mathcal{W}, \quad \text{on $\Gamma$}, 
\end{equation}
for a metric graph $\Gamma$, where $\Delta_{\Gamma}$ is the so-called Kirchhoff--Laplacian (see later sections for details), $\tau,\kappa>0$ control the marginal variance and practical correlation range, respectively, $\alpha>\sfrac{1}{2}$ controls the smoothness, and $\mathcal{W}$ is Gaussian white noise defined on a probability space $(\Omega,\mathcal{F},\mathbb{P})$. Defining the Whittle--Mat\'ern fields in this way is motivated by the fact that when this equation is considered on $\mathbb{R}^p$ (and $\Delta_{\Gamma}$ is replaced by the standard Laplacian), the solution $u$ is a centered Gaussian random field with an isotropic Mat\'ern covariance function \citep{Matern1960Spatial}.

Thus, the Whittle--Mat\'ern fields is a natural metric graph analogue to Gaussian processes with Mat\'ern covariance functions on $\mathbb{R}^d$ and they possess some desirable properties. For instance, they are specified on the entire metric graph (not only on the vertices) and are well-defined for any compact metric graph. Furthermore, they can model fields of arbitrary smoothness and, for example, have differentiable sample paths whenever $\alpha>\sfrac{3}{2}$. \cite{Bolin2023Statistical} showed that computationally efficient inference can be performed for these models whenever $\alpha\in\mathbb{N}$, and that having differentiable processes can be beneficial for spatial prediction. 

\cite{Bolin2024Regularity} later showed that these fields can be generalized to have non-stationary covariance functions by considering a more general differential operator in \eqref{eq:spde}. Specifically, they introduced the so-called Generalized Whittle--Mat\'ern fields as solutions to the differential equation $L^{\sfrac{\alpha}{2}} u = \mathcal{W}$ where $L$ is a second order elliptic operator given by $L = \kappa^2 - \nabla(H\nabla)$ for sufficiently smooth functions $\kappa$ and $H$. They showed that one can sample these models approximately by using a combination of the finite element method (FEM) and a quadrature approximation of the fractional power of the operator. Thus, these fields satisfy the first two requirements listed above. However, \cite{Bolin2024Regularity} did not consider statistical inference, and their numerical approximation is not suitable for statistical applications as it is computationally demanding and incompatible with popular inference tools such as \texttt{R-INLA} \citep{Lindgren2015Bayesian}. See \citet{Xiong2024Covariance} for a discussion about this for models on Euclidean domains. 

In this work, we propose a version of the Generalized Whittle--Mat\'ern fields, defined through equation \eqref{eq:spde} but where both $\kappa$ and $\tau$ are allowed to be spatially varying functions. An advantage of this model compared to those considered by \cite{Bolin2024Regularity} is that they provide more direct control over marginal variances and local practical correlation ranges, and they can be viewed as the metric graph analogue of the non-stationary SPDE-based models originally proposed by \cite{Lindgren2011AnExplicit} for Euclidean domains and manifolds. However, unlike the Euclidean setting, where boundary effects are confined to the exterior of the domain and do not influence interior regularity, metric graphs exhibit boundary-like behavior at internal vertices, which has a critical impact on pathwise regularity. In particular, the regularity of the solution may depend in a delicate way on the global regularity of $\tau$, a phenomenon that appears to be intrinsic to metric graphs and does not arise in classical Euclidean or manifold-based settings. To the best of our knowledge, this type of pathwise regularity analysis has not previously appeared in the literature.

As an important special case of the model class, we introduce the variance-stationary Whittle--Mat\'ern fields, which are defined so that they have constant variance throughout the domain. This modification preserves the flexibility and computational advantages of Whittle--Mat\'ern models on metric graphs while avoiding undesirable boundary and vertex effects in marginal variability.

For the proposed model class, we introduce a computationally efficient approximation which is suitable for Bayesian or likelihood-based inference. The method is theoretically justified by deriving explicit convergence rates for the covariance function of the approximation and we show how to use it for computationally efficient inference. The method is illustrated through simulation experiments and an application to traffic speed observations, where the ability to estimate the smoothness from data and model non-stationarity improves the predictive power. To complement the empirical application, we additionally perform a simulation study to quantify the effects of smoothness misspecification on parameter estimation and predictive accuracy.

The remainder of the paper is organized as follows. Section~\ref{sec:model} provides preliminary concepts and notation, defines the model class, and illustrates key theoretical properties. Section~\ref{num_approx_sec} introduces the numerical approximation method and its properties. In Section~\ref{numerical_experiments}, we perform numerical experiments to show the accuracy of the proposed approximation and to verify the theoretical results. Section \ref{application} is devoted to the application and some final conclusions are given in Section~\ref{sec:conc}. Technical details, proofs and further simulation results can be found in the appendices. The proposed model and the approximation methods are implemented in the \texttt{MetricGraph} package \citep{MetricGraphRpackage}, available on CRAN. This package facilitates using the models in general latent Gaussian models which can be fitted to data using \texttt{R-INLA}.

\section{A class of non-stationary Gaussian random fields}
\label{sec:model}

\subsection{Preliminaries}
\label{subsec:prelim}

First, let us provide the exact definition of a metric graph and introduce some notation. A compact metric graph is a pair $\Gamma = (\mathcal{V}, \mathcal{E})$, where $\mathcal{V}$ is a finite set of vertices and $\mathcal{E}$ is a finite set of undirected edges (each edge is a rectifiable curve). Elements of $\mathcal{V}$ and $\mathcal{E}$ will be denoted by $v\in\mathcal{V}$ and $e\in\mathcal{E}$, respectively. The graph is equipped with a metric $d(\cdot,\cdot)$, which we assume is the geodesic distance. We further assume that $\Gamma$ is connected so that there is a path between any two points in $\Gamma$. Each edge $e\in\mathcal{E}$ has a positive length $\ell_{e}\in(0,\infty)$ and connects two vertices $\underline{e}$ and $\overline{e}$ in $\mathcal{V}$. For each vertex $v\in\mathcal{V}$ let $\mathcal{E}_{v}$ denote the collection of edges incident to $v$, and set $L_v=\{e\in\mathcal{E}_v:\; e \text{ is a loop}\}$. The degree of $v$ is defined by $\deg(v)=|\mathcal{E}_v|+|L_v|$.

A location $s\in\Gamma$ is a point lying on some edge $e\in\mathcal{E}$ and may be expressed by the ordered pair $(e,t)$ with $t\in[0,\ell_e]$; we call $(e,t)$ a coordinate representation of $s$. A vertex $v\in\mathcal{V}$ admits $\deg(v)$ distinct coordinate representations, whereas any point $s\in\Gamma\setminus\mathcal{V}$ has a unique coordinate representation. We write $s\in v$ to indicate that $s$ is a coordinate representation of the vertex $v$.

Given a function $f$ on $\Gamma$ and an edge $e\in\mathcal{E}$, denote by $f_e=f|_e$ the restriction of $f$ to $e$, and for $s=(e,t)$ set $f(s)=f_e(t)$. Thus, a function $f$ on $\Gamma$ can be identified with the collection of its edgewise restrictions $\{f_e\}_{e\in\mathcal{E}}$. If for every $e\in\mathcal{E}$, $f_e\in L_2(e)$, we say that $f\in L_2(\Gamma)$, where $L_2(e) $ denotes the standard Lebesgue space on the interval $[0, \ell_e]$. Further, we let $(f,g)_{L_2(\Gamma)} = \int_\Gamma f(s)g(s)ds = \sum_{e\in\mathcal{E}}\int_{e}f_{e}(s)g_{e}(s)ds$ and its induced norm $\|f\|^2_{L_2(\Gamma)} = \sum_{e\in\mathcal{E}}\|f_{e}\|^2_{L_2(e)}$, where $f = \{f_e\}_{e \in \mathcal{E}}, g = \{g_e\}_{e \in \mathcal{E}} \in L_2(\Gamma)$. 

We let $C(\Gamma)$ be the space of continuous functions on $\Gamma$ with norm $ \|f\|_{C(\Gamma)} = \sup_{s \in \Gamma} |f(s)| $, and let $L_\infty(\Gamma)$ denote the space of essentially bounded functions, with norm $ \|f\|_{L_\infty(\Gamma)} = \esssup_{s \in \Gamma} |f(s)| $. Additionally, we consider the space of $\gamma$-H\"older continuous functions, $ C^{0,\gamma}(\Gamma) $, for $ 0 < \gamma \leq 1 $, with seminorm $[f]_{C^{0,\gamma}(\Gamma)} = \sup_{s, s' \in \Gamma} \sfrac{|f(s) - f(s')|}{d(s, s')^\gamma},$ and norm $\|f\|_{C^{0,\gamma}(\Gamma)} = \|f\|_{C(\Gamma)} + [f]_{C^{0,\gamma}(\Gamma)}$. On individual edges, these function spaces coincide with their standard definitions. Specifically, $ C(e) $ denotes the space of continuous functions, while for integer $k\ge 0$ and $0<\gamma\le 1$, $C^{k,\gamma}(e)$ denotes the H\"older space of functions with $k$ continuous derivatives whose $k$-th derivative is $\gamma$-H\"older continuous on $e$. Notably, a $\gamma$-H\"older continuous function with $\gamma = 1$ is referred to as a Lipschitz function. For the entire graph $\Gamma$, and integer $k\geq0$ and $0 < \gamma \leq 1$, we let $C^{k,\gamma}(\Gamma)$ denote the H\"older space of functions $f\in \bigoplus_{e\in\mathcal{E}} C^{k,\gamma}(e)$ such that $f^{(2j)}\in C^{0,\gamma}(\Gamma)$ for every $j\in\{0,1,2,\dots,\lfloor\sfrac{k}{2}\rfloor\}$. Thus, functions in $C^{k,\gamma}(\Gamma)$ possess $C^{k,\gamma}(e)$-regularity and, in addition, their even-order derivatives up to order $k$ extend to $\gamma$-H\"older continuous on $\Gamma$. For instance, if $f\in C^{2,\gamma}(\Gamma)$, then $f,f''\in C^{0,\gamma}(\Gamma)$. We also let $C^{k,0}(e)$ denote the space of $k$-times continuously differentiable real functions on $e$. Lastly, we define the subspace of continuous functions that satisfy Kirchhoff vertex conditions as 

\begin{equation}
\label{eq:kirchhoff_cond}
   \mathcal{K}(\Gamma) =  \left\{f\in C(\Gamma)\cap \bigoplus_{e\in\mathcal{E}} C^{1,0}(e)\;\middle|\; \forall v\in \mathcal{V}:\; \sum_{s\in v}\partial f(s)=0 \right\},
\end{equation}
where, for a coordinate representation $s\in v$, $\partial u(s)$ denotes the directional derivative of $u$ along the incident edge taken in the direction away from the vertex $v$. Concretely, if $e=[0,\ell_e]$ and $s=(e,0)\in v$ then $\partial u(s)=u_e'(0)$, whereas if $s=(e,\ell_e)\in v$ then $\partial u(s)=-u_e'(\ell_e)$. Let $f\in C^{1,\gamma}(\Gamma)$ with $\gamma\in [0,1]$. We say that $f'\in C^{0,\gamma}(\Gamma)$ if for every $s\in\Gamma$ and every simple path $p\subset\Gamma$ containing $s$ in its interior, the derivative $(f|_p)'$ of the restriction belongs to $C^{0,\gamma}(p)$. Here $p$ is parameterized by arc length as $[0,\ell_p]$. If $s$ is a vertex, the edge directions are chosen compatibly with the orientation of $p$ so that this parameterization is well-defined. If $\gamma=0$, we simply write $f'\in C(\Gamma)$. Note that care must be taken when interpreting odd-order derivatives as they depend on the orientation of edge parameterizations, which is not the case for even-order derivatives, see Appendix~\ref{app:theoretical_details} and \cite{Awadelkarim2025Fractional} for theoretical details.

\subsection{The model class}
\label{sec:model_class}
As previously mentioned, we consider Gaussian fields specified as solutions to \eqref{eq:spde} when $\kappa$ and $\tau$ are functions which will determine the marginal variances and practical correlation ranges of the process. To guarantee the existence of the these processes, we make the following assumptions on $\kappa$ and $\tau$. 
\begin{assum}
\label{assumption1}
    Let $\alpha>\sfrac{1}{2}$. If $\sfrac{1}{2}<\alpha\leq2$,
    we assume that  $\kappa, \tau \in L_\infty(\Gamma)$ and that there exist $\kappa_0,\tau_0>0$ such that $\essinf_{s\in\Gamma}\kappa(s)\geq \kappa_0$ and $\essinf_{s\in\Gamma}\tau(s)\geq \tau_0$. If $\alpha>2$, we additionally require that $\kappa\in \mathcal{K}(\Gamma)$  and $\kappa_e\in C^{\lceil\alpha\rceil-3,1}(e)$ for all $e\in\mathcal{E}$.
\end{assum}

We now formally define the model. The Kirchhoff--Laplacian $\Delta_{\Gamma}$ in \eqref{eq:spde} is a differential operator that acts on a function $f$ by taking its second derivative on each edge, provided that $ f_e $ has a well-defined second derivative on every edge $e$. This operator is coupled with the Kirchhoff vertex conditions \eqref{eq:kirchhoff_cond}. Given $ L = \kappa^2 - \Delta_{\Gamma} $, the fractional power $ L^\alpha $, for $ \alpha > 0 $, is defined in the spectral sense. This definition is valid because $ L $ is a densely defined, self-adjoint operator with a compact resolvent, ensuring a well-posed spectral decomposition. In particular, the inverse fractional power $ L^{-\alpha} $ is also well-defined. For further details, see Appendix~\ref{app:uniqueness}.

The final component needed to understand \eqref{eq:spde} is the definition of Gaussian white noise. It is represented as a family of centered Gaussian random variables $\{\mathcal{W}(h) : h \in L_2(\Gamma)\}$ satisfying $\mathbb{E}[\mathcal{W}(h) \mathcal{W}(g)] = (h, g)_{L_2(\Gamma)}$ for all $h, g \in L_2(\Gamma)$. A solution to \eqref{eq:spde} is a centered Gaussian random field $ u $ in $ L_2(\Gamma) $ satisfying, for all $ h \in L_2(\Gamma)$, $(u, h)_{L_2(\Gamma)} = \mathcal{W}(L^{-\sfrac{\alpha}{2}}(\tau^{-1} h))$, where $ \tau^{-1} h $ is the function defined by $ (\tau^{-1} h)(s) = \tau^{-1}(s) h(s) $ for all $ s \in \Gamma $. For further details, refer to Appendix~\ref{app:uniqueness}. In the next proposition, we need the following compatibility assumption for strict positive-definiteness of the corresponding covariance function.

\begin{assum}
\label{assumption2}
    Let Assumption \ref{assumption1} hold. If $\alpha > 2.5$, define $m_0=\left\lfloor\frac{\alpha}{2}-\frac{1}{4}\right\rfloor$ and assume that $L^{m_0-1}\kappa^2 \in C(\Gamma)$. Furthermore, if $\alpha > 3.5$, define $m_1=\left\lfloor\frac{\alpha}{2}-\frac{3}{4}\right\rfloor$ and additionally assume that $L^{m}\kappa^2 \in \mathcal{K}(\Gamma)$ for $m=0,\ldots,m_1-1$.
\end{assum}

\begin{proposition}
\label{prp:existence}
If Assumption~\ref{assumption1} holds, \eqref{eq:spde} has a unique solution $u$ which is a centered Gaussian process with covariance function $\varrho^{\alpha}(s,t) = \text{Cov}(u(s), u(t))$. If, additionally, Assumption~\ref{assumption2} holds, this covariance function is strictly positive definite.
\end{proposition}

The main difficulty in proving this novel result is to show the strict positive definiteness of $\varrho^{\alpha}$, a property that is highly useful for statistical applications.

\subsection{Regularity of the field}\label{subsec:regularity}
As mentioned in the introduction, a key property of Gaussian random fields is their ability to exhibit varying degrees of regularity. While Proposition \ref{prp:existence} establishes the existence of solutions, it does not address their smoothness. Our goal here is to investigate the regularity of these solutions. The parameter $\alpha$ plays a critical role in determining regularity. However, due to the way $\tau$ appears in \eqref{eq:spde}, additional assumptions on $\tau$ are required to achieve a desired level of regularity. This observation is particularly relevant for practical applications, as we will see later. 

\begin{remark}
Throughout this work, smoothness is understood in terms of sample path regularity. Mean-square regularity is not discussed separately, as it is a strictly weaker notion for Gaussian processes and therefore follows automatically. Classical covariance-based characterizations of smoothness, which are typically derived under stationarity assumptions, are not directly applicable in this setting due to the non-stationary nature of the proposed models. Instead, smoothness is controlled through the fractional power of the defining differential operator, as is standard in SPDE formulations of non-stationary Gaussian processes.
\end{remark}

We distinguish between two types of regularity: \emph{local regularity}, which examines the regularity of the field when restricted to individual edges, and \emph{global regularity}, which considers the regularity on the entire metric graph. The main result concerning local regularity is presented below. In the result and later, when we state that a Gaussian process $u$ belongs to a function space $F$ we mean that there exists a modification of $ u $ such that the sample paths of the modified field belong to $F$.

\begin{proposition}
\label{prp:regularity}
    Let $u$ be the solution to \eqref{eq:spde} under Assumption~\ref{assumption1}. Then:
    \begin{enumerate}
        \item[(i)] If $\alpha>\sfrac{1}{2}$, then for any $\gamma$ such that $0 < \gamma < \min\{\alpha-\sfrac{1}{2}, 1\}$ and $\tau_e \in C^{0, \gamma}(e)$ for all $e \in \mathcal{E}$, we have $u_e \in C^{0, \gamma}(e)$ for every $e \in \mathcal{E}$.

        \item[(ii)] If $\alpha>\sfrac{3}{2}$, then for any $\gamma$ such that $0 < \gamma < \min\{\alpha-\sfrac{3}{2}, 1\}$ and $\tau_e \in C^{1, \gamma}(e)$ for all $e \in \mathcal{E}$, we have $u_e \in C^{1, \gamma}(e)$ for every $e \in \mathcal{E}$.

        \item[(iii)] If $\alpha>\sfrac{5}{2}$, then for any $\gamma$ such that $0 < \gamma < \min\{\alpha-\sfrac{5}{2}, 1\}$ and $\tau_e \in C^{2, \gamma}(e)$ for all $e \in \mathcal{E}$, we have $u_e \in C^{2, \gamma}(e)$ for every $e \in \mathcal{E}$.
    \end{enumerate}
\end{proposition}
Proposition~\ref{prp:regularity} establishes edgewise regularity, where the smoothness of $u$ on each edge depends only on $\alpha$ and the behavior of $\tau_e$. Global regularity follows by imposing additional assumptions on $\tau$ and $\kappa$ that ensure compatibility at the vertices. These conditions preserve the edgewise regularity from Proposition~\ref{prp:regularity} and, in addition, yield continuity and higher-order smoothness of $u$ across the entire graph. Thus, stronger assumptions on the coefficients lead to correspondingly stronger global regularity on the graph. Note that, in the next proposition, the condition $\tau' \in C^{0,\gamma}(\Gamma)$ appearing in part (iii) below refers to the pathwise definition from Section~\ref{subsec:prelim}, requiring regularity of the derivative along every simple path through each point (with compatible orientations at vertices).

\begin{proposition}
\label{prp:regularity-global}
    Under the same setting as Proposition~\ref{prp:regularity}:
    \begin{enumerate}
        \item[(i)] If $\tau \in C^{0,\gamma}(\Gamma)$, then $u \in C^{0,\gamma}(\Gamma)$. Furthermore,  $\tau \in C(\Gamma)$ if and only if $u \in C(\Gamma)$.
        \item[(ii)] We have $u \in \mathcal{K}(\Gamma)$ if and only if $\tau \in \mathcal{K}(\Gamma)$.
        \item[(iii)] Let $\kappa \in C^{0,\gamma}(\Gamma)$. If $\tau\in C^{2,\gamma}(\Gamma)$ and $\tau' \in C^{0,\gamma}(\Gamma)$, then $u''\in C^{0,\gamma}(\Gamma)$. Furthermore, if $\tau,\tau'' \in C(\Gamma)$, then  $\tau' \in C(\Gamma)$ if and only if $u''\in C(\Gamma)$.
    \end{enumerate}
\end{proposition}
We emphasize that the results presented in Propositions \ref{prp:regularity} and \ref{prp:regularity-global} are novel. In particular, Proposition~\ref{prp:regularity-global} (ii) is new even in the context of standard Whittle--Mat\'ern fields as it improves upon \citet[Proposition 11]{Bolin2024Gaussian}, where $\alpha \geq 2$ was required for standard Whittle--Mat\'ern fields. Here, we relax this condition to $\alpha > \sfrac{3}{2}$.

The dependence of the global regularity of the solution on the global regularity of $\tau$ and $\kappa$ is, to the best of our knowledge, a completely novel phenomenon, and it plays a particularly critical role in the context of metric graphs. Unlike in the Euclidean setting, where boundary extensions can often mitigate boundary effects, vertex conditions on metric graphs are unavoidable. Moreover, as $\alpha$ increases, achieving the desired regularity requires increasingly stringent conditions on $\tau$. While $\kappa$ is also required to satisfy regularity assumptions, these are comparatively milder, highlighting the importance of careful modeling of $\tau$ in practical applications. To illustrate this, consider the graph in Figure~\ref{kappa_disc}. Let $f(s) = \mathrm{edge.number}(s)/4$ and $g(s) = 0.5 \cdot (x^2(s) - y^2(s)) + 0.5$, where $(x(s), y(s))$ are Euclidean coordinates on the plane. A simulation of the Gaussian process $u$ with $\alpha = 3$ is shown in the top row of the figure, with $\tau(s) = \exp(g(s))$ and $\kappa(s) = \exp(f(s))$. In this case, the field is continuous. However, when $\tau(s) = \exp(f(s))$ and $\kappa(s) = \exp(g(s))$, the simulated field is discontinuous, as illustrated in the bottom row of Figure~\ref{kappa_disc}.

\begin{figure}[t]
\centering
\includegraphics[width=0.99\textwidth]{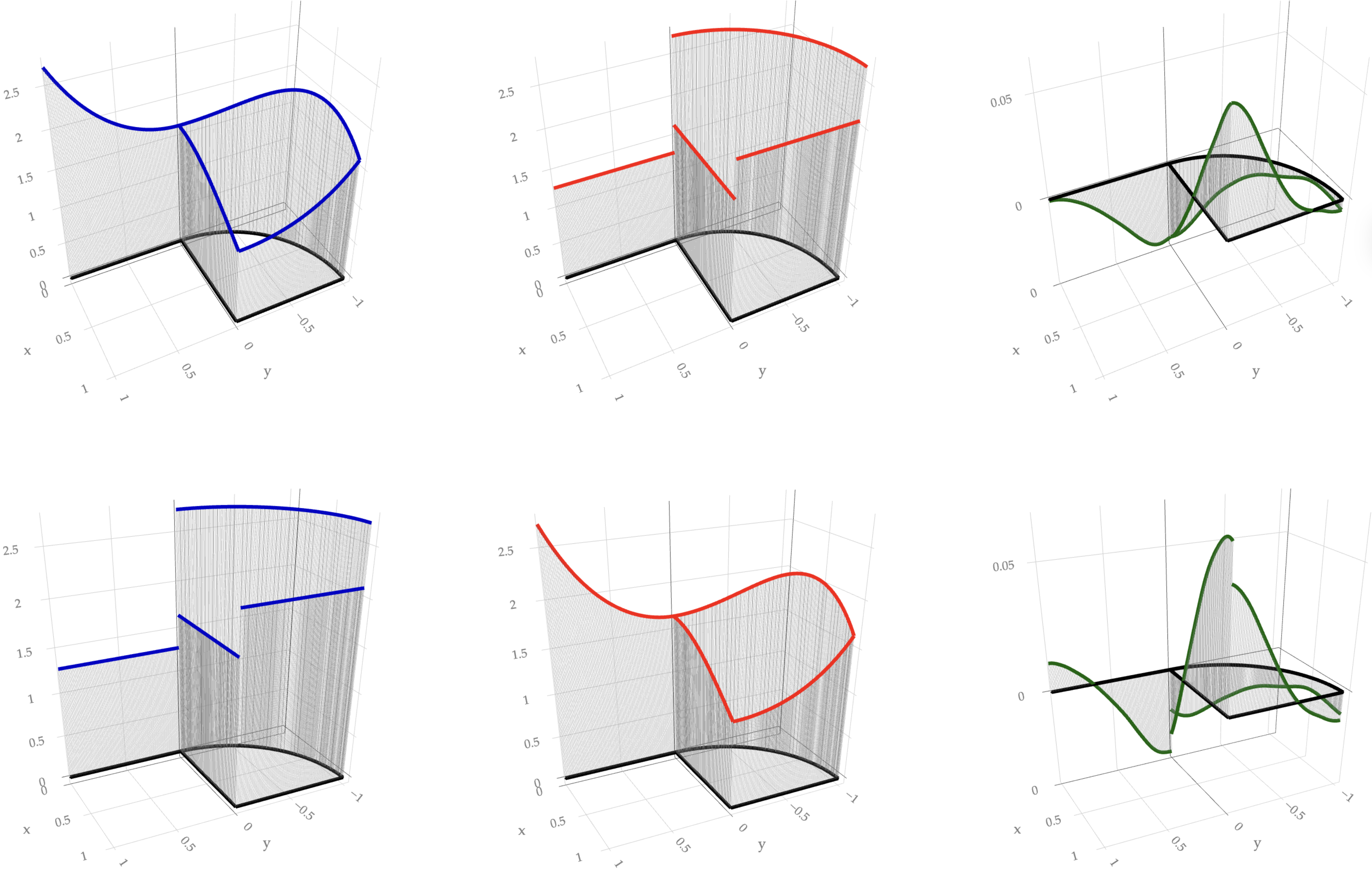}
\caption{Models for $\tau(\cdot)$ (left), $\kappa(\cdot)$ (center), and their simulated fields $u(\cdot)$ (right).}
\label{kappa_disc}
\end{figure}

\subsection{Illustrations of the model properties}
\label{subsec:model_properties}

We illustrate the flexibility of the proposed model using the \texttt{MetricGraph} package's logo in Figure~\ref{graph_and_sim}. To visualize the versatility of our method in terms of non-stationarity, let $\alpha=0.9$, $\tau(s) = e^{0.05\cdot(x(s)-y(s))}$ and $\kappa(s) = e^{0.1\cdot(x(s)-y(s))}$, where $(x(s),y(s))$ are Euclidean coordinates on the plane. The left panel of Figure \ref{cov_diff_range} shows three locations, $s_1$, $s_2$, and $s_3$, for which we plot $\varrho^{\alpha}(s_i,\cdot)$, $i=1,2,3$. We can see that $\sigma(s_i)$ controls the magnitude of $\varrho^{\alpha}(s_i,\cdot)$ and $\rho(s_i)$ controls how quickly the function decays away from $s_i$. Here $\rho(s)$ is the smallest distance $d(s,t)$ so that the correlation between $u(s)$ and $u(t)$ is below $0.1$. See Appendix \ref{app:model_prop} for more details. The right panel in Figure \ref{graph_and_sim} shows a simulation of a non-stationary field using the previous choices of $\alpha$, $\tau(\cdot)$, and $\kappa(\cdot)$. To exhibit the flexibility of our method concerning  arbitrary smoothness, let $\tau(\cdot)$ and $\kappa(\cdot)$ as before, and with these fixed, consider the cases $\alpha_i \in \{0.9,1.3,2.1\}$. The right panel of Figure \ref{cov_diff_range} shows an illustration of $\varrho^{\alpha_i}(s_0,s)$ for a fixed location $s_0$ for these choices of $\alpha_i$.

\begin{figure}[t]
\centering
\includegraphics[width=0.99\textwidth]{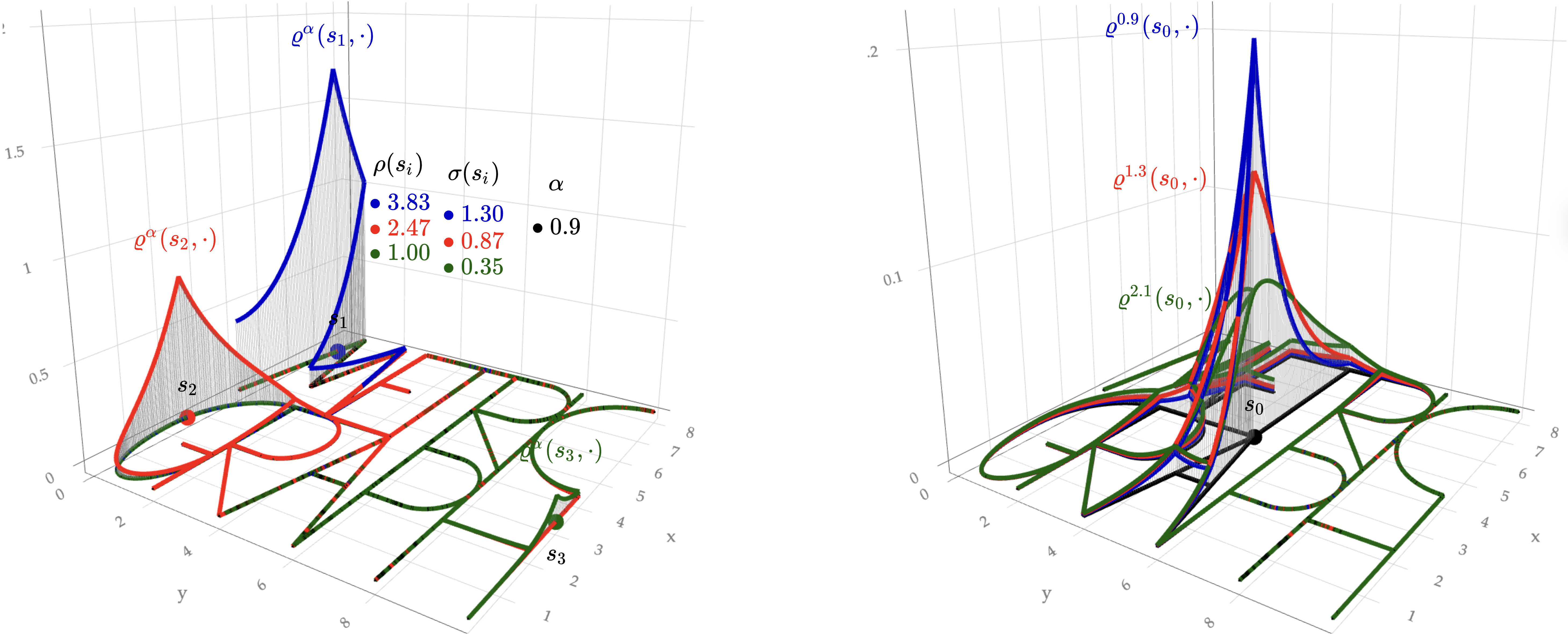}
\caption{Covariance functions $\varrho^{\alpha}(s_i,\cdot)$, $i=1,2,3$ for the model with $\alpha = 0.9$,   $\tau(s) = e^{0.05\cdot(x(s)-y(s))}$ and $\kappa(s) = e^{0.1\cdot(x(s)-y(s))}$ (left), and covariance functions $\varrho^{\alpha_i}(s_0,s)$, $i=1,2,3$ for the same choices of $\tau(\cdot)$ and $\kappa(\cdot)$, with $\alpha_i\in \{0.9,1.3,2.1\}$ (right).}
\label{cov_diff_range}
\end{figure}

\subsection{Variance-stationary Whittle--Mat\'ern fields}
\label{tau_rol}
    
One feature of the Whittle--Mat\'ern fields on metric graphs is that they are inherently non-isotropic \citep{Bolin2024Gaussian}. Their marginal variances and practical correlation ranges change over the graph even if $\kappa$ and $\tau$ are constants. That the practical correlation range is non-stationary is often realistic for applications \citep{Bolin2024Gaussian}. However, a non-constant variance might be less ideal in some applications. An example of the marginal standard deviations of a Whittle--Mat\'ern field with $\alpha=1$, $\kappa=2$, and $\tau=1$, on a simple graph are shown in Figure~\ref{var1}. One can note that the variance is higher close to vertices of degree 1 and lower close to vertices of degree larger than two. 

An interesting option to mitigate this effect is a nonstationary model defined via 

\begin{equation}
\label{eq:spde_variance_stat}
    (\kappa^2 - \Delta_{\Gamma})^{\sfrac{\alpha}{2}} (\sigma_{\kappa} u) = \sigma_0 \mathcal{W}, \quad \text{on } \Gamma,
\end{equation}
where $\sigma_0 > 0$ and $\alpha > \sfrac{1}{2}$ are constants, $\kappa$ satisfies Assumption~\ref{assumption1}, and $\sigma_{\kappa}(s)$ represents the marginal standard deviations of a generalized Whittle--Mat\'ern field $w$ with the same $\kappa$ and $\alpha$, but with $\tau = 1$. This is a special case of \eqref{eq:spde} with  $\tau(s) = \sigma_0^{-1} \sigma_{\kappa}(s)$. Under this specification, the model parameters are $(\kappa, \sigma_0, \alpha)$, and it follows directly that the variance of $ u(s) $ is $\sigma_0^2$ for all $ s \in \Gamma $. Indeed, by \eqref{eq:spde_variance_stat}, we have that $w = \sigma^{-1}_0\sigma_\kappa u$. Since $\mathbb{V}(w(s)) = \sigma_\kappa^2(s)$, it follows that $\mathbb{V}(u(s)) = \mathbb{V}(\sigma_0\sigma_\kappa^{-1}(s)w(s)) = \sigma_0^2\sigma_\kappa^{-2}(s)\mathbb{V}(w(s)) = \sigma_0^2\sigma_\kappa^{-2}(s)\sigma_\kappa^{2}(s) = \sigma_0^2$.

For this reason, we refer to this model as a \emph{variance-stationary Whittle--Mat\'ern field}. The right panel of Figure~\ref{var1} illustrates the marginal variance of the field for $\sigma_0 = 1$, showing that the variance is indeed 1 across all locations. The following result demonstrates that this choice of $\tau$ satisfies a global regularity assumption.

\begin{proposition}
\label{prp:variance_stationary_regularity}
    Let $\alpha>\sfrac{1}{2}$, $\kappa$ satisfy Assumption~\ref{assumption1}, and $\tau(s) = \sigma_0^{-1} \sigma_{\kappa}(s)$ for some $\sigma_0 > 0$. Then, $\tau(\cdot)$ satisfies Assumption~\ref{assumption1}. 
    Also, if $u$ is the solution to \eqref{eq:spde_variance_stat}, then
    \begin{enumerate}[(i)]
        \item $\tau \in C^{0, \tilde{\alpha}}(\Gamma)$, where $\tilde{\alpha} = \min\{\alpha - \sfrac{1}{2}, 1\}$, and $u \in C^{0, \gamma}(\Gamma)$ for every $ 0 < \gamma < \tilde{\alpha} $.
        
        \item If $\alpha > \sfrac{3}{2}$, then for every $ e \in \mathcal{E} $, $\tau_e \in C^{1, \alpha - \sfrac{3}{2}}(e)$, and $\tau\in\mathcal{K}(\Gamma)$. Consequently, the solution $u\in\mathcal{K}(\Gamma)$.
    \end{enumerate}
\end{proposition}

Implementing the variance-stationary Whittle--Mat\'ern field requires computing $\sigma_{\kappa}(\cdot)$, which for integer values of $\alpha$ and constant $\kappa$ can be done computationally efficiently using  \citet[Theorem 4]{Bolin2023Statistical}. For non-integer $\alpha$, we can instead compute $\sigma_\kappa(\cdot)$ based on the numerical approximation method developed in the next section. Based on that approximation, $\sigma_\kappa(\cdot)$ can be obtained through a partial inverse of the precision matrix of the field evaluated at the finite element mesh. The partial inverse can be computed efficiently using the Takahashi equations \citep{Takahashi1973Formation}. Thus, these variance-stationary fields could serve as flexible alternatives to the isotropic models of \citet{Anderes2020Isotropic}, as they are well-defined for general metric graphs and facilitate efficient inference for fields with arbitrary smoothness. Moreover, by combining this construction with spatially varying coefficients in the general model, one can obtain fields with arbitrary non-stationary variance, providing direct control over the marginal variance across the graph.

\begin{figure}[t]
\centering
\includegraphics[width=0.66\textwidth]{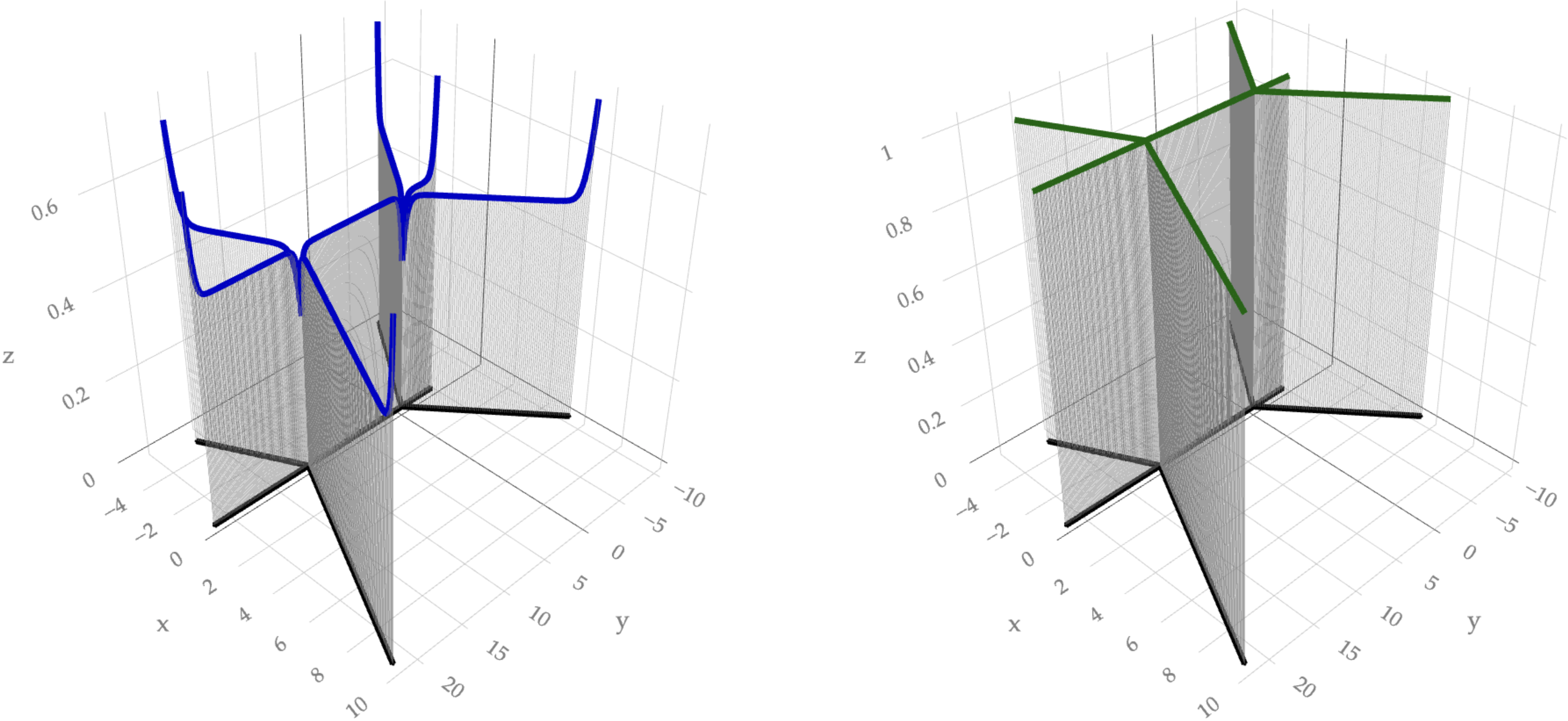}
\caption{Marginal standard deviation $\sigma_\kappa(\cdot)$ of a Whittle--Mat\'ern field with $\tau = 1$ (left), and of a Generalized Whittle--Mat\'ern field with $\tau(\cdot) = \sigma_\kappa(\cdot)$ (right).}
\label{var1}
\end{figure}

\subsection{Non-stationary models with log-regressions for $\kappa$ and $\tau$ }
\label{sec:regression}

\citet{Lindgren2011AnExplicit} proposed modeling $\kappa$ and $\tau$ in the SPDE approach for Euclidean domains using log-regressions. Naturally, we can use the same idea for the metric graph models. Specifically, we use covariates $g_1, \ldots, g_m \in L_2(\Gamma)$, with $m \in \mathbb{N}$, and define 

\begin{equation}
\label{eq:log_regression_tau}
    \log \tau(s) = \theta_{\tau,0} + \sum_{j=1}^m \theta_{\tau,j} g_j(s),\quad 
    \log \kappa(s) = \theta_{\kappa,0} + \sum_{j=1}^m \theta_{\kappa,j} g_j(s),
    \quad s \in \Gamma,
\end{equation}
where $\theta_{\tau,0}, \ldots, \theta_{\tau,m} \in \mathbb{R}$ and $\theta_{\kappa,0}, \ldots, \theta_{\kappa,m} \in \mathbb{R}$ are unknown parameters to be estimated. A natural question is then what assumptions we have to make on the covariates to ensure the regularity requirements of Proposition~\ref{prp:regularity-global}. Because the main requirement there is on $\tau$, the following proposition shows how $\tau$ inherits the regularity of the covariates. In the next proposition, the notation $f' \in C^{0,\gamma}(\Gamma)$ (for $f$ any function, such as $\tau$ or $g_j$) indicates the pathwise definition from Section~\ref{subsec:prelim}.

\begin{proposition}
\label{prp:covariates_regularity}
    Let $\tau$ be defined by \eqref{eq:log_regression_tau}, where $m \in \mathbb{N}$ and $g_1, \ldots, g_m \in C^{0,\gamma}(\Gamma)$ for some $0 < \gamma \leq 1$. Then, $\tau \in C^{0,\gamma}(\Gamma)$. Further, if $g_{1,e}, \ldots, g_{m,e} \in C^{1,\gamma}(e)$ for every $e \in \mathcal{E}$, and $g_1,\ldots, g_m\in\mathcal{K}(\Gamma)$, then $\tau_e \in C^{1,\gamma}(e)$ and $\tau\in\mathcal{K}(\Gamma)$. Finally, if $g_1,\dots,g_m\in C^{2,\gamma}(\Gamma)$ and $g_1',\dots,g'_m \in C^{0,\gamma}(\Gamma)$, then $\tau\in C^{2,\gamma}(\Gamma)$ and $\tau'
    \in C^{0,\gamma}(\Gamma)$.
\end{proposition}

A common scenario is that some covariate of interest, $Z(s)$, is only available at some fixed locations $s_1, \ldots, s_n \in \Gamma$. To obtain a continuously indexed covariate, we can use kriging prediction to interpolate the observed values $z_1,\ldots, z_n$. We can do this by either assuming that the observed values are direct observations of the covariate, $z_i = Z(s_i)$, or by assuming that the they are noisy observations

\begin{equation}
\label{eq:z_1_z_n_data}
z_i|Z(\cdot) \sim N(Z(s_i), \sigma_\epsilon^2), \quad i = 1, \ldots, n, \quad \sigma_\epsilon^2 > 0.
\end{equation}
In either case, to perform the interpolation, we assume that $Z(s) = \beta_0 + w(s)$, where $\beta_0 \in \mathbb{R}$ is an intercept, and $w(\cdot)$ is a centered Gaussian process obtained as the solution to

\begin{equation}
\label{eq:auxiliary_problem_2}
    (\kappa^2 - \Delta_\Gamma)^{\alpha/2} w = \sigma_0 \mathcal{W}, \quad \sigma_0 > 0.
\end{equation}
We then fit this model to the observed values and use  $z(s) = \beta_0 + \mathbb{E}(w(s)|z_1, \ldots, z_n),$ $s\in\Gamma$ as the covariate. The next proposition establishes the regularity of this covariate.

\begin{proposition}
\label{prp:kriging_predictor_regularity}
    Let $\alpha>\sfrac{1}{2}$, and let $w$ be the solution to \eqref{eq:auxiliary_problem_2}. Assume that $z_1, \ldots, z_n$ either are direct observations of $Z(s)$ or follow \eqref{eq:z_1_z_n_data}, where $\beta_0 > 0$, $n \in \mathbb{N}$, $\sigma_\epsilon > 0$, and $s_1, \ldots, s_n \in \Gamma$ are distinct locations. Finally, let $z(s) = \beta_0 + \mathbb{E}(w(s) \mid z_1, \ldots, z_n)$.
    \begin{enumerate}[(i)]
        \item If $\alpha > \sfrac{1}{2}$, for any $0 < \gamma \leq \tilde{\alpha}$, with $\tilde{\alpha} = \min\{\alpha - \sfrac{1}{2}, 1\}$, we have $z \in C^{0, \gamma}(\Gamma)$.
        
        \item If $\alpha > \sfrac{3}{2}$, for any $0 < \gamma \leq \alpha - \sfrac{1}{2}$, we have $z_{e} \in C^{1, \gamma}(e)$ for every $e \in \mathcal{E}$, and $z\in\mathcal{K}(\Gamma)$.
    \end{enumerate}
\end{proposition}

By combining Propositions~\ref{prp:covariates_regularity} and \ref{prp:kriging_predictor_regularity}, we conclude that a log-regression model for $\tau$ using a kriging predictor as a covariate ensures the regularity conditions required in Proposition~\ref{prp:regularity-global}.

\section{The numerical approximation method}
\label{num_approx_sec}

Having defined the models and illustrated their flexibility, it remains to obtain a method for using them in statistical inference. In this section, we outline the main idea of this method, show the main theoretical results, and finally illustrate how the approximation is used for computationally efficient inference.

\subsection{Main idea}

\cite{Bolin2023Statistical} showed that one can evaluate finite-dimensional distributions exactly and computationally efficiently of solutions to \eqref{eq:spde} whenever $\kappa,\tau>0$ are constants and $\alpha\in\mathbb{N}$. Their approach relies heavily on the fact that the corresponding process has Markov properties \citep{Bolin2026Markov} if $\alpha\in\mathbb{N}$. As we want a method that works for general $\alpha>\sfrac{1}{2}$, there is little hope in extending their method to the non-stationary fields that we are considering. \citet{Bolin2024Regularity} proposed approximating Generalized Whittle--Mat\'ern fields using a FEM approximation  combined with a quadrature approximation of the fractional power of the operator. That is essentially the metric graph version of the rational SPDE approach proposed in \cite{Bolin2020Rational} for Generalized Whittle--Mat\'ern fields on Euclidean domains. The disadvantage of this approach is that the corresponding approximation is not a Gaussian Markov random field (GMRF), which means that it cannot be implemented in software such as \texttt{R-INLA} \citep{Xiong2024Covariance}. This limits applicability, as the vast majority of applications of the SPDE approach have been done using \texttt{R-INLA}, because of its flexibility and computational efficiency. For Euclidean domains, \cite{Xiong2024Covariance} solved this issue by proposing a covariance-based rational SPDE approach, where a rational approximation of the covariance operator of the process was performed to obtain a GMRF approximation. The method we propose for the Gaussian processes from Section~\ref{sec:model} is the metric graph version of this approach. Specifically, we combine a FEM approximation with a rational approximation of the corresponding covariance operator.

\subsection{Finite element approximation}
\label{fem_approx}

To construct the finite element approximation, we introduce a space of piecewise linear and continuous functions, $V_h$, which is spanned by a set of hat functions $\{\psi_j\}_{j=1}^{N_h}$ on the graph. This set is obtained by introducing internal vertices on each edge, and for each vertex we define a basis function that is one at the vertex and decreases linearly to zero at the closest other vertices. We use $h$ to denote the largest distance between any two neighboring vertices, which is a parameter that will control the accuracy of the approximation. An example of the construction can be seen in Figure~\ref{app:basisfunctions} in Appendix \ref{app:fem_and_conv_cov_f}, where we illustrate that the hat functions can be divided into two types, one where the vertex is of degree 2, and one where the vertex is of higher degree. Further, note that for any function $\phi \in V_h$ and every edge $e \in \mathcal{E}$, the function $\phi_e$ is differentiable almost everywhere on $e$. The points where $\phi_e$ is not differentiable form a finite set, corresponding to the ``tips'' of the hat basis functions. More details on these functions are provided in Appendix~\ref{app:fem_and_conv_cov_f}.

We define the discrete version of $L$ as $L_h:V_h\longrightarrow V_h$ via $(L_h \phi,\psi)_{L_2(\Gamma)} = \mathfrak{h}(\phi,\psi)$ for $\phi,\psi\in V_h$. Here $\mathfrak{h}$ is the bilinear form corresponding to the operator $L$, given by 

\begin{equation*}
    \mathfrak{h}(f,g) = (\kappa^2 f,g)_{L_2(\Gamma)}+\sum_{e\in\mathcal{E}}\int_{e}f_e'(s)g_e'(s)ds,
\end{equation*}
where the derivatives are the almost everywhere derivatives of $f_e$ and $g_e$. The corresponding discrete problem for \eqref{eq:spde}, consists of finding $u_h \in V_h$ such that  

\begin{equation}
\label{discretemodel}
    L_h^{\sfrac{\alpha}{2}} (\tau u_h) = \mathcal{W}_h,
\end{equation}
where $\mathcal{W}_h$ represents Gaussian white noise on $V_h$. Specifically, $\mathcal{W}_h$ can be expressed as a collection of centered Gaussian random variables $\{\mathcal{W}_h(\psi_j)\}_{j=1}^{N_h}$, where $\psi_j$ is the $j$th hat basis function of $V_h$. These variables satisfy $\mathbb{E}[\mathcal{W}_h(\psi_i)\mathcal{W}_h(\psi_j)] = (\psi_i, \psi_j)_{L_2(\Gamma)}$,  for $i, j = 1, \dots, N_h$. The fractional power of $L_h$ is defined in the spectral sense (see Appendix~\ref{app:fem_and_conv_cov_f} for details), and the solution of \eqref{discretemodel} can be represented as

\begin{equation}
\label{eq:basisexp}
    u_h(s) =  \sum_{j=1}^{N_h} u_j\psi_j(s),
\end{equation}

To illustrate how this approximation works, let $\alpha = 2$ and $\tau \equiv 1$ in \eqref{discretemodel}. In this case, the stochastic coefficients $\mathbf{u} =  [u_1,\dots,u_{N_h}]^\top$ in \eqref{eq:basisexp} can be determined by solving the linear system $\mathfrak{h}(u_h, \psi_j) = \mathcal{W}_h(\psi_j)$, $j = 1,\dots, N_h$. In matrix form, this system can be written as $(\mathbf{G}+\mathbf{C}^\kappa)\mathbf{u} = \mathbf{W}$, where $\mathbf{G}$ is the stiffness matrix with entries $\mathbf{G}_{ij} = (\psi'_i,\psi'_j)_{L_2(\Gamma)}$,  $\mathbf{C}^\kappa$ has entries $\mathbf{C}^\kappa_{ij} = (\kappa^2\psi_i,\psi_j)_{L_2(\Gamma)}$, and $\mathbf{W}=[\mathcal{W}_h(\psi_1),\dots,\mathcal{W}_h(\psi_{N_h})]^\top\sim N(\mathbf{0},\mathbf{C})$, with $\mathbf{C}$ having entries $\mathbf{C}_{ij} = (\psi_i,\psi_j)_{L_2(\Gamma)}$. It follows that $\mathbf{u}\sim N(\mathbf{0}, \mathbf{Q}_2^{-1})$, where $\mathbf{Q}_2=\mathbf{L}^\top\mathbf{C}^{-1}\mathbf{L}$ with $\mathbf{L}= \mathbf{G}+\mathbf{C}^\kappa$. In the case that $\tau(\cdot)$ is a spatially varying function and a piecewise constant approximation of $\kappa(\cdot)$ is considered, the precision matrix of $\mathbf{u}$ becomes $\mathbf{Q}_2=\boldsymbol{\tau}\mathbf{L}^\top\mathbf{C}^{-1}\mathbf{L}\boldsymbol{\tau}$, where $\mathbf{L}= \mathbf{G}+\boldsymbol{\kappa}^2\mathbf{C}$ with $\boldsymbol{\tau} = \text{diag}(\tau(s_1), \dots, \tau(s_{N_h}))$ and $\boldsymbol{\kappa}^2 = \text{diag}(\kappa^2(s_1), \dots, \kappa^2(s_{N_h}))$. To obtain a sparse precision matrix, one can replace the mass matrix $\mathbf{C}$ by a lumped mass matrix $\Tilde{\mathbf{C}}$, which is a diagonal matrix with entries $\Tilde{\mathbf{C}}_{ii} = \sum_{j=1}^{N_h}\mathbf{C}_{ij}$.

\subsection{Handling the fractional power } 
\label{handling_fract_power}

To extend the approach to a general value of $\alpha > \sfrac{1}{2}$, we extend the covariance-based rational approximation of \citet{Xiong2024Covariance} to our metric graph setting. The solution $u_h$ to the discrete problem has covariance function $\varrho_h^\alpha(\cdot, \cdot)$, which is the kernel of the covariance operator $M_{\tau^{-1}}L_h^{-\alpha}M_{\tau^{-1}}$, where $M_{\tau^{-1}}$ denotes the operator that multiplies a function with $\tau^{-1}$. We approximate this by performing a rational approximation of $L_h^{-\alpha}$. Specifically, for the real-valued function $\hat{f}(x) = x^{\{\alpha\}}$, where $\{x\} = x - \lfloor x \rfloor \in [0,1)$ denotes the fractional part of $x$, consider the rational approximation of order $m$, $\hat{f}(x) \approx \hat{r}_m(x) = p(x)/q(x)$, defined on the interval $[0, \kappa_0^{-2}]$, which contains the spectrum of $L_h^{-1}$. Here, $\kappa_0$ is as specified in Assumption~\ref{assumption1}, $p(x) = \sum_{i=0}^m a_i x^i$, and $q(x) = \sum_{i=0}^m b_i x^i$, with the coefficients $\{a_i\}$ and $\{b_i\}$ determined as the best rational approximation in the $L_\infty$-norm. These coefficients can be efficiently computed using the BRASIL algorithm \citep{Hofreither2021AnAlgorithm}, see Appendix~\ref{app:fem_and_conv_cov_f} for details. Defining $r_m(x):=x^{\lfloor \alpha \rfloor}\hat{r}_m(x)$, we then approximate 

\begin{align*}
\label{covope2}
L_h^{-\alpha} &=  L_h^{-{\lfloor \alpha \rfloor}}L_h^{-\{\alpha\}}\approx L_h^{-{\lfloor \alpha \rfloor}} p(L_h^{-1})q(L_h^{-1})^{-1}  
= L_h^{-{\lfloor \alpha \rfloor}}\Bigl(\sum_{i=1}^mr_i(L_h-p_iI_{h})^{-1} + kI_{h} \Bigr),
\end{align*}
where the coefficients $r_i,k>0$ and $p_i<0$ are obtained by performing a partial fraction decomposition of $\hat{r}$ and $I_{h}$ is the identity operator on $V_h$, see \citet{Bolin2025Linear} for further details. By \citet[Appendix C]{Xiong2024Covariance}, the covariance matrix of the weights $\mathbf{u} =  [u_1,\dots,u_{N_h}]^\top$ in \eqref{eq:basisexp} can be written as 

\begin{equation}
\label{sigmau}
    \mathbf{\Sigma}_{\mathbf{u}} = \boldsymbol{\tau}^{-1}(\mathbf{L}^{-1}\mathbf{C})^{\lfloor \alpha \rfloor}\sum_{i=1}^mr_i(\mathbf{L}-p_i\mathbf{C})^{-1}\boldsymbol{\tau}^{-1} + \boldsymbol{\tau}^{-1}\mathbf{K}_{\lfloor \alpha \rfloor}\boldsymbol{\tau}^{-1},
\end{equation}
where $\mathbf{C}$ and $\mathbf{L}$ are as before, and $\mathbf{K}_n = k(\mathbf{L}^{-1}\mathbf{C})^{n-1}\mathbf{L}^{-1}$ for $n\in\mathbb{N}$ with $\mathbf{K}_0 = k\mathbf{C}$. Because $r_i,k>0$ and $p_i<0$, each term in \eqref{sigmau} is a valid covariance matrix. We can therefore express $\mathbf{u}$ as a sum of independent random vectors $\mathbf{x}_i\sim N(\mathbf{0}, \mathbf{Q}_i^{-1})$, $\mathbf{u} = \sum_{i=1}^{m+1} \mathbf{x}_i$, where 
\begin{equation*}
    \mathbf{Q}_i = \begin{cases}
        r_i^{-1}\boldsymbol{\tau}(\mathbf{L}-p_i\mathbf{C})(\mathbf{C}^{-1}\mathbf{L})^{\lfloor \alpha \rfloor}\boldsymbol{\tau}&\text{ if } i= 1,\dots,m,\\
        \boldsymbol{\tau}\mathbf{K}^{-1}_{\lfloor \alpha \rfloor}\boldsymbol{\tau}&\text{ if } i = m+1.
    \end{cases}
\end{equation*}
To ensure that each $\mathbf{Q}_i$ is sparse, we replace matrix $\mathbf{C}$ by matrix $\Tilde{\mathbf{C}}$, as before.

\subsection{Convergence rates}

A natural question is now how good the rational approximation is. To answer this, note that $M_{\tau^{-1}}r_m(L_h^{-1})M_{\tau^{-1}}$ is the covariance operator of the rational approximation, which is associated to a covariance function
$\varrho_{h,m}^\alpha(\cdot,\cdot)$ defined on $\Gamma\times\Gamma$ through the following relation:
$$
(M_{\tau^{-1}}r_m(L_h^{-1})M_{\tau^{-1}} f)(s) = \int_\Gamma \varrho_{h,m}^\alpha(s,s') f(s') ds'.
$$
The following result shows that this covariance function approximates the true covariance at a fixed rate depending on the mesh width $h$ and on the order of the rational approximation $m$. Here and for future reference, convergence in $L_2(\Gamma\times \Gamma)$ means convergence in the norm 
$$
\|f\|^2_{L_2(\Gamma\times \Gamma)}=\displaystyle\int_\Gamma\int_\Gamma f^2(s,r)dsdr = \sum_{e\in\mathcal{E}}\int_{e}\int_e f_e^2(s,r)dsdr.
$$
\begin{proposition}
\label{conv_cov_func}
Under Assumption \ref{assumption1}, if $\alpha>\sfrac{1}{2}$ then for any $\varepsilon > 0$
\begin{equation}
\label{mainrate}
\|\varrho^\alpha - \varrho_{h,m}^\alpha\|_{L_2(\Gamma\times \Gamma)}\leq C\left( h^{\min\{2\alpha-\sfrac{1}{2},2\}-\varepsilon} + 1_{\alpha\not\in\mathbb{N}}\cdot h^{-\sfrac{1}{2}}e^{-2\pi\sqrt{\{\alpha\}m}}\right),
\end{equation}
where $C>0$ is a constant independent of $h$ and $m$.
\end{proposition}

 We have two terms in the bound for the error, the first is due to the FEM approximation, and the second is due to the rational approximation. If we choose 
\begin{equation}\label{eq:calibration}
    m = c \lceil(\min\{2\alpha-\sfrac{1}{2},2\}+\sfrac{1}{2})^2\log^2(h)/(4\pi^2\{\alpha\})\rceil,
\end{equation}
we balance these two terms and the total convergence rate as a function of the mesh size is $\min\{2\alpha-\sfrac{1}{2},2\}$ if $\alpha>\sfrac{1}{2}$. As the error induced by the rational approximation decreases exponentially fast in $m$, we rarely have to use a large value of $m$, which is beneficial in practice as the computational cost increases as we increase $m$.

\subsection{Using the approximation for inference}
\label{using_the_approx_for_inference}
Suppose we have observations $\mathbf{y} =  [y_1,\dots,y_{n}]^\top$, obtained as    $y_i|u(\cdot)\sim N(u(s_i),\sigma_\epsilon^2)$, 
for $s_i\in\Gamma$, $i=1,\dots,n$, where $u(\cdot)$ is a solution to \eqref{eq:spde} on $\Gamma$. As previously discussed, the solution $u(\cdot)$ of \eqref{eq:spde} can be approximated as $u_h(s) =  \sum_{j=1}^{N_h} u_j\psi_j(s)$, where $\psi_j$ are FEM basis functions and the vector of stochastic weights $\mathbf{u} =  [u_1,\dots,u_{N_h}]^\top$ can be expressed as a sum $\mathbf{u} = \sum_{i=1}^{m+1} \mathbf{x}_i$ of independent GMRFs $\mathbf{x}_i$, each with sparse precision matrix $\mathbf{Q}_i$. With this formulation, the aforementioned model can be written hierarchically as $\mathbf{X}\sim N(\mathbf{0}, \mathbf{Q}^{-1})$ and $\mathbf{y}|\mathbf{X}\sim N(\overline{\mathbf{A}}\mathbf{X}, \mathbf{Q}_\epsilon^{-1})$, where $\mathbf{Q}_\epsilon =  \sigma_\epsilon^{-2}\mathbf{I}$, $\mathbf{Q}=\operatorname{diag}\left(\mathbf{Q}_1, \ldots, \mathbf{Q}_{m+1}\right)$, $\mathbf{X} = [\mathbf{x}_1^\top,\dots,\mathbf{x}_{m+1}^\top]^\top$, and $\overline{\mathbf{A}} = [\mathbf{A}\;\cdots\;\mathbf{A}]$ with $\mathbf{A}_{ij} = \psi_j(s_i)$. Because of the sparsity of $\mathbf{Q}$, statistical inference and spatial prediction can be performed computationally efficient. This computational advantage is well documented for models defined on Euclidean domains (see, e.g., \cite{Xiong2024Covariance}), and the same principles apply in the present metric graph setting. In particular, the posterior distribution of  $\mathbf{X}$ satisfies $\mathbf{X}|\mathbf{y}\sim N(\boldsymbol{\mu}_{\mathbf{X} \mid \mathbf{y}}, \mathbf{Q}_{\mathbf{X} \mid \mathbf{y}}^{-1})$, which allows the kriging predictor to be computed efficiently as $\boldsymbol{\mu}_{\mathbf{X} \mid \mathbf{y}}=\mathbf{Q}_{\mathbf{X} \mid \mathbf{y}}^{-1} \bar{\mathbf{A}}^{\top} \mathbf{Q}_\epsilon\mathbf{y}$, where $\mathbf{Q}_{\mathbf{X} \mid \mathbf{y}}=\overline{\mathbf{A}}^{\top} \mathbf{Q}_\epsilon\overline{\mathbf{A}}+\mathbf{Q}$. Additionally, the marginal log-likelihood $\ell(\mathbf{y})$ of $\mathbf{y}$ is given by
\begin{align*}
    2\ell(\mathbf{y}) = &\log |\mathbf{Q}| + \log |\mathbf{Q}_\epsilon|-\log |\mathbf{Q}_{\mathbf{X} \mid \mathbf{y}}| - \boldsymbol{\mu}_{\mathbf{X} \mid \mathbf{y}}^\top \mathbf{Q}\boldsymbol{\mu}_{\mathbf{X} \mid \mathbf{y}}\\& - (\mathbf{y} - \overline{\mathbf{A}}\boldsymbol{\mu}_{\mathbf{X} \mid \mathbf{y}})\top \mathbf{Q}_\epsilon (\mathbf{y} - \overline{\mathbf{A}}\boldsymbol{\mu}_{\mathbf{X} \mid \mathbf{y}})-n\log(2\pi),
\end{align*}
which can be evaluated efficiently due to the sparsity of $\mathbf{Q}$ and $\mathbf{Q}_{\mathbf{X}\mid\mathbf{y}}$. One can fit this model and more complex models by combining the \texttt{MetricGraph} package with either the \texttt{R-INLA} or 
\texttt{inlabru} \citep{inlabruRpackage} interfaces of the \texttt{rSPDE} package \citep{rSPDEpackage}, as shown in the vignettes by \citet{RINLAvignette, inlabruvignette}.

\begin{figure}[t]
    \centering
    \includegraphics[width=0.99\linewidth]{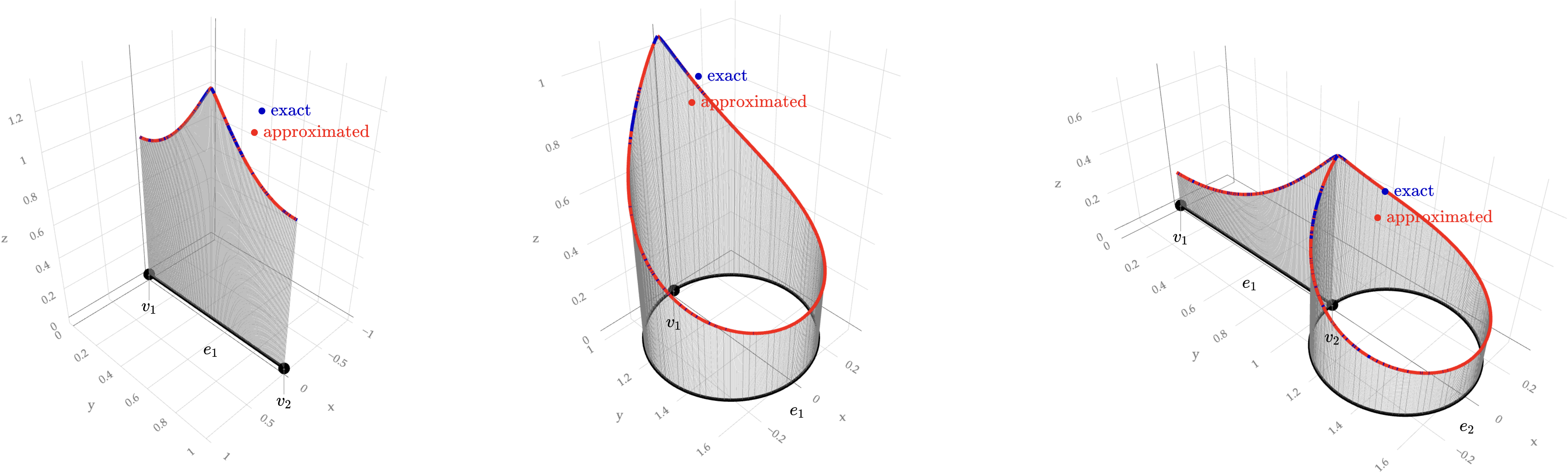}
\caption{For each graph, exact (blue) and approximate (red) covariances $\varrho^{\alpha}(s_0,s)$ are shown for a fixed point $s_0$. In all cases, $h=0.003$, $\alpha = 1.1$, $m = 4$, and $\rho=1$.}
\label{Interval.Circle.Tadpole}
\end{figure}

\section{Numerical Experiments}\label{numerical_experiments}
To demonstrate the performance of the proposed approximation, we consider the interval, circle, and tadpole graphs shown in Figure \ref{Interval.Circle.Tadpole}. These graphs are chosen because the covariance function of the solution $u$ is known in each case. See Appendix~\ref{app:cov_funct_def} for details. We consider a FEM mesh with step-size of $h=\sfrac{1}{1000}$ and for the orders of the rational approximation, we consider $m = 1,2,3,4,5$. For the smoothness parameter $\alpha$, we choose values ranging from 0.6 to 2.0.  We further choose $\tau$ such that the marginal variance is close to 1 and set $\kappa$ such that the practical correlation range $\rho$ is fixed to $0.1, 0.5,1$, or $2$. The approximation errors for the tadpole graph and all parameter combinations are shown in Figure \ref{tadpole_errors}. Corresponding illustrations for the interval and circle graphs are shown in Appendix~\ref{app:approx_error}.

\begin{figure}[t]
\centering
\includegraphics[width=0.99\textwidth]{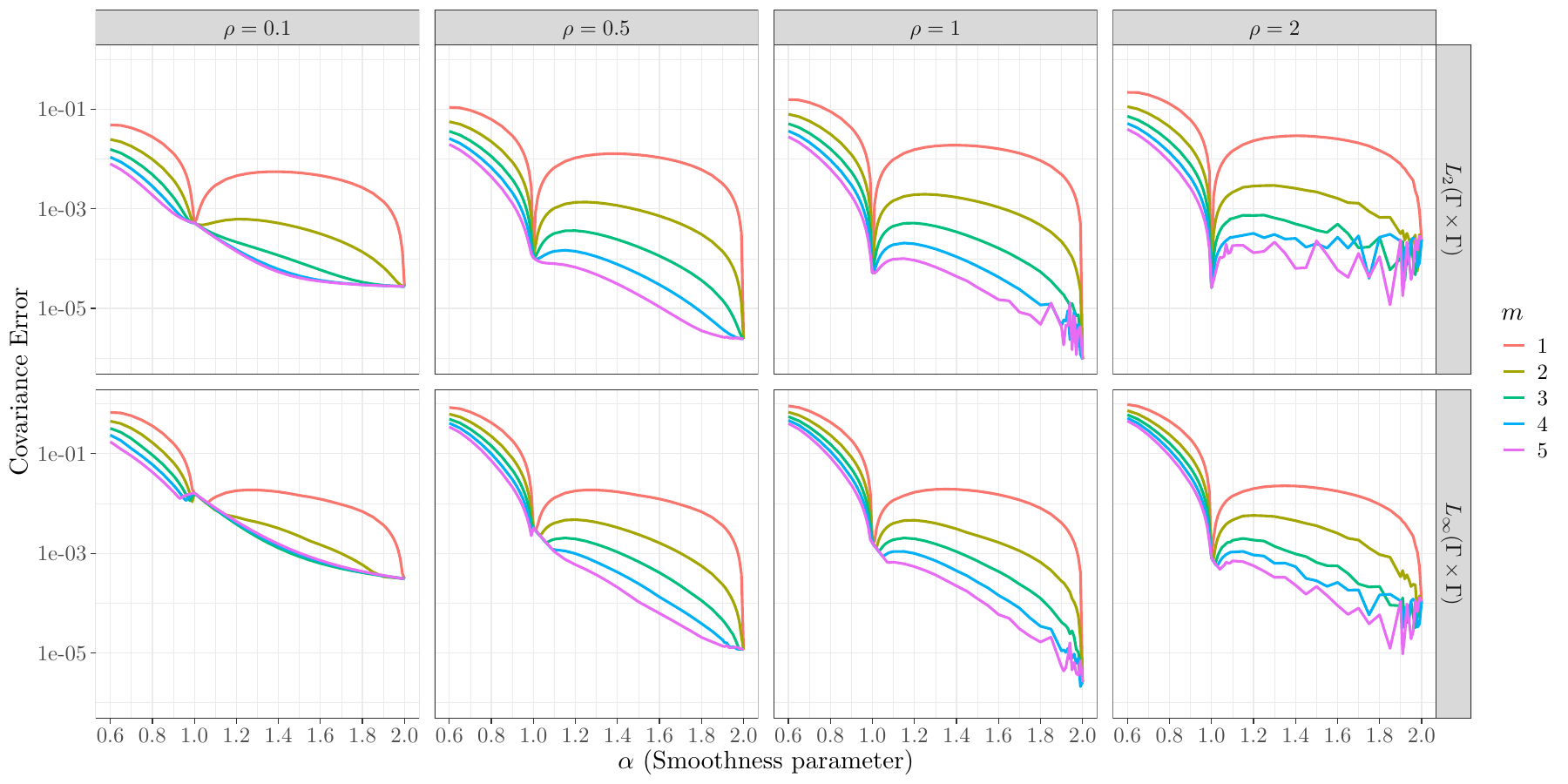}
\caption{Covariance errors in the $L_2(\Gamma\times\Gamma)$-norm (top) and supremum  norm (bottom) on the tadpole graph for different $\rho$, $\alpha$, and rational approximation orders $m$.}
\label{tadpole_errors}
\end{figure}

We observe that the error decreases rapidly in $m$ and that the errors for different $m$ collapse to a common point when $\alpha = 1$ and $\alpha=2$, since there is then no rational approximation and the only source of error is the FEM approximation. 
As $\rho$ increases, some numerical instabilities appear for large values of $m$, especially when $\alpha>2$. Similar trends are observed for the interval and circle cases.

To verify that we indeed obtain the convergence rate $\min\{2\alpha-\sfrac{1}{2},2\}$ as a function of $h$ if we calibrate the order of the rational approximation according to \eqref{eq:calibration}, we consider $\rho = 0.5$, $\sigma = 1$, $\alpha = \sfrac{n}{8}$ for $n =  6, 7, 8, 9, 12$, compute the error for $h = 2^{-\ell}$ for $ \ell = 4.5, 4.75, 5, 5.25, 5.5$. The empirical convergence rate is obtained by determining the slope $r$ in the regression $\log \text{error} = c + r \log h$, where $\text{error} \approx \|\varrho - \varrho_{h,m}^\alpha\|_{L_2(\Gamma\times \Gamma)}$ is approximated as 
    $\text{error}^2 = \mathbf{w}_{\text{ok}}^\top(\mathbf{\Sigma}_{\text{ok}}-\mathbf{A}\mathbf{\Sigma}_{h}\mathbf{A}^\top)^2\mathbf{w}_{\text{ok}}$.
Here, $\mathbf{\Sigma}_{h}$ is the covariance matrix of the approximation and  $\mathbf{\Sigma}_{\text{ok}}$ is the exact covariance matrix on a mesh with $h = h_{\text{ok}}$. Further, $\mathbf{A}$ is a matrix with entries $\psi_j^{h}(s_i)$, $\psi_j^{h}$ is the $j$th FEM basis function associated to the coarse mesh, $s_i$ is the $i$th node location in the fine mesh, and $\mathbf{w}_{\text{ok}}$ is a vector of weights $w_i = (\psi^{h_{\text{ok}}}_i,1)_{L_2(\Gamma)}$. The results, in Figure~\ref{Interval.Circle.Tadpole_rates} and Table \ref{Rate_of_convergence}, confirm that the empirical rates are close to the theoretical ones.

\begin{figure}[t]
\centering
\includegraphics[width=0.99\textwidth]{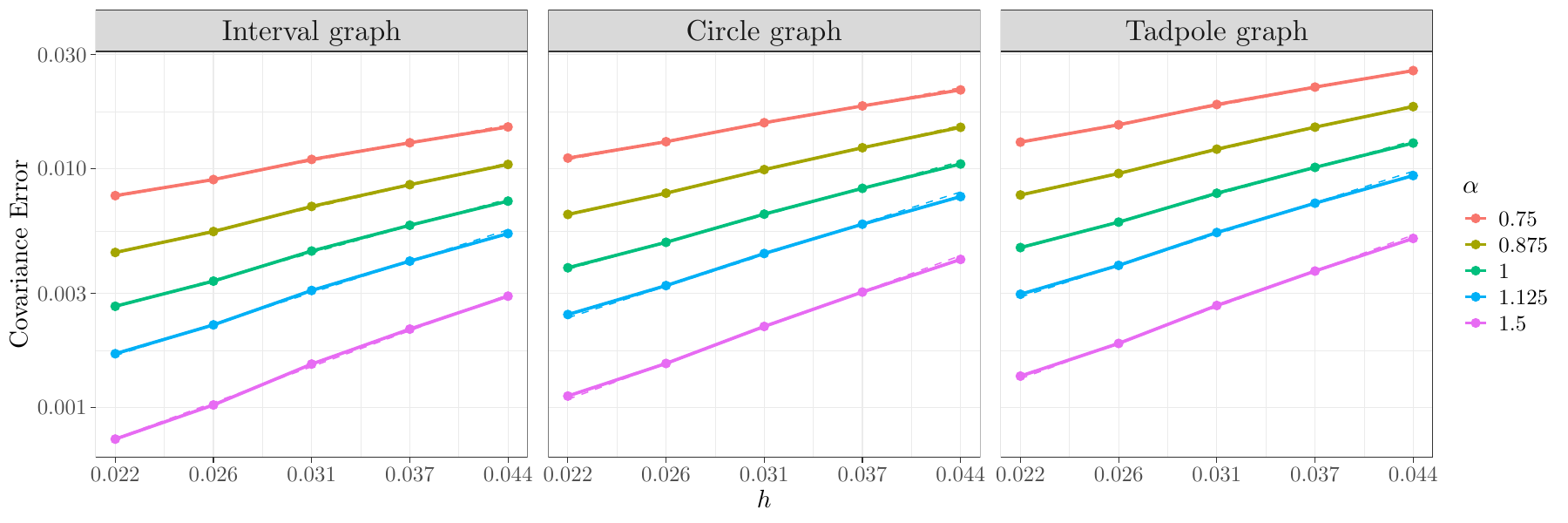}
\caption{Observed covariance error for different values of $\alpha$ as functions of the mesh size $h$.
The dashed lines show the theoretical rates for each case.}
\label{Interval.Circle.Tadpole_rates}
\end{figure}

\begin{table}[t]
    \caption{Observed rates of convergence for the covariance errors in Figure \ref{Interval.Circle.Tadpole_rates}.}
    \label{Rate_of_convergence}
    \centering
    \begin{tabular}{lccccc}
    \toprule
        $\alpha$ & $0.75$ & $0.875$ & $1$ & $1.125$ & $1.5$  \\ \hline   
        Theoretical rates &   $1$   &   $1.25$    & $1.5$ & $1.75$  & $2$ \\ 
        Interval graph &   $0.97$   &   $1.24$    & $1.48$ & $1.69$  & $2.01$ \\ 
        Circle graph &   $0.96$   &   $1.22$    & $1.46$ & $1.65$  & $1.92$ \\ 
        Tadpole graph &   $1.01$   &   $1.24$    & $1.47$ & $1.67$  & $1.93$ \\ 
        \bottomrule
    \end{tabular}
\end{table}

\begin{figure}[t]
    \centering
    \includegraphics[width=0.99\textwidth]{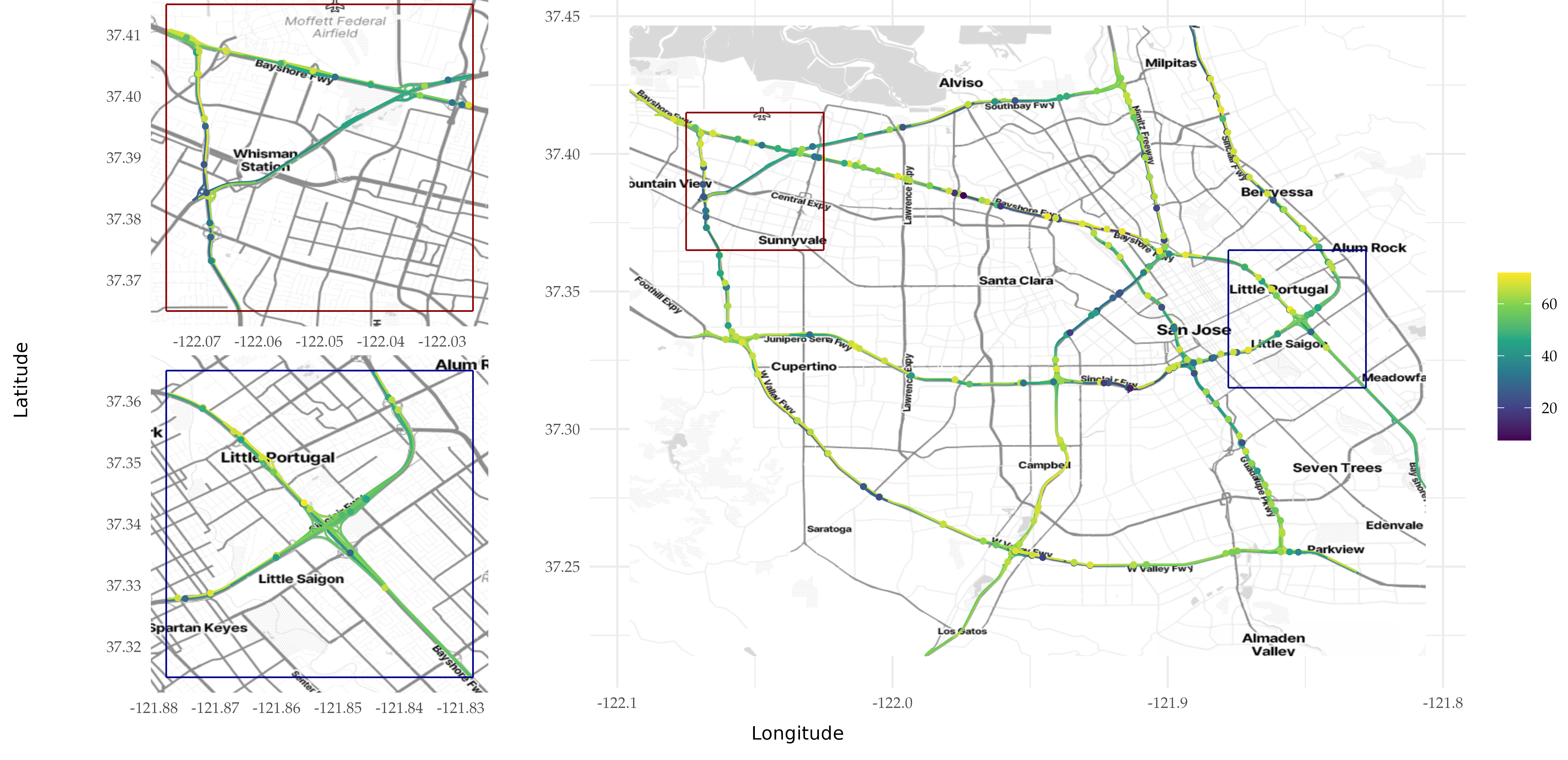}
    \caption{Speed observations (in mph) on the highway network of the city of San Jose, recorded on April 3, 2017. Points represent the observations, and the colored edges represent the model prediction. The left panels show two zoomed-in areas.}
    \label{replicate14}
\end{figure}

\section{Application}\label{application}

\subsection{The data}
To illustrate the capabilities of the proposed method, we consider traffic speed data from the highway system in San Jose, California, obtained from the California Performance Measurement System database \citep{Chen2001Freeway}. The data set consists of 26 replicates, recorded by automatic sensors weekly on Mondays at 17:30, from January 2, 2017, to June 26, 2017. Each replicate contains $n = 325$ speed records (in mph), observed at different locations which are fixed across the replicates. Figure \ref{replicate14} shows speed records for the 14th replicate. The highway network was obtained from OpenStreetMap \citep{Contributors2017Planet} and, when converted to a metric graph object, consists of 691 vertices and 848 edges.

\cite{Borovitskiy2021Matern} averaged this data across replicates and used graph Laplacian-based model restricted to the vertices. \cite{Bolin2023Statistical} used a Whittle--Mat\'ern field to model the same data. Here, we do not average the data across replicates.
The first 13 replicates are used to construct covariates, while the remaining 13 are used for modeling. Eleven locations were removed beforehand, as their recorded values were constant across the first 13 replicates. Figures \ref{app:mean_better_3} and \ref{stand_log_sigma} in Appendix \ref{app:covariates_for_applic} display the mean and standardized logarithm of the standard deviation for these replicates, showing that the data is not stationary.

\subsection{Constructing covariates with sufficient regularity}
\label{ssec:const_covariates}
Identifying meaningful covariates for this data is challenging. One natural candidate is the speed limit, but this is constant for these roads.
To derive a covariate for expected speed, we use an approach similar to that in Section~\ref{sec:regression}. Specifically, we calculate the average speed $\mathrm{m}_i$ at each observation location $s_i$, $i = 1, \ldots, 314$, using the first 13 replicates. We then fit the model $\mathrm{m}_i \mid u(\cdot) \sim N(\beta_0 + u(s_i), \sigma_\epsilon^2)$, where $u$ is a Whittle--Mat\'ern field with $\alpha$ estimated from the data. The covariate $\text{mean.cov}(s)$ is defined as $\beta_0+\mathbb{E}(u(s) \mid \mathrm{m}_1, \ldots, \mathrm{m}_{314})$ (see Figure \ref{app:mean_better_3} in Appendix \ref{app:covariates_for_applic}).

Similarly, we construct a covariate for $\tau$ and $\kappa$ in the non-stationary Whittle--Mat\'ern field by fitting the model 
$\log\mathrm{std}_i\sim N(\beta_0 + u(s_i), \sigma_\epsilon^2)$ to the standard deviations $\mathrm{std}_i$ of the data at the observation locations in the first 13 replicates. The covariate 
$\text{std.cov}(s)$ is then defined as the kriging predictor $\beta_0+\mathbb{E}(u(s)| \mathrm{log\;std}_1,\ldots, \mathrm{log\;std}_{314})$ standardized by subtracting its mean and dividing by its standard deviation (see Figure \ref{stand_log_sigma} in Appendix \ref{app:covariates_for_applic}).
As showed earlier, the regularity of $\tau$ is crucial for the regularity of the Gaussian field, and the advantage of this method for constructing the covariate is that if ${\log\tau(s) = \theta_1 + \theta_3 \cdot \text{std.cov}(s)}$, then, by Propositions \ref{prp:covariates_regularity} and \ref{prp:kriging_predictor_regularity}, $\tau$ satisfies the regularity conditions of Proposition~\ref{prp:regularity-global}.

\subsection{The models for the data}
\label{models_for_data}
We now model the second part of the data by assuming that the speed records $y_i$ are observations of 13 independent replicates satisfying
\begin{equation}
\label{applimodel}
    y_i|u(\cdot)\sim N(\beta_0 + \beta_1\text{mean.cov}(s_i) + u(s_i),\sigma_\epsilon^2),\;i = 1,\dots, 314,
\end{equation} 
where $u(\cdot)$ is a Gaussian process on the highway network. We consider Whittle--Mat\'ern fields with stationary parameters $\kappa,\tau>0$ and Generalized Whittle--Mat\'ern fields where $\tau$ and $\kappa$ are non-stationary following log-regressions
\begin{equation}
\label{logregressions}
    \log\tau(s) = \theta_1 + \theta_3\cdot \text{std.cov}(s)\quad\hbox{and}\quad \log\kappa(s) = \theta_2 + \theta_4 \cdot\text{std.cov}(s).
\end{equation}
For each of the two classes of models, we consider three scenarios for the smoothness parameter: when (1) $\alpha$ is fixed to 1 or (2) 2, and (3) $\alpha$ is estimated from the data.

The six models are fitted to the data using \texttt{MetricGraph} and \texttt{inlabru} (the code is available at \url{https://github.com/leninrafaelrierasegura/GWMFMG}). Posterior means of the parameter estimates can be seen in Tables \ref{table_stat} and \ref{table_nonstat}. All estimates in both tables are significantly different from zero, except for the one marked with~$^*$. From Table \ref{table_stat}, we observe that the estimates for $\beta_0$, $\beta_1$, and $\sigma_e$ are relatively stable regardless of the smoothness of the field, suggesting that the effect of the covariate and the residual variability of the model remain robust across different levels of spatial smoothness. The estimates for $\tau$ and $\kappa$ are quite similar for the cases where $\alpha=1$ and $\alpha$ is estimated. However, they differ significantly for the case $\alpha=2$, as the model needs to adjust to a smoother field,  requiring a larger $\kappa$ to reduce the spatial dependence range, allowing the correlation to decay more quickly over shorter distances. For the non-stationary models (Table \ref{table_nonstat}), we also observe some consistency in the estimates of $\beta_0$, $\beta_1$, and $\sigma_e$. However, for the parameters $\theta_i$, $i=1,2,3,4$ in \eqref{logregressions}, the estimates vary considerably between the three cases, suggesting that the parameters in \eqref{logregressions} are sensitive to changes in $\alpha$. 

\begin{table}[t]
    \caption{Posterior means of the parameters of stationary models.}
    \label{table_stat}
    \centering
\begin{tabular}{lccccc}
\toprule
$\alpha$              & $\beta_0$ & $\beta_1$ & $\tau$ & $\kappa$ & $\sigma_e$ \\ \hline 
$1$                    & $-14.463$ & $1.276$ & $0.224$ & $0.052$ & $8.573$ \\ 
$2$                    & $-16.096$ & $1.308$ & $0.333$ & $0.255$ & $8.782$ \\ 
$\texttt{est} = 1.003$ & $-14.452$ & $1.276$ & $0.224$ & $0.051$ & $8.559$ \\ 
\bottomrule
\end{tabular}
\end{table}

\begin{table}[t]
    \caption{Posterior means of the parameters of non-stationary models.}
    \label{table_nonstat}
    \centering
\begin{tabular}{lccccccc}
\toprule
$\alpha$ & $\beta_0$ & $\beta_1$ & $\theta_1$ & $\theta_2$ & $\theta_3$ & $\theta_4$ & $\sigma_e$ \\ \hline 
$1$                    & $-4.528$ & $1.126$ & $-1.184$ & $-6.287$ & $-0.614$ & $3.001$ & $8.246$ \\ 
$2$                    & $-4.456$ & $1.124$ & $-0.337$ & $-2.015$ & $-0.921$ & $0.689$ & $8.541$ \\ 
$\texttt{est} = 1.940$ & $-4.176$ & $1.122$ & $\;\;\;\;0.136^*$ & $-2.423$ & $-0.734$ & $0.664$ & $8.263$ \\ 
\bottomrule
\end{tabular}
\end{table}

\subsection{Comparison using cross-validation}
\label{sec:comparison_crossval}
We compare the predictive performance of the six models using leave-group-out pseudo cross-validation \citep{Liu2025Leave}, and specifically the strategy from \cite{Xiong2024Covariance}, where predictions at each location are made by excluding the most informative observations—those within a specified radius $R$ based on geodesic distance. By varying $R$, the model’s prediction accuracy is obtained as a function of $R$. We use the Mean Squared Errors (MSE) and the Negative Log-Score (NLS) \citep{Good1952Rational} as evaluation metrics, where a lower score is better for each. 

\begin{figure}[t]
    \centering
    \includegraphics[width=0.99\textwidth]{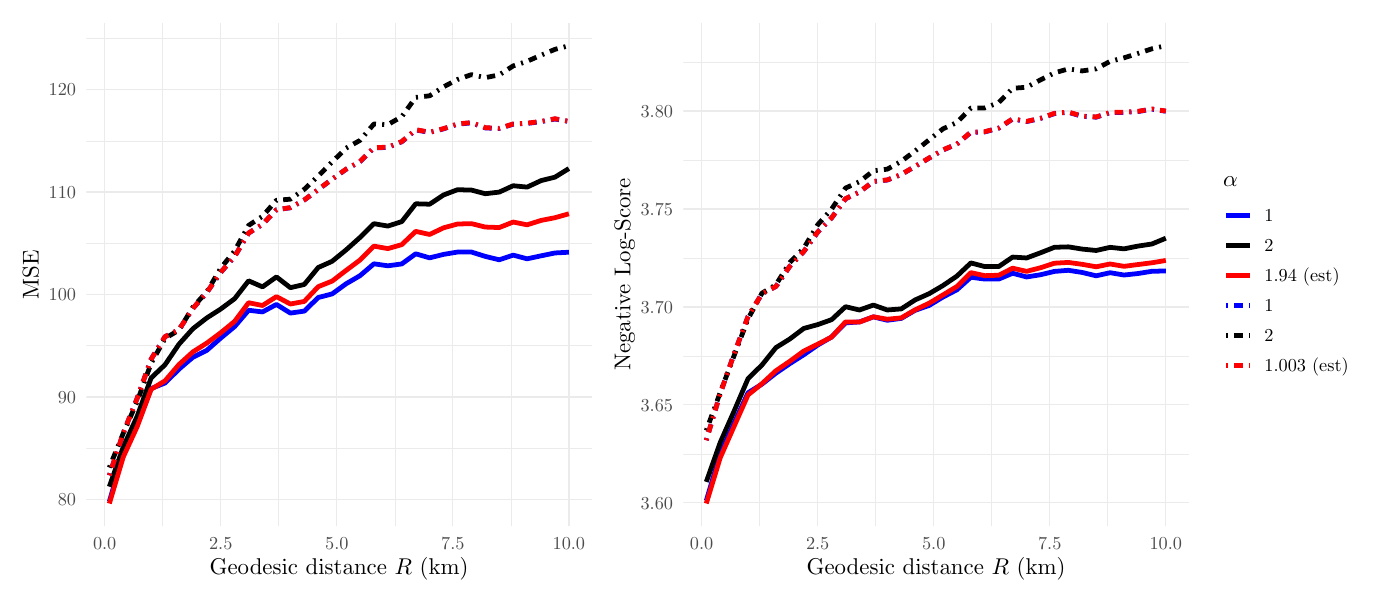}
    \caption{MSE and negative Log-Score as functions of distance for the stationary ($\boldsymbol{\cdot-}$) and non-stationary ($\boldsymbol{-\!\!\!-}$)  cases with $\alpha = 1$, $\alpha = 2$, and $\alpha$ estimated (est).}
    \label{cross-val}
    \end{figure}
    
Figure \ref{cross-val} shows that the non-stationary model outperforms its stationary counterpart, in both metrics and regardless of the distance and the value of $\alpha$ (fixed or estimated). This highlights the benefit of allowing models with non-stationary features, providing an advantage over alternative proposals. Regarding the smoothness parameter $\alpha$, we observe that small values of $\alpha$ produce better predictions, according to both metrics and regardless of the model class (stationary or not). For the stationary case, the estimated mean value of $\alpha$ is $1.003$, which is close to $1$, thus explaining the proximity of the blue and red dot-dashed curves in both panels of Figure \ref{cross-val}. Because there is no improvement in the model's performance when $\alpha$ is estimated, we might as well use the model with $\alpha = 1$ in this case. However, reaching this conclusion is only possible because of our ability to estimate the $\alpha$, which is evidence that allowing models with arbitrary smoothness is advantageous. Figure \ref{replicate14} shows the kriging predictor for the 14th replicate using the non-stationary model with $\alpha = 1$, which is the best model in terms of predictive performance.

To further isolate the effect of smoothness misspecification, we conduct a dedicated simulation study in Appendix \ref{app:effects_miss}, where data are generated from a model with known smoothness. The results show that fixing $\alpha$ at an incorrect value induces substantial bias in the SPDE parameters $\tau$ and $\kappa$ and leads to inferior predictive performance under both MSE and NLS. In contrast, jointly estimating $\alpha$ substantially improves recovery of the true dependence structure and yields consistently better predictions. These findings reinforce the practical importance of allowing flexible smoothness in spatial models.

\section{Conclusion}
\label{sec:conc}

We have proposed a flexible class of Gaussian random fields for metric graphs, which allow for non-stationary covariance functions and arbitrary smoothness. 
By building upon a chain of results inspired by the SPDE approach in Euclidean domains, and by combining the FEM-based numerical approximation of \citet{Bolin2024Regularity} with the rational approximation of \citet{Xiong2024Covariance}, we showed that the proposed class of models can be implemented in a computationally efficient way for general metric graphs, facilitating applications to large graphs and big data sets. The approach has been theoretically justified and numerical experiments verify the theoretical results. Further, the application presented demonstrated the advantages of the proposed approach for statistical modeling. 

Future work could explore the extension of the proposed framework to spatio-temporal models. This could allow, for example, to understand how traffic flow and congestion in road networks evolve over time due to changing traffic conditions or to study the diffusion and temporal dynamics of pollutants in river networks for environmental monitoring. Additionally, the framework could be extended to log-Gaussian Cox processes, enabling the modeling of point patterns constrained to graph structures. 
Future work from a theoretical perspective involves investigating identifiability of the model parameters, which could be studied following existing approaches for SPDEs, such as those in \citet{Bolin2023Equivalence}.

% Switch to alphabetic section numbering
\setcounter{section}{0} % Reset section counter
\renewcommand{\thesection}{\Alph{section}} % Number sections as Appendix A, B, etc.
\renewcommand{\theequation}{\thesection.\arabic{equation}} % Equations as (A.1), (B.2), etc.
\counterwithin{equation}{section} % Tie equation numbering to sections

\renewcommand{\thefigure}{\thesection.\arabic{figure}} % Figures as Figure A.1, B.2, etc.
\counterwithin{figure}{section}   % Tie figure numbering to sections

\renewcommand{\thetable}{\thesection.\arabic{table}} % Figures as Figure A.1, B.2, etc.
\counterwithin{table}{section}   % Tie figure numbering to sections

\newpage

\section{Covariance functions}
\phantomsection % Ensure the hyperlink points here
\label{app:cov_funct_def}

To complement the numerical experiments in Section~\ref{numerical_experiments}, this appendix presents the exact covariance functions for Whittle--Mat\'ern fields defined on the interval, circle, and tadpole graphs. For the interval and circle graphs, their covariance functions can be written in terms of the folded Mat\'ern covariance function \citep{Khristenko2019Analysis}. To see this, let
\begin{equation*}
    C(h) = \dfrac{\sigma^2}{2^{\nu-1}\Gamma(\nu)}(\kappa h)^\nu K_\nu(\kappa h),\quad h\geq 0,
\end{equation*}
be the Mat\'ern covariance function, where $\Gamma(\cdot)$ is the Gamma function, $K_\nu(\cdot)$ is the modified Bessel function of the second kind, $\sigma^2 = \Gamma(\nu)/(\tau^2 \kappa^{2\nu}(4\pi)^{\sfrac{1}{2}} \Gamma(\nu+\sfrac{1}{2}))$ is the marginal variance, $\kappa>0$ is a parameter that controls the practical correlation range $\rho = \sqrt{8\nu}/\kappa$, and $0<\nu = \alpha-\sfrac{1}{2}$ is the smoothness parameter.
Then the covariance function on the interval $[0,L]$ is 
\begin{equation*}
    C_{\mathcal{N}}(s_1,s_2) = \sum_{k=-\infty}^\infty C(s_1-s_2+2kL)+C(s_1+s_2+2kL)
\end{equation*}
and the covariance function over a circle with arc-length parameterization on $[0,L]$ is 
\begin{equation*}
    C_{\mathcal{P}}(s_1,s_2) = \sum_{k=-\infty}^\infty C(s_1-s_2+2kL).
\end{equation*}

We remark that these infinite sums are not considered for numerical experiments but rather a truncated version of them. In the case of the tadpole graph, because the eigenpairs of the graph Laplacian $\Delta_\Gamma$ are known, the corresponding covariance function can be obtained using a truncated version of its Mercer representation \citep{Steinwart2012Mercer}, retaining only a finite number of terms in the summation.  Let $\Gamma_T = (\mathcal{V},\mathcal{E})$ characterize the tadpole graph with $\mathcal{V} = \{v_1,v_2\}$ and $\mathcal{E} = \{e_1,e_2\}$ as specified in Figure~\ref{Interval.Circle.Tadpole}c. The left edge $e_1$ has length 1 and the circular edge $e_2$ has length 2. As discussed in Subsection~\ref{subsec:prelim}, a point on $e_1$ is parameterized via $s=\left(e_1, t\right)$ for $t \in[0,1]$ and a point on $e_2$ via $s=\left(e_2, t\right)$ for $ t \in[0,2]$. One can verify that $-\Delta_\Gamma$ has eigenvalues $0,\left\{(i \pi / 2)^2\right\}_{i \in \mathbb{N}}$ and $\left\{(i \pi / 2)^2\right\}_{2 i \in \mathbb{N}}$ with corresponding eigenfunctions $\phi_0$, $\left\{\phi_i\right\}_{i \in \mathbb{N}}$, and $\left\{\psi_i\right\}_{2 i \in \mathbb{N}}$ given by $\phi_0(s)=1 / \sqrt{3}$ and 
\begin{equation*}
    \phi_i(s)=C_{\phi, i}\begin{cases}
        -2 \sin (\sfrac{i\pi}{2}) \cos (\sfrac{i \pi t}{2}), & s \in e_1, \\
\sin (\sfrac{i \pi t}{2}), & s \in e_2,
    \end{cases}
    \quad 
        \psi_i(s)=\frac{\sqrt{3}}{\sqrt{2}} \begin{cases}
    (-1)^{i / 2} \cos (\sfrac{i \pi t}{2}), & s \in e_1, \\
\cos (\sfrac{i \pi t}{2}), & s \in e_2,
\end{cases}
\end{equation*}
where $C_{\phi, i}=1$ if $i$ is even and $C_{\phi, i}=1 / \sqrt{3}$ otherwise. Moreover, these functions form an orthonormal basis for $L_2(\Gamma_T)$.

\section{Complementary illustrations}
\phantomsection % Ensure the hyperlink points here
\label{app:com_illustrations}

This appendix provides additional visual illustrations supporting the results in the main text.

\subsection{Non-stationary covariance features}
\phantomsection % Ensure the hyperlink points here
\label{app:model_prop}

Following Subsection~\ref{subsec:model_properties}, we specify $\tau(\cdot)$ and $\kappa(\cdot)$ via $\tau(s) = e^{0.05\cdot(x(s)-y(s))}$ and $\kappa(s) = e^{0.1\cdot(x(s)-y(s))}$, where $(x(s),y(s))$ are Euclidean coordinates on the plane. These choices induce large values of $\tau(\cdot)$ and $\kappa(\cdot)$ in the bottom-right region of the \texttt{MetricGraph} package's logo, as seen from the top row of Figure~\ref{four_plots}.  The bottom row of Figure~\ref{four_plots} shows the standard deviation $\sigma(\cdot)$ and practical correlation range $\rho(\cdot)$  for the case $\alpha = 0.9$. The resulting $\sigma(\cdot)$ and $\rho(\cdot)$ exhibit clear non-stationary behavior and display spatial patterns opposite to those of their counterparts $\tau(\cdot)$ and $\kappa(\cdot)$, illustrating that these parameters provide direct control over marginal variability and correlation structure. The panels in the bottom row of Figure~\ref{four_plots} also show three locations, $s_1$, $s_2$, and $s_3$, which are the same locations as in the left panel of Figure~\ref{cov_diff_range}.

\begin{figure}[t]
    \centering
    \includegraphics[width=0.95\textwidth]{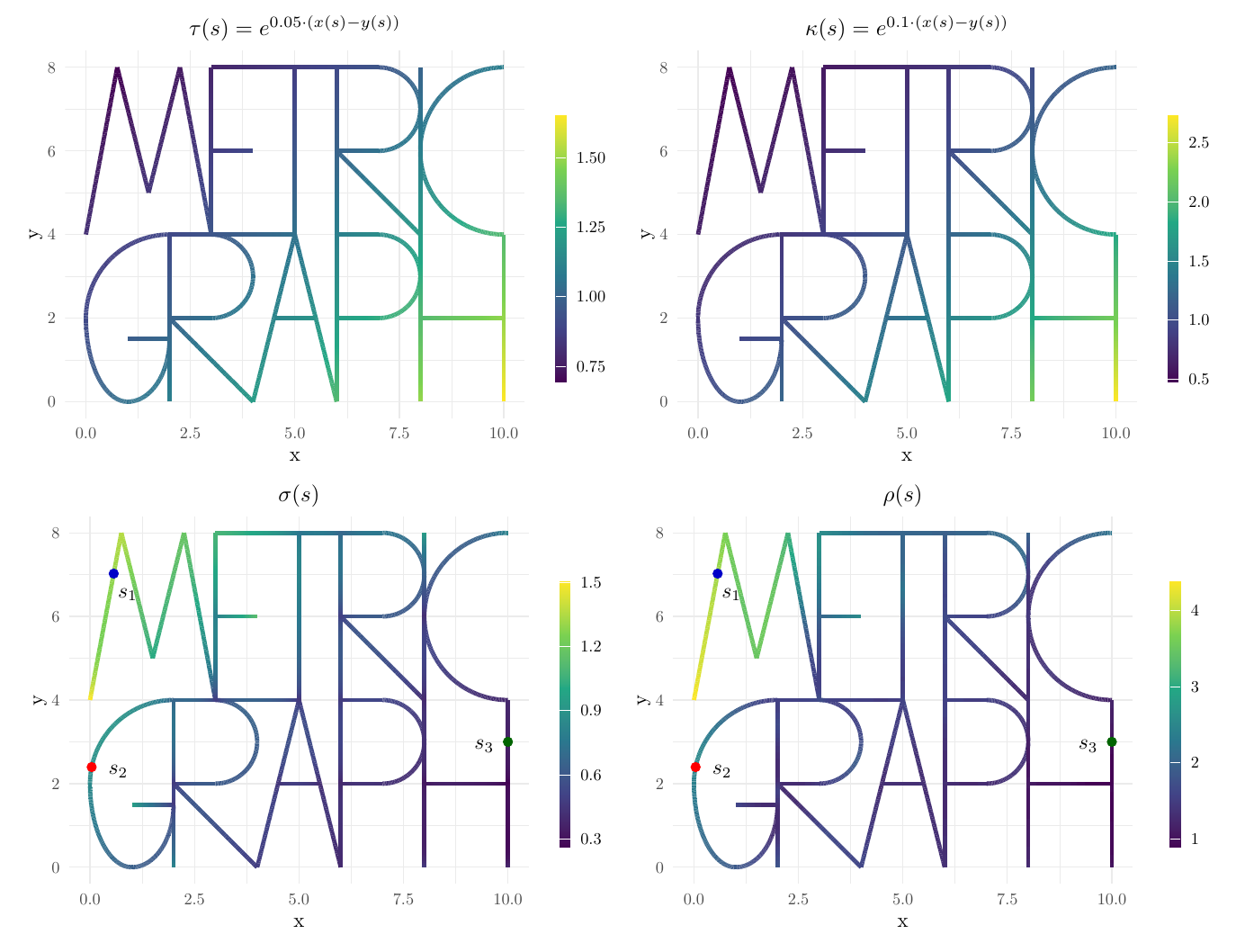}
    \caption{Non-stationary models for $\tau(\cdot)$ and $\kappa(\cdot)$ (top row). Non-stationary standard deviation $\sigma(\cdot)$ and practical correlation range $\rho(\cdot)$ (bottom row).}
    \label{four_plots}
\end{figure}

\subsection{Approximation errors}
\phantomsection % Ensure the hyperlink points here
\label{app:approx_error}

We illustrate the approximation errors for the interval and circle graphs, using the same parameter settings as in Section~\ref{numerical_experiments}. The errors for the interval graph are shown in Figure~\ref{interval_errors}, while those for the circle graph are shown in Figure~\ref{circle_errors}.

\begin{figure}[!t]
\centering
\includegraphics[width=1\textwidth]{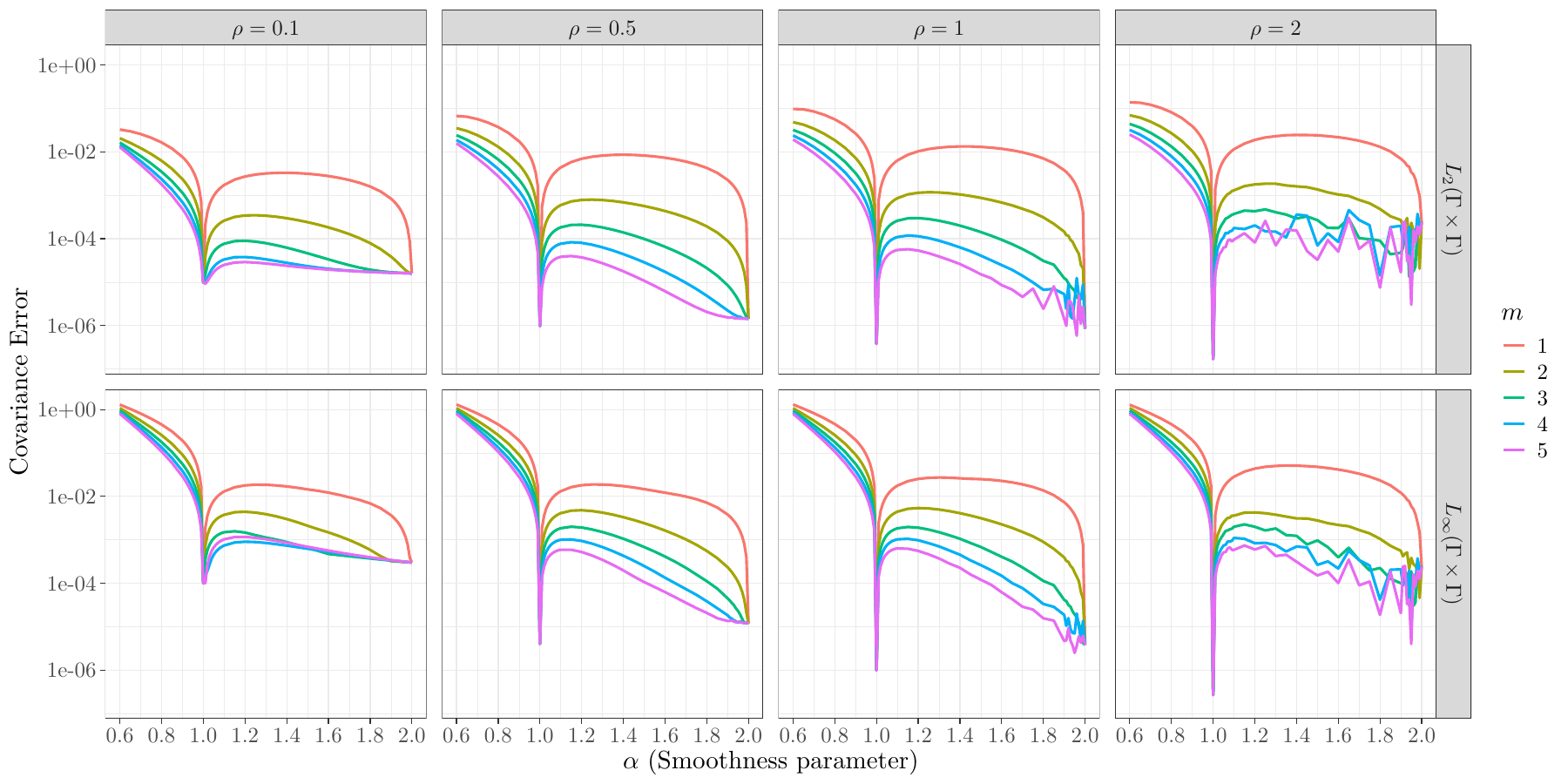}
\caption{Errors in $L_2(\Gamma\times\Gamma)$-norm (top row) and supremum norm (bottom row) on the interval graph for different combinations of parameters $\rho$ (practical range), $\nu$ (smoothness), and $m$ (rational order). In all cases, the FEM mesh contains 1000 equally spaced nodes.}
\label{interval_errors}
\end{figure}

\begin{figure}[!t]
\centering
\includegraphics[width=1\textwidth]{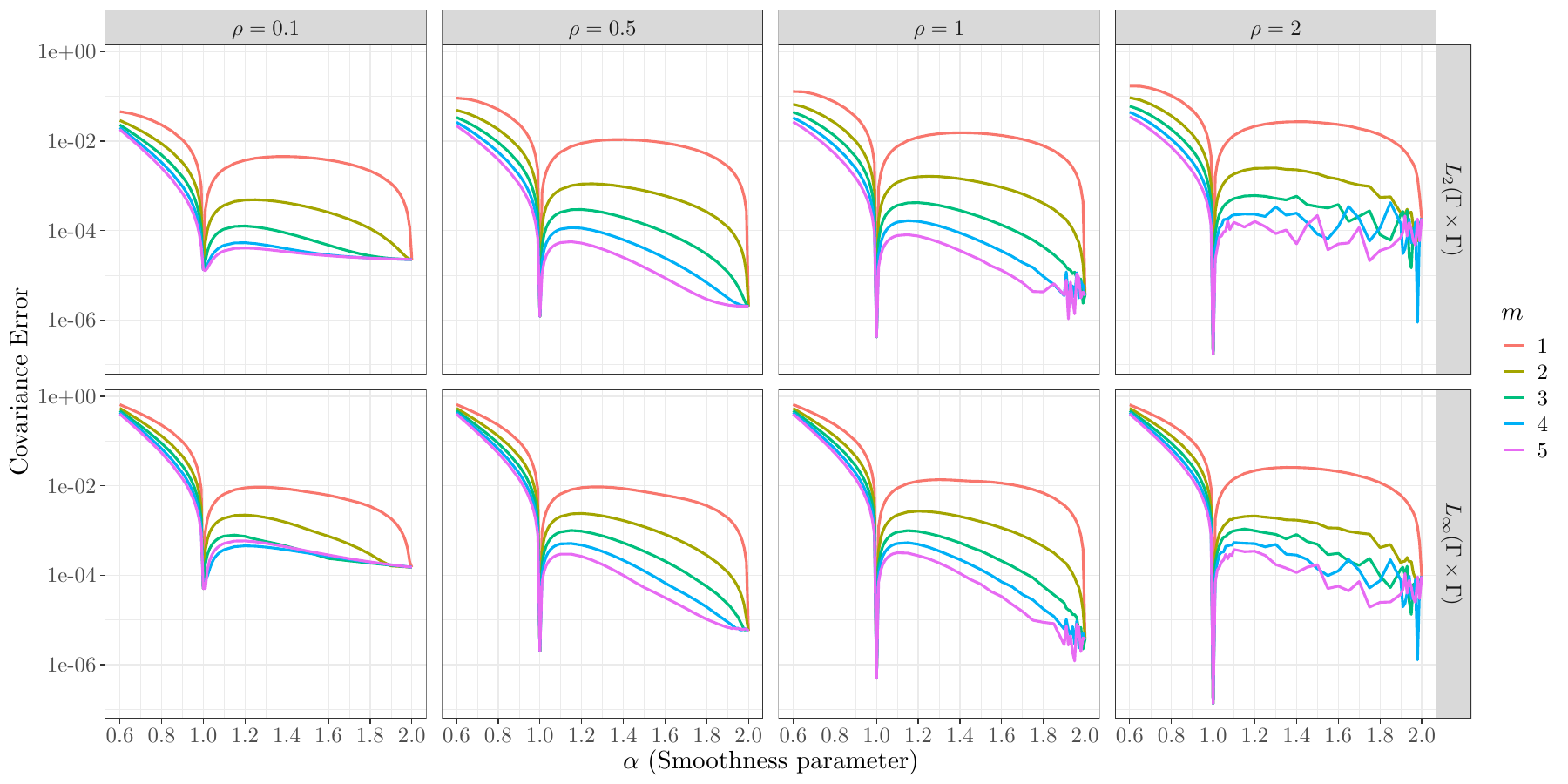}
\caption{Errors in $L_2(\Gamma\times\Gamma)$-norm (top row) and supremum norm (bottom row) on the circle graph for different combinations of parameters $\rho$ (practical range), $\nu$ (smoothness), and $m$ (rational order). In all cases, the FEM mesh contains 2000 equally spaced nodes.}
\label{circle_errors}
\end{figure}

\subsection{Covariates for the traffic speed application}
\phantomsection % Ensure the hyperlink points here
\label{app:covariates_for_applic}

We present illustrations of the covariates constructed in Subsection~\ref{ssec:const_covariates}. Figure~\ref{app:mean_better_3} displays the average speeds for the first part of the data. Points indicate measurements at each location, while colored edges correspond to the $\text{mean.cov}(s)$ covariate. Figure~\ref{stand_log_sigma} shows the standardized logarithm of the standard deviation over the same period, with points representing values at each location and colored edges indicating the $\text{std.cov}(s)$ covariate.

\begin{figure}[!t]
    \centering
    \includegraphics[width=0.95\textwidth]{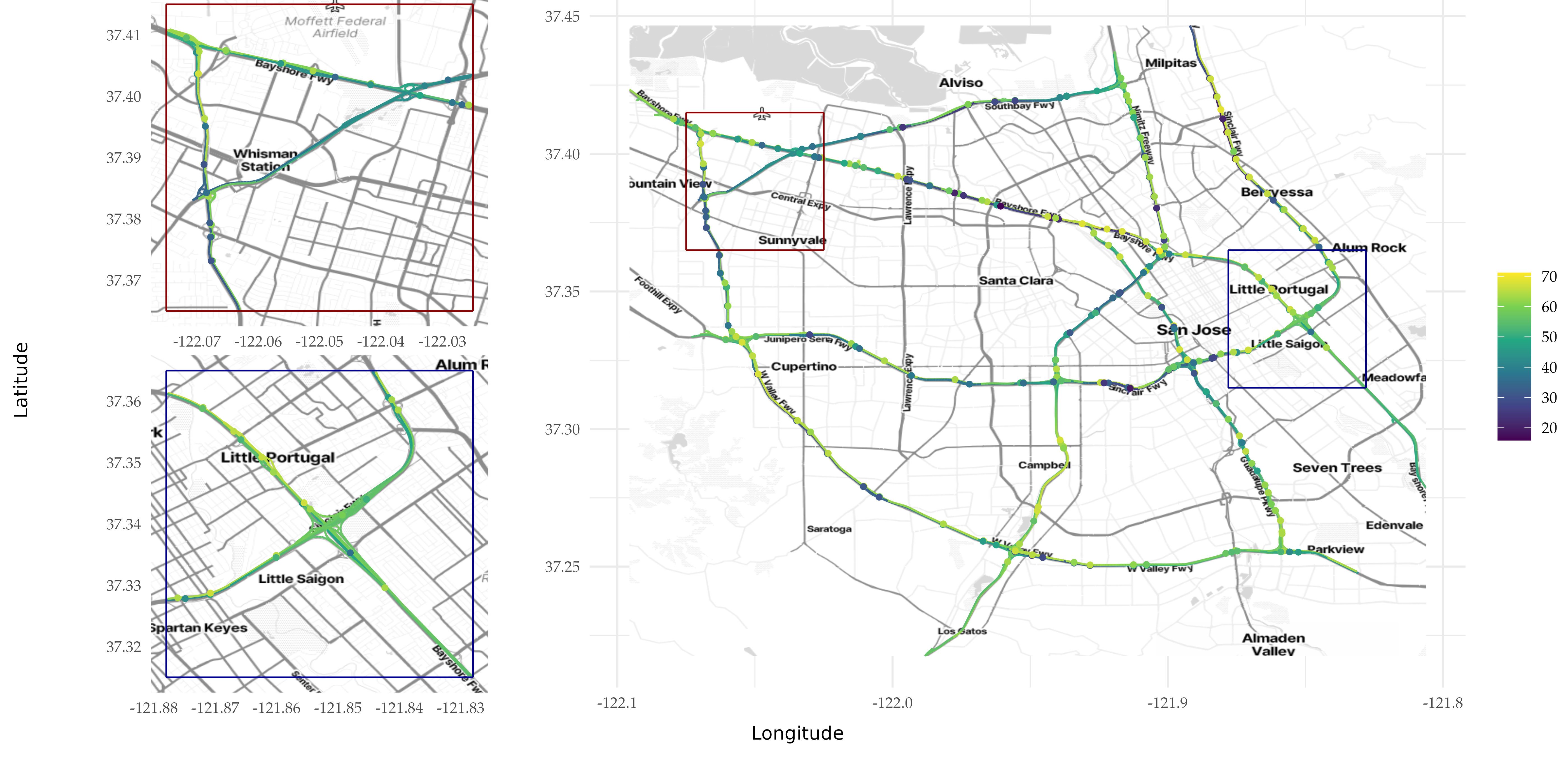}
    \caption{Average speeds of the first part of the data. Points represent the point value at each available location, and edges represent the $\text{mean.cov}(s)$ covariate.}
    \label{app:mean_better_3}
\end{figure}

\begin{figure}[!t]
    \centering
    \includegraphics[width=0.95\textwidth]{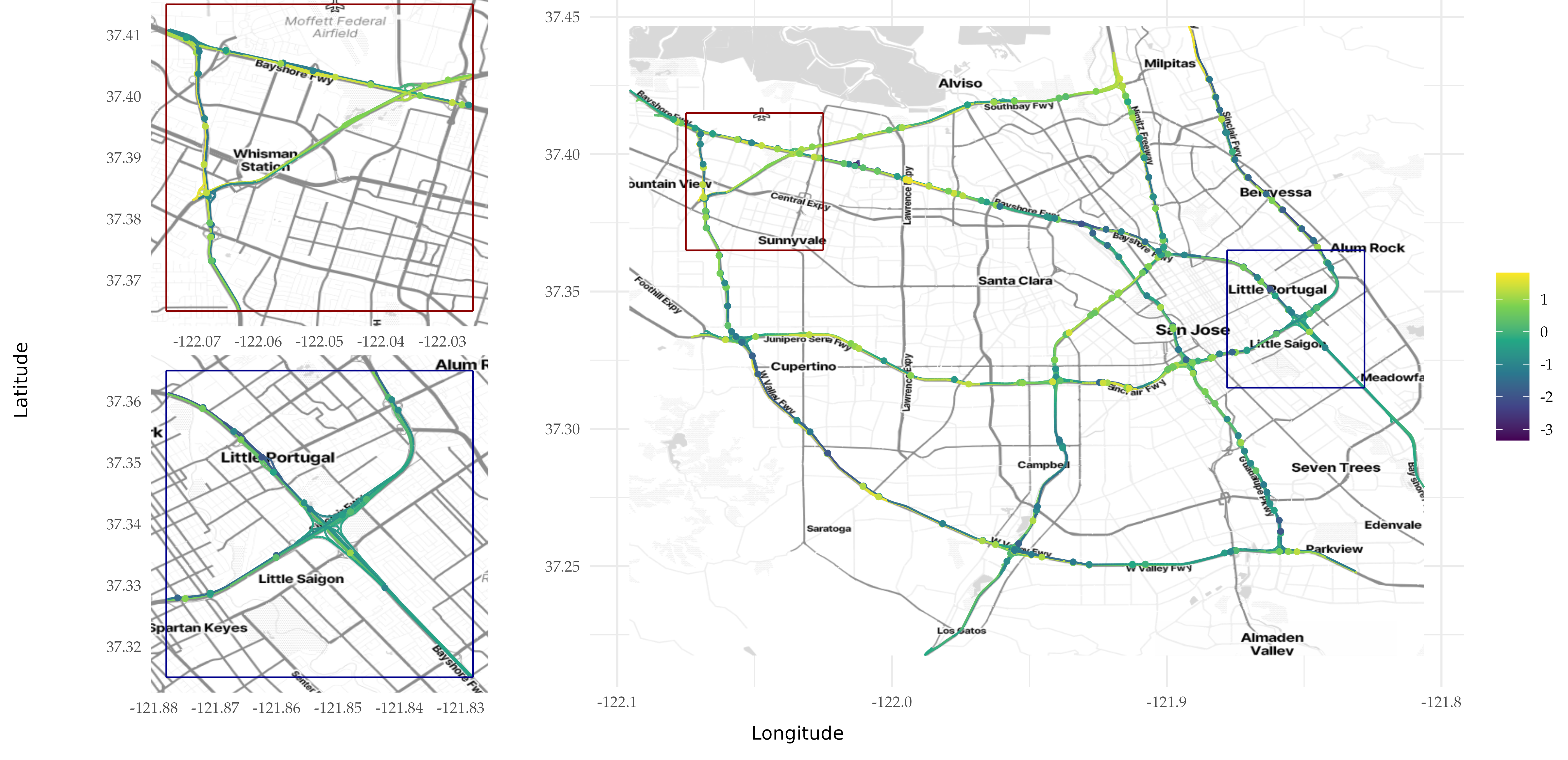}
    \caption{Standardized logarithm of the standard deviation for the first part of the data. Points show values at each location, and edges represent the $\text{std.cov}(s)$ covariate.}
    \label{stand_log_sigma}
\end{figure}

\section{Effects of smoothness misspecification}
\phantomsection % Ensure the hyperlink points here
\label{app:effects_miss}

We generate synthetic data $\{y_i\}_{i=1}^{200}$ from $10$ independent replicates according to\begin{equation}
\label{sim_model}
    y_i|u(\cdot)\sim N(\beta_0 + u(s_i),\sigma_\epsilon^2),\;i = 1,\dots, 200,
\end{equation} 
where $u(\cdot)$ is a Whittle--Mat\'ern field defined on the tadpole graph. The field is specified by the parameters $\sigma = 1.2$, $\rho = 0.2$, and $\nu = 0.8$ under the Mat\'ern parameterization, which correspond to $\tau = 0.06620297$, $\kappa = 12.64911$, and $\alpha = 1.3$ under the SPDE parameterization. The intercept is fixed at $\beta_0 = 1$, and the measurement noise $\sigma_e$ is set to $0.1$. We fit model \eqref{sim_model} to the synthetic data under three different specifications of the smoothness parameter $\alpha$: $\texttt{Model}_1$, in which $\alpha$ is fixed at $1$,
$\texttt{Model}_2$, in which $\alpha$ is fixed at $2$,
and $\texttt{Model}_{\texttt{est}}$, in which $\alpha$ estimated from the data.
Posterior mean estimates of the model parameters are reported in Table~\ref{table_simm}, together with the true values used to generate the data. Across all three model specifications, the posterior means of $\beta_0$ and $\sigma_\epsilon$ are stable and close to their true values, indicating that inference on the mean structure and measurement noise is relatively insensitive to the choice of $\alpha$. In contrast, substantial differences arise in the estimates of the SPDE parameters $\tau$ and $\kappa$ when $\alpha$ is misspecified. When $\alpha$ is fixed at either $1$ ($\texttt{Model}_1$) or $2$ ($\texttt{Model}_2$), the resulting estimates of $\tau$ and $\kappa$ significantly deviate from the true values. When $\alpha$ is estimated jointly with the remaining parameters ($\texttt{Model}_{\texttt{est}}$), the posterior mean $\hat{\alpha} = 1.322$ is close to the true value $\alpha = 1.3$, and the corresponding estimates of $\tau$ and $\kappa$ are also in close agreement with the true parameters. This demonstrates that allowing $\alpha$ to be estimated substantially improves recovery of the underlying spatial dependence structure and mitigates the bias induced by smoothness misspecification.

\begin{table}[!t]
    \caption{Posterior means of the parameters of model \eqref{sim_model}.}
    \label{table_simm}
    \centering
\begin{tabular}{lcccc}
\toprule
$\alpha$              & $\beta_0$ & $\tau$ & $\kappa$ & $\sigma_e$ \\ \hline 
$1$                    & $1.044$ & $0.241$ & $5.7002$ & $0.094$ \\ 
$2$                    & $1.039$ &  $0.003$ & $25.219$ & $0.125$ \\ 
$\texttt{est} = 1.322$ & $1.042$ &  $0.063$ & $12.388$ & $0.103$ \\ 
$\texttt{TRUE} = 1.300$ & $1.000$ &  $ 0.066$ & $12.649$ & $0.100$ \\
\bottomrule
\end{tabular}
\end{table}

As in Subsection~\ref{sec:comparison_crossval}, we assess the predictive performance of the three models using leave-group-out pseudo cross-validation. Predictions at each location are obtained by conditioning on the most informative observations, defined as those lying within a geodesic distance $R$ from the prediction location. Increasing $R$ therefore corresponds to incorporating progressively more data into the prediction. By varying $R$, we evaluate how predictive accuracy changes as a function of the amount of available local information. Model performance is summarized using the Mean Squared Error (MSE) and the Negative Log-Score (NLS). The results are shown in Figure~\ref{cv_sim}. Across the full range of $R$, models in which the smoothness parameter $\alpha$ is misspecified ($\texttt{Model}_1$ and $\texttt{Model}_2$) exhibit inferior predictive performance under both metrics, whereas allowing $\alpha$ to be estimated ($\texttt{Model}_{\texttt{est}}$) leads to consistently lower MSE and NLS values. This further illustrates the negative effect of smoothness misspecification on predictive accuracy, even when predictions are based on increasingly informative local neighborhoods.

\begin{figure}[!t]
    \centering
    \includegraphics[width=0.95\textwidth]{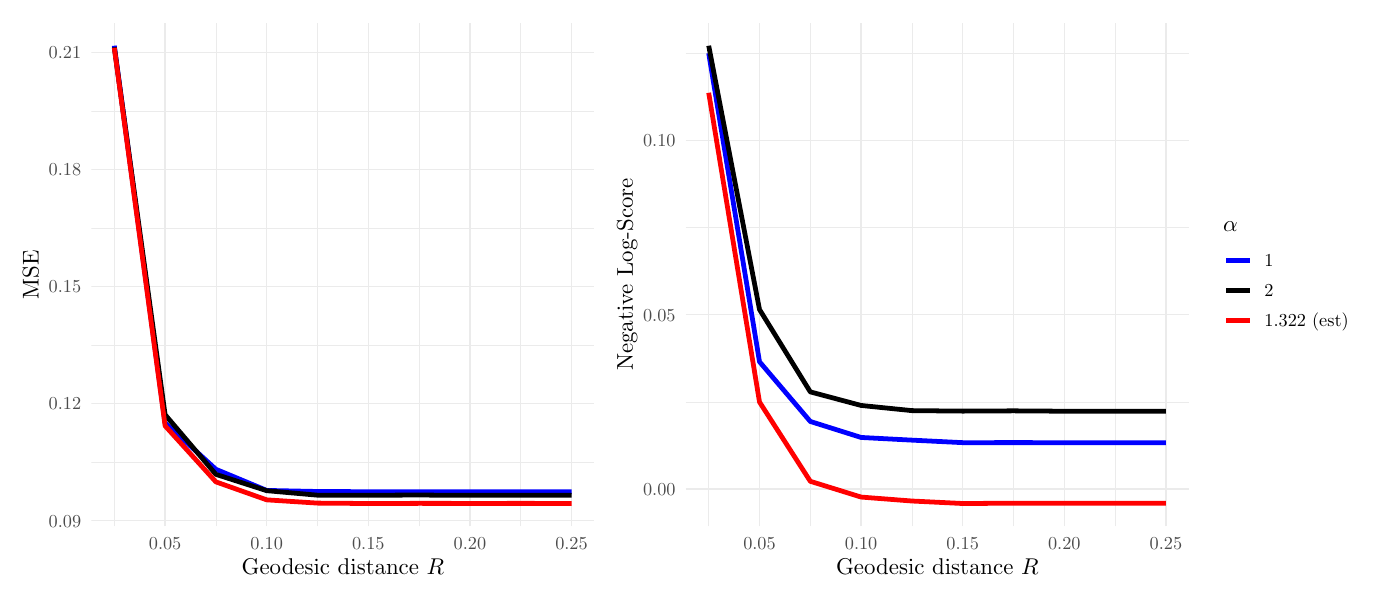}
    \caption{MSE and negative Log-Score as functions of distance for model \eqref{sim_model}  with $\alpha = 1$, $\alpha = 2$, and $\alpha$ estimated (est).}
    \label{cv_sim}
\end{figure}

\section{Additional notation}
\phantomsection % Ensure the hyperlink points here
\label{app:theoretical_details}

To support the technical details in the main text and the following appendices, we begin by introducing some additional notation. The Sobolev space $ H^1(\Gamma) $ is defined as 
\begin{align*}
    H^1(\Gamma) = \{f \in C(\Gamma) : \|f\|^2_{H^1(\Gamma)} < \infty \} = C(\Gamma) \cap \bigoplus_{e \in \mathcal{E}} H^1(e),
\end{align*}
with the norm
\begin{align*}
    \|f\|^2_{H^1(\Gamma)} = \sum_{e \in \mathcal{E}} \|f_{e}\|^2_{H^1(e)} = \sum_{e \in \mathcal{E}} \left( \int_{e} f_{e}'^2(s) \, ds + \int_{e} f_{e}^2(s) \, ds \right),
\end{align*}
where, for each $ e \in \mathcal{E} $, $ H^1(e) $ is the standard Sobolev space of order 1 on $[0, \ell_{e}]$, and $ f_{e}'$ is the weak derivative of $ f_{e} $, defined as the unique function in $L_2(e)$ satisfying  
\begin{equation}
\label{eq:weak_deriv}
    f_e(x) = f_e(0) + \int_0^x f_e'(s) \, ds \quad \text{for all } x \in [0, \ell_{e}].
\end{equation}
We define the decoupled Sobolev spaces of integer order $ k \geq 1 $,  by  
$\widetilde{H}^k(\Gamma) = \bigoplus_{e \in \mathcal{E}} H^k(e)$ with norm $\|f\|^2_{\widetilde{H}^k(\Gamma)} = \sum_{e \in \mathcal{E}} \|f_{e}\|^2_{H^k(e)}$, where $H^k(e)$ is the standard Sobolev space of order $ k $ on the edge $ e $. This space consists of functions in $ L^2(e) $ whose weak derivatives up to order $ k $ also belong to $L^2(e)$. Higher-order weak derivatives are defined inductively, with the $ j $-th derivative of a function $ f_e $ being the weak derivative of $ f_e^{(j-1)} $, as given by \eqref{eq:weak_deriv}. It follows that any $ f_e \in H^1(e) $ is almost everywhere equal to a continuous function. Further, if $ f \in \widetilde{H}^2(\Gamma) $, then for every $ e \in \mathcal{E} $, $ f_e' $ can be identified with a continuous function. At this point, we note that odd-order derivatives depend on the orientation (direction) of edge parameterizations, while even-order derivatives do not. Consequently, strictly speaking, one should indicate the chosen orientation when writing odd-order derivatives. However, doing so would burden the notation with limited benefit. We therefore choose not to make this dependence explicit. We refer the reader to \cite{Awadelkarim2025Fractional} for further details. To ensure continuity across $ \Gamma $, we define the spaces $\widetilde{H}_C^k(\Gamma) = \widetilde{H}^k(\Gamma) \cap C(\Gamma)$. Note that $ H^1(\Gamma) = \widetilde{H}^1(\Gamma) \cap C(\Gamma) = \widetilde{H}_C^1(\Gamma) $, so functions in $ H^1(\Gamma) $ are guaranteed to be continuous, unlike functions in the decoupled Sobolev space $ \widetilde{H}^1(\Gamma)$. The space $ K(\Gamma) $ is then defined as the space of functions in $ \widetilde{H}^2_C(\Gamma) $ that satisfy the Kirchhoff vertex conditions. Specifically, 
\begin{equation*}
\label{eq:kirchhoff_space}
K(\Gamma) = \mathcal{K}(\Gamma)\cap \widetilde{H}^2_C(\Gamma) =   \left\{ u \in \widetilde{H}^2_C(\Gamma) \;\Big|\; \forall v \in \mathcal{V}, \; \sum_{s\in v}\partial u(s)=0 \right\},
\end{equation*}
where $\mathcal{K}(\Gamma)$ is given by \eqref{eq:kirchhoff_cond} and $ \partial_e $ denotes the directional derivative along $ e $, taken away from $ v $. Further, $K(\Gamma)$ is well-defined since for functions $f\in\widetilde{H}^2(\Gamma)$, $f_e'$ can be identified with a continuous function, so the expression $\partial u(s)$ is meaningful.

Let $ (E, \|\cdot\|_E) $ and $ (F, \|\cdot\|_F) $ be two separable Hilbert spaces. We denote by $ \mathcal{L}(E, F) $ the Banach space of linear bounded operators from $ E $ to $ F $, equipped with the norm $\|\cdot\|_{\mathcal{L}(E, F)} = \sup_{\|x\|_E = 1} \|\cdot x\|_F$. The space $ \mathcal{L}_2(E, F) $ consists of Hilbert-Schmidt operators from $E$ to $F$, with the Hilbert-Schmidt norm defined by 
\begin{align*}
    \|\cdot\|^2_{\mathcal{L}_2(E, F)} = \sum_{j \in \mathbb{N}} \|\cdot e_j\|^2_F,
\end{align*}
where $ (e_j)_j $ is a complete orthonormal basis of $ E $. For the special case $ \mathcal{L}_2(E, E) $, we write $ \mathcal{L}_2(E) $ for simplicity. If $ E \subset F $ with continuous inclusion, i.e., there exists $ C > 0 $ such that $ \|f\|_F \leq C \|f\|_E $ for all $ f \in E $, we write $ E \hookrightarrow F $. If $ E \hookrightarrow F \hookrightarrow E $, we write $ E \cong F $.

Let $ (\Omega, \mathcal{F}, \mathbb{P}) $ be a complete probability space. For a real-valued random variable $ Z $, its expectation is given by $ \mathbb{E}(Z) = \int_\Omega Z(\omega) \, d\mathbb{P}(\omega) $. The space $ L_2(\Omega) $ denotes the Hilbert space of equivalence classes of real-valued random variables with finite second moments. For a separable Hilbert space $ (E, \|\cdot\|_E) $, the space $ L_2(\Omega; E) $ consists of $ E $-valued Bochner-measurable random variables with finite second moments, equipped with the norm 
\begin{align*}
    \|u\|_{L_2(\Omega; E)}^2 = \int_\Omega \|u(\omega)\|_E^2 \, d\mathbb{P}(\omega).
\end{align*}

If $ E \subset F $, we denote the real interpolation space of order $ s \in (0, 1) $ between $ F $ and $ E $ by $ (F, E)_s $. The fractional Sobolev spaces are defined as follows:
\begin{align*}
H^s(\Gamma) &:=
\begin{cases}
(L_2(\Gamma), H^1(\Gamma))_s, & \text{for } 0 < s < 1, \\
(H^1(\Gamma), \widetilde{H}_C^2(\Gamma))_{s-1}, & \text{for } 1 \leq s \leq 2,
\end{cases}\\
\widetilde{H}^s(\Gamma) &:=
\begin{cases}
(L_2(\Gamma), \widetilde{H}^1(\Gamma))_s, & \text{for } 0 < s < 1, \\
(\widetilde{H}^1(\Gamma), \widetilde{H}^2(\Gamma))_{s-1}, & \text{for } 1 \leq s \leq 2.
\end{cases}
\end{align*}

The essential supremum and infimum of a function $ f $ over $ \Gamma $ are defined as:
\begin{align*}
    \esssup_{s \in \Gamma} |f(s)| &:= \inf \{ M \geq 0 : \lambda(\{s \in \Gamma : |f(s)| > M\}) = 0 \},\\
\essinf_{s \in \Gamma} |f(s)| &:= \sup \{ m \geq 0 : \lambda(\{s \in \Gamma : |f(s)| < m\}) = 0 \},
\end{align*}
where $ \lambda $ is the Lebesgue measure on $ \Gamma $, given by $ \lambda(A) = \sum_{e \in \mathcal{E}} \lambda_e(A \cap e) $, with $ \lambda_e $ being the Lebesgue measure on each edge $ e $, identified with a compact interval.

The Gaussian linear space associated with a Gaussian random field $ u $ is defined as ${H_u = \overline{\mathrm{span}\{u(s) : s \in \Gamma\}}}$, where the closure is taken in $ L_2(\Omega) $. If $ e $ is an edge and ${f : e \to H_u}$ is a function, we say that $ f $ is weakly differentiable at $ s $ in the $ L_2(\Omega) $ sense if there exists $ f'(s) \in H_u $ such that, for every $ v \in H_u $ and every sequence $ s_n \to s $ with $ s_n \neq s $, we have 
\begin{align*}
    \mathbb{E}\left( v \frac{f(s_n) - f(s)}{s_n - s} \right) \to \mathbb{E}(w f'(s)).
\end{align*}

Finally, the Cameron-Martin space (also known as the reproducing kernel Hilbert space) associated with $ u $ is defined as $\mathcal{H}_u = \{h(s) = \mathbb{E}(vu(s)) : s \in \Gamma, v \in H_u\}$.

\section{Preliminaries on H\"older spaces}
\phantomsection % Ensure the hyperlink points here

In this section, we provide some general results on H\"older spaces which will be needed in what follows. The results are stated and proved for general metric spaces, as the proofs for metric graphs and general metric spaces are identical.

Let $(X, \tilde{d})$ be a metric space, and let $C(X)$ denote the space of real-valued continuous functions on $X$. For $\gamma \in (0, 1]$, the $\gamma$-H\"older seminorm is defined as
\begin{align*}
    [f]_{C^{0,\gamma}(X)} = \sup_{s, s’ \in X} \dfrac{|f(s) - f(s’)|}{\tilde{d}(s, s’)^\gamma},
\end{align*}
where $f \in C(X)$. The space of functions $f \in C(X)$ with a finite seminorm $[f]_{C^{0,\gamma}(X)}$ is denoted by $C^{0,\gamma}(X)$.
We equip the H\"older space $C^{0,\gamma}(X)$ with the norm
\begin{align*}
    |f|_{C^{0,\gamma}(X)} = |f|_{C(X)} + [f]_{C^{0,\gamma}(X)},
\end{align*}
for $f \in C^{0,\gamma}(X)$. A function $f \in C^{0,\gamma}(X)$ is referred to as $\gamma$-H\"older continuous.

We now state and prove some fundamental properties of the H\"older seminorm and, more generally, of H\"older spaces. All proofs are straightforward adaptations of those provided in \cite{Fiorenza2016Holder} for real-valued functions defined on subsets of $\mathbb{R}^d$, where $d \in \mathbb{N}$.
We start by showing the relation between H\"older spaces with different exponents. 

Throughout this section, $(X,\tilde{d})$ denotes a metric space.

\begin{prop}
\label{prop:Holder_exponents}
Let $0<\mu < \gamma \leq 1$ and define
\begin{align*}
    \delta = \operatorname{diam} X = \sup_{x', x'' \in X} \tilde{d}(x', x'').
\end{align*}
Then, for any $f\in C^{0,\gamma}(X)$, $[f]_{C^{0,\mu}(X)} \leq \delta^{\gamma - \mu}[f]_{C^{0,\gamma}(X)}$. In particular, if $\delta <\infty$ and $f$ is $\gamma$-H\"older continuous, then $f$ is also $\mu$-H\"older continuous.
\end{prop}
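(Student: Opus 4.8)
The plan is to reduce the statement to a pointwise estimate and then take a supremum. Fix two distinct points $x', x'' \in X$ (the case $x' = x''$ contributes nothing to either seminorm). The key algebraic identity I would use is the factorization
\[
\frac{|f(x') - f(x'')|}{\tilde{d}(x', x'')^{\mu}}
= \frac{|f(x') - f(x'')|}{\tilde{d}(x', x'')^{\gamma}} \cdot \tilde{d}(x', x'')^{\gamma - \mu},
\]
which is valid because $\tilde{d}(x',x'') > 0$ and the exponents satisfy $\mu + (\gamma - \mu) = \gamma$.

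Next I would bound each of the two factors separately. The first factor is at most $[f]_{C^{0,\gamma}(X)}$ by the very definition of the $\gamma$-Hölder seminorm. For the second factor, I would use that $\gamma - \mu > 0$, so the function $t \mapsto t^{\gamma - \mu}$ is nondecreasing on $[0,\infty)$, together with $\tilde{d}(x', x'') \leq \delta = \operatorname{diam} X$; this gives $\tilde{d}(x', x'')^{\gamma - \mu} \leq \delta^{\gamma - \mu}$. (If $\delta = \infty$ the claimed inequality is vacuous since the right-hand side is then $+\infty$, so one may assume $\delta < \infty$ here.) Combining the two bounds yields
\[
\frac{|f(x') - f(x'')|}{\tilde{d}(x', x'')^{\mu}} \leq \delta^{\gamma - \mu}\, [f]_{C^{0,\gamma}(X)}
\]
for every pair of distinct points $x', x'' \in X$. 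Taking the supremum over all such pairs on the left-hand side gives $[f]_{C^{0,\mu}(X)} \leq \delta^{\gamma - \mu} [f]_{C^{0,\gamma}(X)}$, which is the asserted inequality.

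For the final ``in particular'' claim, I would simply observe that if $\delta < \infty$ and $f$ is $\gamma$-Hölder continuous, then $[f]_{C^{0,\gamma}(X)} < \infty$, and hence the inequality just proved shows $[f]_{C^{0,\mu}(X)} \leq \delta^{\gamma - \mu}[f]_{C^{0,\gamma}(X)} < \infty$; since $f$ is continuous and has finite $\mu$-Hölder seminorm, $f \in C^{0,\mu}(X)$. There is no real obstacle in this argument; the only point requiring a moment's care is the separate handling of the degenerate cases $x' = x''$ and $\delta = \infty$, and the observation that monotonicity of $t \mapsto t^{\gamma-\mu}$ requires the strict inequality $\mu < \gamma$.
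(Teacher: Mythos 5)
Your proof is correct and follows essentially the same route as the paper's: factor $\tilde{d}(x',x'')^{\gamma} = \tilde{d}(x',x'')^{\gamma-\mu}\,\tilde{d}(x',x'')^{\mu}$, bound the first factor by $\delta^{\gamma-\mu}$, and take the supremum. Your explicit handling of the degenerate cases $x'=x''$ and $\delta=\infty$ is a minor tidiness the paper omits, but the argument is identical in substance.
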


\begin{proof}
For any $x', x'' \in X$, we have
\begin{align*}
|f(x') - f(x'')| \leq [f]_{C^{0,\gamma}(X)} \tilde{d}(x', x'')^\gamma 
&= [f]_{C^{0,\gamma}(X)} \tilde{d}(x', x'')^{\gamma - \mu} \tilde{d}(x', x'')^\mu \\
&\leq [f]_{C^{0,\gamma}(X)} \delta^{\gamma - \mu} \tilde{d}(x', x'')^\mu.
\end{align*}
Thus, the inequality is satisfied and if $\delta<\infty$, then $f$ is $\mu$-H\"older continuous on $X$. \hfill
\end{proof}

We will now show properties related to the sums, products and compositions of H\"older continuous functions. We start with the sum:
\begin{prop}
\label{prop:Holder_sum}
If $f$ and $g$ are $\gamma$-H\"older continuous on $X$, then their sum $f + g$ is also $\gamma$-H\"older continuous. Moreover, the H\"older seminorm satisfies
\begin{equation}
\label{eq:holder1}
    [f + g]_{C^{0,\gamma}(X)} \leq [f]_{C^{0,\gamma}(X)} + [g]_{C^{0,\gamma}(X)}.
\end{equation}
\end{prop}

\begin{proof}
For any $x', x'' \in X$, we have
\begin{align*}
|(f + g)(x') - (f + g)(x'')| &\leq |f(x') - f(x'')| + |g(x') - g(x'')| \\
&\leq [f]_{C^{0,\gamma}(X)} \tilde{d}(x', x'')^\gamma + [g]_{C^{0,\gamma}(X)} \tilde{d}(x', x'')^\gamma \\
&= ([f]_{C^{0,\gamma}(X)} + [g]_{C^{0,\gamma}(X)}) \tilde{d}(x', x'')^\gamma.
\end{align*}
This shows that $f + g$ is $\gamma$-H\"older continuous, and that \eqref{eq:holder1} holds. \hfill
\end{proof}

\begin{prop}
\label{prop:Holder_product}
If $f$ and $g$ are bounded and $\gamma$-H\"older continuous on $X$, then their product $fg$ is also $\gamma$-H\"older continuous. Moreover, the H\"older seminorm satisfies
\begin{equation}
\label{eq:holder2}
[fg]_{C^{0,\gamma}(X)} \leq [f]_{C^{0,\gamma}(X)} \sup_{x \in X} |g(x)| + [g]_{C^{0,\gamma}(X)} \sup_{x \in X} |f(x)|.    
\end{equation}
\end{prop}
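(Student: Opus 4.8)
The plan is to use the classical ``add and subtract'' trick that reduces the product estimate to the two given single-function H\"older bounds. First I would note that $fg$ is continuous on $X$, being the product of two continuous functions, so it makes sense to speak of its H\"older seminorm; boundedness of $f$ and $g$ will be what guarantees this seminorm is actually finite.

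The core step is the algebraic identity, valid for any $x', x'' \in X$,
$$
f(x')g(x') - f(x'')g(x'') = \bigl(f(x') - f(x'')\bigr)g(x') + f(x'')\bigl(g(x') - g(x'')\bigr).
$$
Taking absolute values and applying the triangle inequality, then bounding $|g(x')|$ and $|f(x'')|$ by their suprema over $X$ and using the H\"older estimates $|f(x')-f(x'')| \leq [f]_{C^{0,\gamma}(X)}\tilde{d}(x',x'')^\gamma$ and $|g(x')-g(x'')| \leq [g]_{C^{0,\gamma}(X)}\tilde{d}(x',x'')^\gamma$, yields
$$
|f(x')g(x') - f(x'')g(x'')| \leq \Bigl( [f]_{C^{0,\gamma}(X)} \sup_{x\in X}|g(x)| + [g]_{C^{0,\gamma}(X)} \sup_{x\in X}|f(x)| \Bigr)\tilde{d}(x',x'')^\gamma.
$$
Dividing by $\tilde{d}(x',x'')^\gamma$ (for $x' \neq x''$) and taking the supremum over all such pairs gives exactly \eqref{eq:holder2}, and in particular shows $[fg]_{C^{0,\gamma}(X)} < \infty$ since $f$ and $g$ are bounded, so $fg \in C^{0,\gamma}(X)$.

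There is no real obstacle here: the only point requiring any care is observing that boundedness of $f$ and $g$ is precisely what is needed to keep the right-hand side finite (the seminorms alone would not suffice), and that one should symmetrize the split—one could equally write it with $g(x'')$ and $f(x')$—but any consistent choice produces the stated bound. This mirrors the argument for \cite{fiorenza2017holder} in the Euclidean setting, with $\tilde d$ replacing the Euclidean distance.
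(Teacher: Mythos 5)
Your proof is correct and uses exactly the same decomposition as the paper: adding and subtracting $f(x'')g(x')$ to split the difference into $\bigl(f(x')-f(x'')\bigr)g(x')$ and $f(x'')\bigl(g(x')-g(x'')\bigr)$, then bounding each factor by the corresponding seminorm and supremum. No differences worth noting.
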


\begin{proof}
For any $x', x'' \in X$, we have
\begin{align*}
|f(x')g(x') - f(x'')g(x'')| &= |f(x')g(x') - f(x'')g(x'') + f(x'')g(x') - f(x'')g(x')| \\
&\leq |g(x')||f(x') - f(x'')| + |f(x'')||g(x') - g(x'')| \\
&\leq [f]_{C^{0,\gamma}(X)} \sup_{x \in X} |g(x)| \, \tilde{d}(x', x'')^\gamma\\
&\quad + [g]_{C^{0,\gamma}(X)} \sup_{x \in X} |f(x)| \, \tilde{d}(x', x'')^\gamma \\
&= \big([f]_{C^{0,\gamma}(X)} \sup_{x \in X} |g(x)| + [g]_{C^{0,\gamma}(X)} \sup_{x \in X} |f(x)|\big) \tilde{d}(x', x'')^\gamma.
\end{align*}
This shows that $fg$ is $\gamma$-H\"older continuous, and that \eqref{eq:holder2} holds. \hfill
\end{proof}

\begin{prop}
\label{prop:Holder_reciprocal}
If $g$ is $\gamma$-H\"older continuous function on $X$ such that $\inf_{x \in X} |g(x)| > 0$, then $1/g$ is also $\gamma$-H\"older continuous on $X$. Moreover, the H\"older seminorm satisfies
\begin{align*}
    \left[\frac{1}{g}\right]_{C^{0,\gamma}(X)} \leq \frac{[g]_{C^{0,\gamma}(X)}}{(\inf_{x \in X} |g(x)|)^2}.
\end{align*}
\end{prop}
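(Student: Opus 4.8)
The plan is to prove the two claims about $1/g$ — that it is $\gamma$-Hölder continuous and that its Hölder seminorm satisfies the stated bound — by direct estimation, exactly in the style of Propositions~\ref{prop:Holder_sum} and~\ref{prop:Holder_product}. First I would fix arbitrary $x', x'' \in X$ and write the difference of reciprocals over a common denominator:
\begin{equation*}
\left|\frac{1}{g(x')} - \frac{1}{g(x'')}\right| = \frac{|g(x'') - g(x')|}{|g(x')|\,|g(x'')|}.
\end{equation*}
The hypothesis $\inf_{x \in X} |g(x)| > 0$ guarantees that $g$ never vanishes, so $1/g$ is well-defined and the denominator above is strictly positive; moreover each factor in the denominator is bounded below by $\inf_{x\in X}|g(x)|$.

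Next I would bound the denominator from below by $(\inf_{x\in X}|g(x)|)^2$ and the numerator from above using the $\gamma$-Hölder continuity of $g$, namely $|g(x'') - g(x')| \leq [g]_{C^{0,\gamma}(X)}\,\tilde{d}(x',x'')^\gamma$. Combining these gives
\begin{equation*}
\left|\frac{1}{g(x')} - \frac{1}{g(x'')}\right| \leq \frac{[g]_{C^{0,\gamma}(X)}}{(\inf_{x\in X}|g(x)|)^2}\,\tilde{d}(x',x'')^\gamma.
\end{equation*}
Since $x', x''$ were arbitrary, taking the supremum over all pairs yields the claimed seminorm bound, and in particular shows the seminorm is finite, so $1/g$ is $\gamma$-Hölder continuous. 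I would also note briefly that $1/g$ is continuous (as the composition of the continuous $g$ with $t \mapsto 1/t$, which is continuous away from $0$) so that $1/g \in C(X)$ and the membership $1/g \in C^{0,\gamma}(X)$ is fully justified.

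There is essentially no obstacle here: the only subtlety is making sure the positivity hypothesis is invoked in two places — once to ensure $1/g$ makes sense at all and once to get the uniform lower bound $|g(x')||g(x'')| \geq (\inf_{x\in X}|g(x)|)^2$ on the denominator — and this is entirely routine. This proposition will presumably be combined later with Proposition~\ref{prop:Holder_product} to handle quotients of Hölder functions (e.g.\ the function $\tau$ built from $\sigma_\kappa$, or log-regression constructions), so stating it with the explicit constant is the natural form.
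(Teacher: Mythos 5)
Your proposal is correct and follows essentially the same argument as the paper: write the difference of reciprocals over a common denominator, bound the numerator by $[g]_{C^{0,\gamma}(X)}\,\tilde{d}(x',x'')^\gamma$, and bound the denominator below by $(\inf_{x\in X}|g(x)|)^2$. The brief extra remark on continuity of $1/g$ is a harmless addition that the paper leaves implicit.
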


\begin{proof}
For any $x', x'' \in X$, we have
\begin{align*}
\left|\frac{1}{g(x')} - \frac{1}{g(x'')}\right| 
&= \left|\frac{1}{g(x')} - \frac{1}{g(x'')}\right| 
= \frac{|g(x') - g(x'')|}{|g(x')||g(x'')|} 
\leq \frac{[g]_{C^{0,\gamma}(X)}}{(\inf_{x \in X} |g(x)|)^2} \tilde{d}(x', x'')^\gamma,
\end{align*}
which establishes the desired result. \hfill
\end{proof}

\begin{prop}
\label{prop:Holder_quotient}
Let $f$ and $g$ be $\gamma$-H\"older continuous functions on $X$. If $g$ satsifies ${\inf_{x \in X} |g(x)| > 0}$, then $f/g$ is also $\gamma$-H\"older continuous, and the H\"older seminorm satisfies
\begin{align*}
    \left[\frac{f}{g}\right]_{C^{0,\gamma}(X)} 
\leq \frac{[f]_{C^{0,\gamma}(X)} \sup_{x \in X} |g(x)| + [g]_{C^{0,\gamma}(X)} \sup_{x \in X} |f(x)|}{(\inf_{x \in X} |g(x)|)^2}.
\end{align*}
\end{prop}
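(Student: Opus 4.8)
The plan is to reduce the statement to the two preceding propositions by writing the quotient as a product, $f/g = f\cdot(1/g)$. Since $\inf_{x\in X}|g(x)|>0$ by hypothesis, Proposition~\ref{prop:Holder_reciprocal} applies and gives that $1/g$ is $\gamma$-Hölder continuous on $X$ with $[1/g]_{C^{0,\gamma}(X)}\leq [g]_{C^{0,\gamma}(X)}/(\inf_{x\in X}|g(x)|)^2$; moreover $1/g$ is automatically bounded, with $\sup_{x\in X}|1/g(x)|\leq 1/\inf_{x\in X}|g(x)|$. Then, provided $f$ is bounded (the unbounded case is discussed below), Proposition~\ref{prop:Holder_product} applies to the pair $f,\,1/g$ and yields that $f/g$ is $\gamma$-Hölder continuous with $[f/g]_{C^{0,\gamma}(X)}\leq [f]_{C^{0,\gamma}(X)}\sup_X|1/g|+[1/g]_{C^{0,\gamma}(X)}\sup_X|f|$.

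The remaining step is purely bookkeeping: substitute the two bounds from Proposition~\ref{prop:Holder_reciprocal} into the last estimate to obtain $[f/g]_{C^{0,\gamma}(X)}\leq [f]_{C^{0,\gamma}(X)}/\inf_X|g|+[g]_{C^{0,\gamma}(X)}\sup_X|f|/(\inf_X|g|)^2$, and weaken the first term using $\inf_X|g|\leq\sup_X|g|$, i.e.\ $1/\inf_X|g|\leq \sup_X|g|/(\inf_X|g|)^2$, which lands exactly on the displayed inequality. (This argument in fact proves a slightly sharper bound, with $[f]_{C^{0,\gamma}(X)}/\inf_X|g|$ in place of $[f]_{C^{0,\gamma}(X)}\sup_X|g|/(\inf_X|g|)^2$.)

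Alternatively, one can argue directly, mirroring the proofs of Propositions~\ref{prop:Holder_product} and~\ref{prop:Holder_reciprocal}: for $x',x''\in X$ write $\bigl|\tfrac{f(x')}{g(x')}-\tfrac{f(x'')}{g(x'')}\bigr|=\tfrac{|f(x')g(x'')-f(x'')g(x')|}{|g(x')||g(x'')|}$, use the identity $f(x')g(x'')-f(x'')g(x')=(f(x')-f(x''))g(x'')+f(x'')(g(x'')-g(x'))$, apply the triangle inequality, and bound the resulting quotients by $\sup_X|f|$, $1/\inf_X|g|$, and the Hölder seminorms of $f$ and $g$ against $\tilde d(x',x'')^\gamma$, again using $\inf_X|g|\leq\sup_X|g|$ to obtain the stated constant.

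I do not anticipate a genuine obstacle; the only point needing a word of care is that Proposition~\ref{prop:Holder_product} is stated for bounded factors. If $\sup_X|f|=\infty$ the asserted inequality is vacuous unless $[g]_{C^{0,\gamma}(X)}=0$, i.e.\ $g$ is a nonzero constant $c$, in which case $f/g=(1/c)f$ and the claim reduces to $[f/g]_{C^{0,\gamma}(X)}=[f]_{C^{0,\gamma}(X)}/|c|$, which matches the right-hand side exactly. In all situations relevant to this paper $X$ is a compact metric graph, so every $\gamma$-Hölder (indeed every continuous) function is bounded and the caveat is moot.
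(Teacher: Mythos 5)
Your proof is correct and takes essentially the same route as the paper's, which likewise writes $f/g=f\cdot(1/g)$ and combines Proposition~\ref{prop:Holder_product} with Proposition~\ref{prop:Holder_reciprocal}. Your extra observations—that the argument actually yields the slightly sharper constant $[f]_{C^{0,\gamma}(X)}/\inf_X|g|$ in the first term, and the boundedness caveat for $f$—are accurate but not needed in the paper's setting, where $X$ is compact.
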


\begin{proof}
The result follows directly from Proposition~\ref{prop:Holder_product}, with $g$ replaced by $1/g$, and using Proposition~\ref{prop:Holder_reciprocal}. \hfill
\end{proof}

\begin{prop}
\label{prop:Holder_composition}
If $f$ and $g$ are H\"older continuous functions, with $f$ being $\gamma$-H\"older continuous on $Y = g(X)$ and $g$ being $\mu$-H\"older continuous on $X$, then the composition $f \circ g$ is $(\gamma \mu)$-H\"older continuous. Moreover, the H\"older seminorm satisfies
\begin{equation}\label{eq:holder3}
[f \circ g]_{C^{0,\gamma \mu}(X)} \leq [f]_{C^{0,\gamma}(Y)} \big([g]_{C^{0,\mu}(X)}\big)^\gamma.    
\end{equation}
\end{prop}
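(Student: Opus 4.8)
The plan is to estimate $|(f\circ g)(x') - (f\circ g)(x'')|$ directly for arbitrary $x', x'' \in X$, chaining the two Hölder estimates. First I would apply the $\gamma$-Hölder bound for $f$ on $Y = g(X)$ at the points $g(x'), g(x'') \in Y$:
\[
|f(g(x')) - f(g(x''))| \leq [f]_{C^{0,\gamma}(Y)}\, \tilde{d}_Y\bigl(g(x'), g(x'')\bigr)^\gamma,
\]
where $\tilde{d}_Y$ denotes the restriction of the metric on the codomain to $Y$ (in the relevant applications the codomain is $\ar$, so $\tilde{d}_Y(a,b) = |a-b|$, but the argument only uses that $g$ maps into a metric space). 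Then I would substitute the $\mu$-Hölder bound for $g$ on $X$, namely $\tilde{d}_Y(g(x'), g(x'')) \leq [g]_{C^{0,\mu}(X)}\, \tilde{d}(x', x'')^\mu$, and raise both sides to the power $\gamma > 0$ (which preserves the inequality since $t \mapsto t^\gamma$ is monotone increasing on $[0,\infty)$). This yields
\[
|f(g(x')) - f(g(x''))| \leq [f]_{C^{0,\gamma}(Y)}\,\bigl([g]_{C^{0,\mu}(X)}\bigr)^\gamma\, \tilde{d}(x', x'')^{\gamma\mu}.
\]
Taking the supremum over all $x' , x'' \in X$ gives both that $f\circ g \in C^{0,\gamma\mu}(X)$ and the seminorm bound \eqref{eq:holder3}.

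There is essentially no obstacle here: the only mild point to be careful about is that the exponent $\gamma\mu$ can a priori exceed $1$ only if both $\gamma = \mu = 1$, in which case $\gamma\mu = 1$ and the statement is still within the range $(0,1]$ for which the Hölder seminorm was defined; for $\gamma, \mu \in (0,1]$ not both equal to one we have $\gamma\mu \in (0,1)$, so the composite exponent is always admissible. A second bookkeeping point is that $f$ need only be $\gamma$-Hölder on the image $Y = g(X)$ rather than on the whole codomain, which is exactly how the statement is phrased, so no extension argument is needed. The continuity of $f\circ g$ (needed to assert membership in $C^{0,\gamma\mu}(X)$ rather than just finiteness of the seminorm) is immediate since Hölder continuous functions are continuous and compositions of continuous functions are continuous. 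This is the same elementary argument as in \cite{fiorenza2017holder}, adapted verbatim to the metric-space setting.
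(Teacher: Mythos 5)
Your argument is exactly the paper's proof: apply the $\gamma$-Hölder estimate for $f$ on $Y=g(X)$, substitute the $\mu$-Hölder bound for $g$, and use monotonicity of $t\mapsto t^\gamma$ to obtain the seminorm bound with constant $[f]_{C^{0,\gamma}(Y)}\bigl([g]_{C^{0,\mu}(X)}\bigr)^\gamma$. The additional remarks on the admissibility of the exponent $\gamma\mu$ and on $f$ only needing to be Hölder on the image are correct but not needed beyond what the paper already does.
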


\begin{proof}
For any $x', x'' \in X$, we have
\begin{align*}
|f(g(x')) - f(g(x''))| 
&\leq [f]_{C^{0,\gamma}(Y)} |g(x') - g(x'')|^\gamma 
\leq [f]_{C^{0,\gamma}(Y)} \big([g]_{C^{0,\mu}(X)} \tilde{d}(x', x'')^\mu\big)^\gamma \\
&= [f]_{C^{0,\gamma}(Y)} \big([g]_{C^{0,\mu}(X)}\big)^\gamma \tilde{d}(x', x'')^{\gamma \mu}.
\end{align*}
Thus, $f \circ g$ is $(\gamma \mu)$-H\"older continuous, and \eqref{eq:holder3} holds. \hfill
\end{proof}

\section{Uniqueness of the solution}
\phantomsection % Ensure the hyperlink points here
\label{app:uniqueness}

We now proceed with a rigorous definition of the differential operator $L = \kappa^2 - \Delta_{\Gamma}$ under Assumption~\ref{assumption1}, along with some of its fundamental properties.  Recall that $\Delta_{\Gamma}$ is the Kirchhoff--Laplacian, acting on functions in $K(\Gamma)$. Specifically, $\Delta_{\Gamma} $ is defined on $K(\Gamma)$ and acts on a function $f \in K(\Gamma)$ in such a way that, for every 
$e \in \mathcal{E}$, we have $ (\Delta_{\Gamma} f)|_e = f_e''$. Thus, the domain of $ \Delta_{\Gamma} $ consists of continuous functions that are twice differentiable on each edge and satisfy the Kirchhoff vertex conditions specified in $ K(\Gamma)$. These conditions enforce a sum-to-zero constraint on the directional derivatives at each vertex. In particular, this implies that the derivative is continuous at vertices of degree 2.

Under these assumptions, $L = \kappa^2 - \Delta_{\Gamma}$ is densely defined, selfadjoint, and positive definite with a compact inverse \citep{Bolin2024Regularity}. Consequently, it has a collection of eigenfunctions $(e_j)_j$, which form an orthonormal basis of $L_2(\Gamma)$, and corresponding eigenvalues $(\lambda_j)_j$.  Moreover, according to \citep[Cor. 2.13]{Bolin2024Regularity}, the eigenvalues $(\lambda_j)_j$ of $L$ satisfy Weyl's law, which states that there exist positive constants $C_1$ and $C_2$ for all $j\in\mathbb{N}$ such that $C_1j^2\leq\lambda_j\leq C_2j^2$. The fractional operator $L^{\sfrac{\alpha}{2}} : D(L^{\sfrac{\alpha}{2}}) \longmapsto L_2(\Gamma)$ is then defined in the spectral sense as
\begin{equation}
\label{lfractional}
        \phi \longmapsto L^{\sfrac{\alpha}{2}}\phi = \displaystyle\sum_{j\in\mathbb{N}}\lambda_j^{\sfrac{\alpha}{2}}(\phi, e_j)_{L_2(\Gamma)}e_j,
\end{equation}
where $ D(L^{\sfrac{\alpha}{2}}) = \dot{H}^{\alpha} = \{\phi\in L_2(\Gamma): \|\phi\|_\alpha<\infty\}$ is a Hilbert space  with inner product $(\phi,\psi)_\alpha = (L^{\sfrac{\alpha}{2}} \phi, L^{\sfrac{\alpha}{2}} \psi)_{L_2(\Gamma)}$ and norm $\|\phi\|_\alpha^2 = \|L^{\sfrac{\alpha}{2}} \phi\|^2_{L_2(\Gamma)} = \sum_{j\in\mathbb{N}}\lambda_j^{\alpha}(\phi, e_j)^2_{L_2(\Gamma)}$. 

Let, now, $M_g:L_2(\Gamma)\to L_2(\Gamma)$ be the multiplication operator by a function $g\in L_{\infty}(\Gamma)$ such that $\essinf_{s\in\Gamma} g(s) > 0$. Then, the operator $M_g$ is self-adjoint and coercive, and hence has a unique inverse $M_g^{-1}$. We, thus, define a solution to equation \eqref{eq:spde} as a centered Gaussian random field $u$ such that for all $h\in L_2(\Gamma)$,
\begin{align*}
    (u,h)_{L_2(\Gamma)} = \mathcal{W}((M_\tau^{-1} L^{-\sfrac{\alpha}{2}})^* h) = \mathcal{W}(L^{-\sfrac{\alpha}{2}}(M_\tau^{-1}h)),
\end{align*}
where $T^*$ denotes the adjoint of a linear operator $T$ and $\mathcal{W}$ is Gaussian white noise on $\Gamma$ introduced in Subsection~\ref{sec:model_class}.

We are now in a position to prove the uniqueness part of Proposition~\ref{prp:existence}.

\begin{proof}[Proof of Proposition~\ref{prp:existence} (uniqueness)]
    Let $\alpha>\sfrac{1}{2}$ and assume that Assumption~\ref{assumption1} holds. Let, $\tau_\infty := \esssup_{s\in\Gamma} \tau(s) <\infty$, since $\tau\in L_\infty(\Gamma)$, and recall that ${\tau_0 := \essinf_{s\in\Gamma} \tau(s) > 0}$. Now, consider the following auxiliary problem:
    \begin{equation}
    \label{eq:auxiliary_problem}
        (\kappa^2 - \Delta_\Gamma)^{\sfrac{\alpha}{2}}w = \mathcal{W}, \quad \hbox{on } \Gamma,
    \end{equation}
    where $\mathcal{W}$ is Gaussian white noise. This problem has a unique solution $w$ by \citet[Proposition 3.2]{Bolin2024Regularity}. Define $\widetilde{u} = \tau^{-1} w$, which is well-defined as ${0 < \tau_0 \leq \tau \leq \tau_\infty <\infty}$. Then, it follows that $\widetilde{u}$ is a solution to \eqref{eq:spde}. Conversely, in the same manner, if $u$ is a solution to \eqref{eq:spde}, then $\widetilde{w} = \tau u$  is a solution to \eqref{eq:auxiliary_problem}. This gives a one-to-one correspondence between solutions to \eqref{eq:spde} and \eqref{eq:auxiliary_problem}. 
    The uniqueness thus follows from the uniqueness of solutions to \eqref{eq:auxiliary_problem}. \hfill
\end{proof}

The next lemma shows that the reciprocal of a function satisfying the Kirchhoff vertex conditions, as well as the product of functions satisfying the Kirchhoff vertex conditions, also satisfy the Kirchhoff vertex conditions. Here, $C^1(e)$ denotes the set of functions that are continuously differentiable on $e$.

\begin{lemma}
\label{lem:kirchhof_product_stable}
    Let $f, g, h \in C(\Gamma)$ be functions such that, for every $e \in \mathcal{E}$, $f_e, g_e, h_e \in C^1(e)$, and they satisfy the Kirchhoff vertex conditions \eqref{eq:kirchhoff_cond}. Then, the product $fg$ also satisfies the Kirchhoff vertex conditions. Moreover, if $h$ satisfies $h(s) \geq h_0$ for all $s \in \Gamma$, where $h_0 > 0$, then $h^{-1}$ also satisfies the Kirchhoff vertex conditions.
\end{lemma}

\begin{proof}
Let $f,g,h$ be as in the statement. Then, $h^{-1}$ and $fg$ are continuous. Further,
\begin{align*}
    \sum_{s\in v} \partial h^{-1}(s) = -\sum_{s\in v} (h^{-1}(s))^2\partial h(s)  = -\frac{1}{(h(v))^2} \sum_{s \in v} \partial h(s) = 0,
\end{align*}
where we used the continuity of $h$ in the last equality. Therefore, $h^{-1}$ satisfies the Kirchhoff vertex conditions. Similarly, by using the continuity of $f$ and $g$,
\begin{align*}
    \sum_{s\in v} \partial (fg)(s) &= \sum_{s\in v} g(s) \partial f(s)  + \sum_{s\in v} f(s) \partial g(s)\\
    &= g(v)\sum_{s\in v} \partial f(s)  + f(v)\sum_{s\in v} \partial g(s) = 0.
\end{align*}
Therefore, $fg$ satisfies the Kirchhoff vertex conditions. \hfill
\end{proof}

In order to prove positive definiteness statement of Proposition~\ref{prp:existence}, we will need the following auxiliary result, which is an extension of \citet[Proposition 1]{Bolin2023Statistical}.

\begin{prop}
\label{prp:positive_definite_cov}
    Let $\alpha>\sfrac{1}{2}$ and let $w$ be the solution to \eqref{eq:auxiliary_problem}. Then, under Assumption~\ref{assumption1}, $w$ has a strictly positive-definite covariance function.
\end{prop}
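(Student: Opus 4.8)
The covariance function $\varrho^\alpha(s,t)=\mathbb{E}[w(s)w(t)]$ is the kernel of the covariance operator of $w$, which, from the weak formulation $(w,h)_{L_2(\Gamma)}=\Wcal(L^{-\sfrac{\alpha}{2}}h)$, equals $L^{-\alpha}$; since $\alpha>\sfrac12$ this operator is trace class, so the kernel is well defined, and by \citet[Proposition 3.2]{bolin2023regularity} (with $\tau\equiv 1$) the field $w$ has a continuous modification, so $\varrho^\alpha\in C(\Gamma\times\Gamma)$ is a symmetric, continuous, positive semi-definite kernel. The plan is to upgrade positive semi-definiteness to strict positive definiteness using the reproducing kernel Hilbert space (Cameron--Martin space) of $w$, namely $\mathcal{H}_w=\dot{H}^{\alpha}$ equipped with the inner product $(\cdot,\cdot)_\alpha$, whose reproducing kernel is $\varrho^\alpha$. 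First I would record that, because $\alpha>\sfrac12$, point evaluation $\delta_s:f\mapsto f(s)$ is a bounded linear functional on $\dot{H}^{\alpha}$ (this follows from the embedding $\dot{H}^{\alpha}\hookrightarrow C(\Gamma)$, or directly from $\delta_s=(\,\cdot\,,\varrho^\alpha(s,\cdot))_\alpha$ with $\varrho^\alpha(s,\cdot)\in\dot{H}^{\alpha}$), and that, writing $g_s\in\dot{H}^{\alpha}$ for its Riesz representative, one has $g_s=\varrho^\alpha(s,\cdot)$ and $(g_s,g_t)_\alpha=\varrho^\alpha(s,t)$.

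Given distinct points $s_1,\dots,s_n\in\Gamma$ and $c\in\ar^n$, these identities give
\[
\sum_{i,k=1}^n c_ic_k\,\varrho^\alpha(s_i,s_k)=\norm{\sum_{i=1}^n c_ig_{s_i}}_\alpha^2\ge 0,
\]
with equality if and only if $\sum_{i}c_ig_{s_i}=0$ in $\dot{H}^{\alpha}$, equivalently if and only if the functionals $\delta_{s_1},\dots,\delta_{s_n}$ are linearly dependent on $\dot{H}^{\alpha}$. So the crux is to show that these evaluation functionals are linearly independent, which I would do by exhibiting, for each $j$, a test function $f_j\in\dot{H}^{\alpha}$ with $f_j(s_j)\ne 0$ and $f_j(s_i)=0$ for $i\ne j$. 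If $s_j$ lies in the interior of an edge, take $f_j$ to be a smooth bump supported in a small sub-interval of that edge that contains $s_j$ but none of the other $s_i$ and no vertex. If $s_j$ is a vertex, take $f_j$ supported in a small neighbourhood of $s_j$, smooth on each incident edge, equal to $1$ at $s_j$ with vanishing edge-derivatives at $s_j$ (so the Kirchhoff condition holds at $s_j$, and trivially at the far endpoints where $f_j\equiv 0$), with the neighbourhood chosen small enough to avoid the remaining points. In every case $f_j$ is smooth on each edge and satisfies the Kirchhoff vertex conditions, so $f_j\in K(\Gamma)=D(L)\subseteq D(L^{\sfrac{\alpha}{2}})=\dot{H}^{\alpha}$, where $\alpha\le 2$ is used so that membership in $D(L)$ suffices. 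Testing against $f_j$ yields $0=(\sum_i c_ig_{s_i},f_j)_\alpha=\sum_i c_if_j(s_i)=c_jf_j(s_j)$, hence $c_j=0$; as $j$ was arbitrary, $c=0$, and therefore $\varrho^\alpha$ is strictly positive definite.

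I expect the main obstacle to be the vertex case in the last step: ensuring that the localized cutoff near a vertex genuinely lies in $\dot{H}^{\alpha}=D(L^{\sfrac{\alpha}{2}})$ — which, in the subrange $\alpha>\sfrac32$, really does require the Kirchhoff conditions and not merely continuity — together with the accompanying analytic bookkeeping: justifying the embedding $\dot{H}^{\alpha}\hookrightarrow C(\Gamma)$ for $\alpha>\sfrac12$, and identifying the analytic Riesz representative $g_s$ with the probabilistic object $\varrho^\alpha(s,\cdot)$, i.e. that the RKHS of the kernel $\varrho^\alpha$ coincides, as a set and with matching inner product, with $\dot{H}^{\alpha}$. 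The restriction $\alpha\le 2$ plays no essential role here beyond keeping us inside $D(L)$ and within the domain characterisations available under Assumption~\ref{assumption1}; the same scheme would extend to larger $\alpha$ once the corresponding spaces and test functions are in hand.
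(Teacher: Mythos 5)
Your proof is correct, and it diverges from the paper's argument precisely at the step where positive semi-definiteness is upgraded to strict positive definiteness. Both arguments rest on the same identification: the covariance operator of $w$ is $L^{-\alpha}$, so the reproducing kernel Hilbert space of its covariance kernel coincides with the Cameron--Martin space $\dot{H}^{\alpha} = D(L^{\sfrac{\alpha}{2}})$ equipped with $(\cdot,\cdot)_\alpha$ (the paper handles this via \citet[p.~44]{bogachev1998gaussian} and \citet[Corollary 8.16]{janson1997gaussian}, and the embedding $\dot{H}^{\alpha}\hookrightarrow C(\Gamma)$ via \citet[Theorem 4.1]{bolin2023regularity}; you may borrow the same citations for the bookkeeping you flagged). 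From there the paper proves the stronger fact that $\dot{H}^{\alpha}$ is \emph{dense in $C(\Gamma)$}: it introduces the class $\mathcal{A}(\Gamma)$ of constants, edge-interior bumps and symmetric star functions, shows $\mathcal{A}(\Gamma)\subset\dot{H}^{\alpha}$ for $\sfrac{1}{2}<\alpha\leq 2$, invokes a density lemma from \citet{bolin2023statistical}, and then applies a universality-type criterion (density of the RKHS in $C(\Gamma)$ implies strict positive definiteness). You instead prove only what is strictly needed, namely that the evaluation functionals $\delta_{s_1},\dots,\delta_{s_n}$ are linearly independent on $\dot{H}^{\alpha}$, by exhibiting interpolating functions in $K(\Gamma)=D(L)\subseteq D(L^{\sfrac{\alpha}{2}})$ (the inclusion being where $\alpha\leq 2$ enters, exactly as in the paper). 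Your vertex construction --- value $1$ at the vertex with vanishing outward derivatives on each incident edge and compact support --- is precisely the kind of element the paper's class $\mathcal{S}_c(\Gamma)\cap\mathcal{N}(v,\Gamma)$ is built to contain, so the two proofs draw on the same characterization of the operator domain; yours is more elementary and self-contained (no density lemma, no universality criterion), while the paper's establishes a reusable and strictly stronger structural property of $\dot{H}^{\alpha}$.
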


\begin{proof}
Let $\rho(\cdot,\cdot)$ be the covariance function of $w$. Now, observe that by \citet[p. 44]{Bogachev1998Gaussian}, the Cameron-Martin space associated to $w$ is $\dot{H}^\alpha = D(L^{\sfrac{\alpha}{2}})$. Further, by \citet[Corollary 8.16]{Janson1997Gaussian}, the reproducing kernel Hilbert space associated to $\rho$ is the Cameron-Martin space associated to $w$, that is, $\dot{H}^\alpha$. By \citet{Sriperumbudur2011Universality}, a covariance kernel is strictly positive-definite if its reproducing kernel Hilbert space is dense in $C(\Gamma)$. Thus, it suffices to show that $\dot{H}^\alpha$ is dense in $C(\Gamma)$. To this end, we follow the proof of \citet[Proposition 1]{Bolin2023Statistical} and introduce a function class $\mathcal{A}(\Gamma)$. Using the Stone--Weierstrass theorem, $\mathcal{A}(\Gamma)$ can be shown to be dense in $C(\Gamma)$ (see \citet[Lemma 2]{Bolin2023Statistical}). It then remains to verify that $\mathcal{A}(\Gamma)\subset \dot{H}^\alpha$. Accordingly, we define
\begin{align*}
    \mathcal{A}(\Gamma) = \langle 1 \rangle + D(\Gamma) + \mathcal{S}_c(\Gamma) = \{c + f + g: c \in \mathbb{R}, f \in D(\Gamma), g \in \mathcal{S}_c(\Gamma)\}, 
\end{align*}
where $ \langle 1 \rangle = \mathrm{span}\{1\} $ is the space of constant functions on $ \Gamma $, $ D(\Gamma) = \bigoplus_{e \in \mathcal{E}} C^\infty_c(e) $ is the space of functions with support in the union of the interiors of edges, whose restrictions to the edges are infinitely differentiable, and $ \mathcal{S}_c(\Gamma) $ is a subspace defined as:
\begin{equation}
    \label{subspaceSc}
    \mathcal{S}_c(\Gamma) = \{f \in C(\Gamma) : \exists v \in \mathcal{V}, f \in \mathcal{S}_c(v, \Gamma) \cap \mathcal{N}(v, \Gamma)\},
\end{equation}
where $ \mathcal{S}_c(v, \Gamma) $ contains functions compactly supported within the star graph induced by the vertex $ v $ with the outer vertices removed, that is, supported on
\begin{align*}
    S(v, \Gamma) = \left\{ s\in \Gamma\;:\; s = (t,e), e\in\mathcal{E}_v, e = [0,\ell_e], t\in (0,\ell_e)\right\}\cup \{v\},
\end{align*}
and 
\begin{align*}
    \mathcal{N}(v, \Gamma) = \left\{ f \in C(\Gamma) \; : \; \forall e \in \mathcal{E}_v, f|_e \in C^\infty(e), \; f_e^{(k)}(v)=0, \; \forall k \in \mathbb{N}
\right\}.
\end{align*}
We note that although the subspace $\mathcal{S}_c(\Gamma)$ defined in \eqref{subspaceSc} differs slightly from that considered in the proof of \citet[Proposition 1]{Bolin2023Statistical}, the argument establishing the existence of nontrivial elements in that space remains valid in the present setting.

Let $\alpha\in(\sfrac{1}{2},2]$. Then it immediately follows from \citet[Theorem 4.1]{Bolin2024Regularity} that $\mathcal{A}(\Gamma) \subset \dot{H}^\alpha$. Let $\mathbb{A} = (2,\infty)\setminus\{k+\sfrac{1}{2}:k\in\mathbb{N}\setminus\{1\}\}$ and 
\begin{equation}
\label{eq:E_alpha_space}
E^\alpha=  \left\{f \in \bigoplus_{e \in \mathcal{E}} H^\alpha(e): L^{m_0} f \in C(\Gamma)\text{ and }\forall m \in\left\{0, \ldots,m_1\right\}, L^m f \in \mathcal{K}(\Gamma)\right\},
\end{equation}
where $m_0= \left\lfloor\sfrac{\alpha}{2}-\sfrac{1}{4}\right\rfloor$ and $m_1 = \left\lfloor\sfrac{\alpha}{2}-\sfrac{3}{4}\right\rfloor$, $\mathcal{K}(\Gamma)$  represents the Kirchhoff vertex conditions \eqref{eq:kirchhoff_cond}, and $H^\alpha(e)$ (see \citet[Definition 4]{Awadelkarim2025Fractional}) denotes the fractional Sobolev space of order $\alpha$ on the interval corresponding to the edge $e$. The space $E^\alpha$ is endowed with the norm $\|\cdot\|_{\bigoplus_{e \in \mathcal{E}} H^\alpha(e)}$. Let $\alpha\in \mathbb{A}$. Then we can use the characterization $E^\alpha\cong \dot{H}^\alpha$ in \citet[Theorem 14]{Awadelkarim2025Fractional} to show that $\mathcal{A}(\Gamma) \subset \dot{H}^\alpha$ as follows. Let $h\in \mathcal{A}(\Gamma)$. Then $h = c+f+g$, where $c\in\mathbb{R}$, $f\in D(\Gamma)$, and $g\in \mathcal{S}_c(\Gamma)$. Our goal is to prove that $h\in E^\alpha$, for which it suffices to show that $c\in E^\alpha$, $f\in E^\alpha$, and $g\in E^\alpha$ since $E^\alpha$ is a vector space. We proceed by treating each term separately.

\textbf{Step 1} (Proof that $c\in E^\alpha$): Clearly, $c\in \bigoplus_{e \in \mathcal{E}} H^\alpha(e)$ and $\Delta_\Gamma c = 0$, so $Lc = \kappa^2 c$ and thus $L^n c = c\, L^{n-1}(\kappa^2)$ for all $n=1,2,\ldots$. If $\alpha<2.5$, then $m_0=0$ and $m_1=0$, so the definition of $E^\alpha$ only requires $c\in C(\Gamma)\cap\mathcal{K}(\Gamma)$, which holds trivially. Now assume $\alpha>2.5$, so that $m_0\geq 1$. By Assumption~\ref{assumption2}, we have $L^{m_0-1}\kappa^2\in C(\Gamma)$, whence $L^{m_0}c = c\,L^{m_0-1}(\kappa^2)\in C(\Gamma)$. If additionally $\alpha>3.5$, then $m_1\geq 1$ and Assumption~\ref{assumption2} further yields $L^{m-1}\kappa^2\in\mathcal{K}(\Gamma)$ for $m=1,\ldots,m_1$. Hence, $L^{m}c = c\,L^{m-1}(\kappa^2)\in\mathcal{K}(\Gamma)$ for these $m$, while $c\in\mathcal{K}(\Gamma)$ for $m=0$ holds trivially. Thus $L^{m}c\in\mathcal{K}(\Gamma)$ for all $m\in\{0,\ldots,m_1\}$, and consequently $c\in E^\alpha$.

\begin{figure}[t]
\centering
\includegraphics[width=0.99\textwidth]{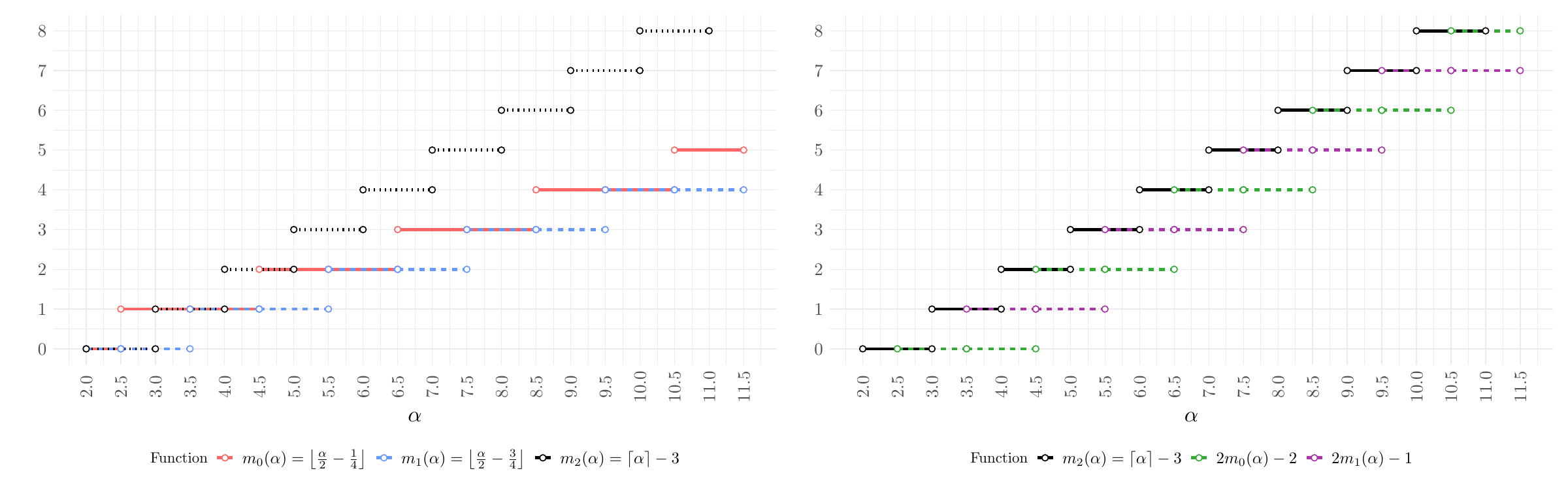}
\caption{Comparison of $m_0(\alpha) = \left\lfloor \sfrac{\alpha}{2} - \sfrac{1}{4} \right\rfloor$, 
$m_1(\alpha) = \left\lfloor \sfrac{\alpha}{2} - \sfrac{3}{4} \right\rfloor$, and 
$m_2(\alpha) = \lceil \alpha \rceil - 3$ (left) and comparison of $m_2(\alpha)$, $2m_0(\alpha)-2$, and $2m_1(\alpha)-1$ (right) as functions of $\alpha$.}
\label{mplots}
\end{figure}

\textbf{Step 2} (Proof that $f\in E^\alpha$):
Since $f|_e \in C_c^\infty(e)$ and $C_c^\infty(e)\subset H^\alpha(e)$, we immediately have that $f \in \bigoplus_{e \in \mathcal{E}} H^\alpha(e)$. Let $m=0$. Then $L^mf=f \in \mathcal{K}(\Gamma)$, trivially, since $f$ vanishes near vertices. Let $m=1$. Then on each edge, $L^mf = \kappa^2f-f^{''}\in  \mathcal{K}(\Gamma)$ since the Kirchhoff conditions are stable under multiplication of continuous functions (see Lemma~\ref{lem:kirchhof_product_stable}). Let $m=2$. Then on each edge, 
\begin{align*}
    L^mf = L(\kappa^2f-f^{''}) = \kappa^4f-2\kappa^2f''-2(\kappa'\kappa'+\kappa\kappa'')f-4\kappa\kappa'f'+f^{(4)}.
\end{align*}
Since $f|_e \in C_c^\infty(e)$ and $\kappa_e\in C^{2m-2,1}(e)$ for all $e\in\mathcal{E}$ we have $(L^mf)(v) = 0$ for all $v\in\mathcal{V}$, and consequently $L^mf\in C(\Gamma)$. The assumption $\kappa_e\in C^{2m-1,1}(e)$ for all $e\in \mathcal{E}$ implies that $\partial_e (L^mf)(v)$ is well-defined for all $v\in\mathcal{V}$. Again, $f|_e \in C_c^\infty(e)$ for all $e\in\mathcal{E}$ implies that $\partial_e (L^mf)(v)=0$ for all $v\in\mathcal{V}$, which in turn implies that $L^mf\in \mathcal{K}(\Gamma)$. Let $m_2 = \lceil \alpha \rceil - 3$. From the preceding discussion, we see that to ensure $L^{m_0} f \in C(\Gamma)$ it is sufficient that $\kappa_e\in C^{2m_0-2,1}(e)$ for all $e\in \mathcal{E}$. Moreover, to guarantee that $ L^m f \in \mathcal{K}(\Gamma)$ for all $m \in\left\{0, \ldots,m_1\right\}$ it is sufficient to require $\kappa_e\in C^{2m_1-1,1}(e)$ for all $e\in \mathcal{E}$. Both conditions are implied by Assumption~\ref{assumption1}, since $2m_0-2\leq m_2$ and $2m_1-1\leq m_2$, as illustrated in Figure~\ref{mplots}. Suppose now $L^mf \in \mathcal{K}(\Gamma)$. Then $L^{m+1}f=L(L^mf) = \kappa^2L^mf-(L^mf)^{''}$. Clearly, $\kappa^2L^mf\in  \mathcal{K}(\Gamma)$. The term $(L^mf)^{''}$ contains up to $2(m+1)-2$ derivatives of $\kappa$ and since $\kappa_e\in C^{2(m+1)-1,1}(e)$ for all $e\in \mathcal{E}$, $\partial_e(L^mf)^{''}(v)$ is well-defined for all $v\in\mathcal{V}$. Because of the compact support assumption on edges of $f$, $(L^mf)^{''}\in  \mathcal{K}(\Gamma)$. We conclude that $L^{m+1}f \in  \mathcal{K}(\Gamma)$. By induction, we have shown that $L^{m}f \in  \mathcal{K}(\Gamma)$ for all $m\in\mathbb{N}_0$, and consequently, $f\in E^\alpha$.

\textbf{Step 3} (Proof that $g\in E^\alpha$):
Since $g\in \mathcal{S}_c(\Gamma)$, there exists $v\in\mathcal{V}$ for which $g$ is compactly supported within the star graph induced by $v$, and for all $e\in\mathcal{E}_v$ we have that $g|_e \in C^\infty(e)$ and $ g_e^{(k)}(v)=0$, $\forall k \in \mathbb{N}$. Again, the vanishing derivative condition in $\mathcal{N}(v, \Gamma)$ and the smoothness condition $g|_e \in C^\infty(e)$ for all edges 
imply that $g\in E$.

Steps 1, 2, and 3 show that $h\in E^\alpha$, and by the characterization $E^\alpha\cong \dot{H}^\alpha$ we conclude that $\mathcal{A}(\Gamma) \subset \dot{H}^\alpha$ for $\alpha\in\mathbb{A}$. If $\alpha\in\{k+\sfrac{1}{2}:k\in\mathbb{N}\setminus\{1\}\}$, then we have that $\mathcal{A}(\Gamma) \subset \dot{H}^{\lceil\alpha\rceil}(\Gamma)$ as $\lceil\alpha\rceil\in\mathbb{A}$.  This and the fact that $\dot{H}^{\lceil\alpha\rceil}\subset  \dot{H}^\alpha$ imply that $\mathcal{A}(\Gamma) \subset \dot{H}^\alpha$. Altogether, we have shown that $\mathcal{A}(\Gamma) \subset \dot{H}^\alpha$ for $\alpha\in(\sfrac{1}{2},\infty)$. Now, to apply Stone--Weierstrass theorem, we must show that $\mathcal{A}(\Gamma)$ is a subalgebra of $C(\Gamma)$ that contains the constant functions and separates points of $\Gamma$. The proof that $\mathcal{A}(\Gamma)$ is a subalgebra of $C(\Gamma)$ and separates points is identical to the one in \cite[Lemma 2]{Bolin2023Statistical}, whereas we showed above that $\mathcal{A}(\Gamma)$ contains the constant function. Thus, we have that $\mathcal{A}(\Gamma)$ is dense in $C(\Gamma)$ with respect to the $\|\cdot\|_{C(\Gamma)}$ norm. Therefore, $\dot{H}^\alpha$ is dense in $C(\Gamma)$ with respect to the $\|\cdot\|_{C(\Gamma)}$ norm, which, by \citet[Proposition 10 and Remark~4]{Bolin2023Statistical}, implies that $\rho$ is strictly positive-definite. \hfill
\end{proof}

We are now ready to establish the positive definiteness statement in Proposition~\ref{prp:existence}.

\begin{proof}[Proof of Proposition~\ref{prp:existence} (positive definiteness)]
The strict positive definiteness follows directly from the assumptions on $\tau$ and Proposition~\ref{prp:positive_definite_cov}. Specifically, let $w$ be the solution to \eqref{eq:auxiliary_problem}, and note that $u = \tau^{-1}w$. The covariance function of $u$, denoted by $\varrho(\cdot, \cdot)$, is 
\begin{align*}
    \varrho(s, t) = \mathbb{E}[u(s)u(t)] = \mathbb{E}[\tau^{-1}(s) w(s) w(t) \tau^{-1}(t)] = \tau^{-1}(s) \rho(s, t) \tau^{-1}(t), \quad s, t \in \Gamma.
\end{align*}
By Proposition~\ref{prp:positive_definite_cov}, the function $\rho(\cdot, \cdot)$ is strictly positive-definite. Thus, for any $n \in \mathbb{N}$, any nonzero vector $\boldsymbol{a} = (a_1, \ldots, a_n) \in \mathbb{R}^n$, and any points $s_1, \ldots, s_n \in \Gamma$, we have:
\begin{align*}
    \sum_{i,j=1}^n a_i \rho(s_i, s_j) a_j > 0.
\end{align*}
Since $\tau^{-1}$ is strictly positive by assumption, the vector $\boldsymbol{b} = (a_1 \tau^{-1}(s_1), \ldots, a_n \tau^{-1}(s_n))$ is also nonzero. Thus, we can write:
\begin{align*}
    0 < \sum_{i,j=1}^n b_i \rho(s_i, s_j) b_j = \sum_{i,j=1}^n a_i \tau^{-1}(s_i) \rho(s_i, s_j) \tau^{-1}(s_j) a_j = \sum_{i,j=1}^n a_i \varrho(s_i, s_j) a_j.
\end{align*}
This shows that $\varrho(\cdot, \cdot)$ is strictly positive-definite, completing the proof. \hfill
\end{proof}

\section{The role of $\tau(\cdot)$ in the regularity of the field}
\phantomsection % Ensure the hyperlink points here

In this section, we provide proofs for Propositions~\ref{prp:regularity} and~\ref{prp:regularity-global}. We begin with the proof of statement (i) of Proposition~\ref{prp:regularity} and then proceed to prove statement (i) of Proposition~\ref{prp:regularity-global}. Our arguments rely on the auxiliary problem \eqref{eq:auxiliary_problem} and two preparatory lemmata. Lemma~\ref{lem:lem0003} provides Sobolev--H\"older embeddings on metric graphs, while Lemma~\ref{lem:lem99} gives H\"older regularity of the auxiliary solution.

\begin{lemma}
    \label{lem:lem0003}
    Let $\alpha\in(\sfrac{1}{2},\infty)$. If $\alpha\in(\sfrac{1}{2},\sfrac{3}{2}]$, then $\dot{H}^\alpha\hookrightarrow C^{0,\gamma}(\Gamma)$ for any $0<\gamma<\alpha-\sfrac{1}{2}$. If $\alpha\in(\sfrac{3}{2},\infty)$, then $\dot{H}^\alpha\hookrightarrow C^{0,\gamma}(\Gamma)$ for any $0<\gamma\leq 1$.
\end{lemma}

\begin{proof}
    Recall from \citet[Proposition 5.2]{Bolin2024Regularity} that $\dot{H}^\alpha \cong H^\alpha(\Gamma)$ whenever $\alpha\in(\sfrac{1}{2},2]$. Assumption~\ref{assumption1} allows to apply \citet[Theorem 6]{Awadelkarim2025Fractional}, which together 
    with the previous identification imply that $\dot{H}^\alpha\hookrightarrow C^{0,\gamma}(\Gamma)$ for any $0<\gamma<\alpha-\sfrac{1}{2}$ if $\alpha\in(\sfrac{1}{2},\sfrac{3}{2}]$. This proves the first part. \citet[Theorem 6]{Awadelkarim2025Fractional} also implies that $\dot{H}^\alpha\hookrightarrow C^{0,\gamma}(\Gamma)$ for any $0<\gamma\leq 1$ if $\alpha > 3/2$. 
    \hfill
\end{proof}
The following lemma is an adaptation of \cite[Lemma 6]{Bolin2024Regularity}.
\begin{lemma}
\label{lem:lem99}
Let $\alpha>\sfrac{1}{2}$, and let $w$ be the solution of \eqref{eq:auxiliary_problem}. Then $w$ admits a modification with $\gamma$-H\"older continuous sample paths for any $\gamma$ satisfying
\begin{align*}
    \begin{cases}
0<\gamma<\alpha-\sfrac{1}{2}, & \text{if } \alpha\in(\sfrac{1}{2},\sfrac{3}{2}],\\
0<\gamma\leq 1, & \text{if } \alpha\in(\sfrac{3}{2},\infty).
\end{cases}
\end{align*}
\end{lemma}
\begin{proof}
    As in the proof of \cite[Lemma 6]{Bolin2024Regularity}, we only need to stablish that $L^{-\sfrac{\alpha}{2}}:L_2(\Gamma)\to C^{0,\gamma}(\Gamma)$ is a bounded linear operator. This follows by combining Lemma~\ref{lem:lem0003} and the fact that $L^{-\sfrac{\alpha}{2}}:L_2(\Gamma)\to\dot{H}^\alpha$ is a bounded linear operator. \hfill
\end{proof}

\begin{proof}[Proof of statement (i) of Proposition~\ref{prp:regularity}]
    Let $\alpha\in(\sfrac{1}{2},\sfrac{3}{2})$, and let $w$ be a solution to \eqref{eq:auxiliary_problem}. Observe that $ u = \tau^{-1} w $ is a solution to \eqref{eq:spde}. By  Lemma~\ref{lem:lem99}, $w$ has a modification with $\gamma$-H\"older continuous sample paths on $\Gamma$ for $0<\gamma<\alpha-\sfrac{1}{2}$. For the remaining of the proof we will assume we are considering such a modification. This implies that for every $ e \in \mathcal{E} $, $ w_e $ has $\gamma$-H\"older continuous sample paths. 

    Under Assumption~\ref{assumption1}, note that $\tau^{-1}$ is bounded and bounded away from zero. Moreover, since $ w $ has continuous sample paths, these sample paths are bounded. Therefore, by Proposition~\ref{prop:Holder_quotient}, it follows that, for every $ e \in \mathcal{E} $, the sample paths of $ u_e = \tau^{-1} w_e $ are $\gamma$-H\"older continuous. \hfill
\end{proof}

\begin{proof}[Proof of statement (i) of Proposition~\ref{prp:regularity-global}]
    Let $\alpha>\sfrac{1}{2}$ and let $ w $ be as in the proof of statement (i) of Proposition~\ref{prp:regularity}, so that $ w $ has a modification with $\gamma$-H\"older continuous sample paths on $\Gamma$.  Assume that $ \tau \in C(\Gamma) $ and $ \tau_e \in C^{0,\gamma}(e) $ for all $ e \in \mathcal{E} $. Since $ w $ has a modification with $\gamma$-H\"older continuous sample paths, if $ \tau $ is continuous on $\Gamma$, then $ u = \tau^{-1} w $ has a modification with continuous sample paths. Conversely, assume that $ \tau_e \in C^{0,\gamma}(e) $ for all $ e \in \mathcal{E} $, and that $ u $ has a modification with continuous sample paths. Then, with probability 1, for every $ v \in \mathcal{V} $, we have
    $u_e(v) = u_{e'}(v)$, for all $e, e' \in \mathcal{E}_v$.
    Substituting $ u = \tau^{-1} w $, we thus have 
    $\tau_e^{-1}(v) w_e(v) = \tau_{e'}^{-1}(v) w_{e'}(v)$, for all $e, e' \in \mathcal{E}_v$.
    Since $ w $ has almost surely continuous sample paths, it follows that
    $w_e(v) = w_{e'}(v)$, for all $e, e' \in \mathcal{E}_v$.
    Furthermore, by Proposition~\ref{prp:positive_definite_cov}, $ w $ has a strictly positive-definite covariance function. Thus, for every $ v \in \mathcal{V} $ and every $ e \in \mathcal{E}_v $, we have $ \mathbb{P}(w_e(v) \neq 0) = 1 $. Consequently, 
    \begin{equation}
    \label{eq:tau_continuity}
        \tau_e^{-1}(v) = \tau_{e'}^{-1}(v), \quad \text{for all } e, e' \in \mathcal{E}_v.
    \end{equation}
    Because of the continuity of $ \tau_e $ on each edge $ e \in \mathcal{E} $, equation \eqref{eq:tau_continuity} ensures the continuity of $ \tau $ at all vertices $ v \in \mathcal{V} $. Therefore, $ \tau $ is continuous on $\Gamma$.

    Finally, as the sample paths of $ w $ are bounded,  Proposition~\ref{prop:Holder_quotient} implies that if ${\tau \in C^{0,\gamma}(\Gamma)}$ and $ \tau $ satisfies Assumption~\ref{assumption1}, then the sample paths of $ u = \tau^{-1} w $ are $\gamma$-H\"older continuous. \hfill
\end{proof}

Our goal now is to prove statement (ii) from Proposition~\ref{prp:regularity} and statement (ii) from Proposition~\ref{prp:regularity-global}. To such an end, we will need several auxiliary results. Let us first prove the following lemma, which is an extension of \citet[Lemma~1]{Bolin2024Regularity}.

\begin{lemma}
\label{lem:regularity_field_kolm_chent}
Fix $\alpha>\sfrac{1}{2}$ and let $\tilde{\alpha} = \min\{\alpha-\sfrac{1}{2}, 1\}$. Further, let $\widetilde{\Gamma}\subset \Gamma$ be a compact and connected metric graph contained in $\Gamma$. Assume that $T_\alpha : L_2(\Gamma) \to \dot{H}^{\alpha}(\widetilde{\Gamma})$ is a bounded linear operator. For a location $s\in\Gamma$, define $u_0 := \mathcal{W}(T_\alpha^\ast \delta_s)$, where $\delta_s$ is the Dirac measure concentrated at $s$ and $\mathcal{W}$ is Gaussian white noise on $L_2(\Gamma)$. Then,
$$\mathbb{E}\left( |u_0(s) - u_0(s')|^2 \right) \leq \|T_\alpha\|_{\mathcal{L}(L_2(\Gamma), C^{0,\tilde{\alpha}}(\widetilde{\Gamma}))}^{2} d(s,s')^{2\tilde{\alpha}}.$$
Thus, given $0<\gamma<\tilde{\alpha}$, $u_0$ has a modification with $\gamma$-H\"older continuous sample paths.
\end{lemma}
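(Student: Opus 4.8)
The plan is to express the increment variance of $u_0$ as an $L_2(\Gamma)$-distance, estimate that distance using the H\"older regularity of the range of $T_\alpha$, and then upgrade the resulting second-moment bound to a sample-path statement via a Kolmogorov--Chentsov argument. First I would make $u_0$ precise. For $s\in\widetilde{\Gamma}$, the functional $g\mapsto (T_\alpha g)(s)$ on $L_2(\Gamma)$ is bounded, since
\[
|(T_\alpha g)(s)|\le \|T_\alpha g\|_{C(\widetilde{\Gamma})}\le \|T_\alpha g\|_{C^{0,\tilde{\alpha}}(\widetilde{\Gamma})}\le \|T_\alpha\|_{\mathcal{L}(L_2(\Gamma),C^{0,\tilde{\alpha}}(\widetilde{\Gamma}))}\,\|g\|_{L_2(\Gamma)},
\]
where we use that $T_\alpha$ maps $L_2(\Gamma)$ boundedly into $C^{0,\tilde{\alpha}}(\widetilde{\Gamma})$ --- this is the norm appearing in the statement, and it follows from the stated $H^{\tilde{\alpha}}(\widetilde{\Gamma})$-boundedness via the embedding of $H^{\tilde{\alpha}}(\widetilde{\Gamma})$ into $C^{0,\tilde{\alpha}}(\widetilde{\Gamma})$. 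Hence $\delta_s$ is a bounded functional on the range of $T_\alpha$, so $T_\alpha^{\ast}\delta_s$ is a well-defined element of $L_2(\Gamma)$, characterized by $(T_\alpha^{\ast}\delta_s,g)_{L_2(\Gamma)}=(T_\alpha g)(s)$ for all $g\in L_2(\Gamma)$, and $u_0(s):=\mathcal{W}(T_\alpha^{\ast}\delta_s)$ defines a centered Gaussian process on $\widetilde{\Gamma}$, since finite linear combinations $\sum_k c_k u_0(s_k)=\mathcal{W}\big(\sum_k c_k T_\alpha^{\ast}\delta_{s_k}\big)$ are Gaussian.

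The heart of the argument is then the increment estimate. Using linearity of $\mathcal{W}$ and the defining relation $\mathbb{E}[\mathcal{W}(h)\mathcal{W}(g)]=(h,g)_{L_2(\Gamma)}$,
\[
\mathbb{E}\big(|u_0(s)-u_0(s')|^2\big)=\|T_\alpha^{\ast}\delta_s-T_\alpha^{\ast}\delta_{s'}\|_{L_2(\Gamma)}^2=\sup_{\|g\|_{L_2(\Gamma)}\le 1}\big|(T_\alpha g)(s)-(T_\alpha g)(s')\big|^2 .
\]
For any such $g$, since $T_\alpha g\in C^{0,\tilde{\alpha}}(\widetilde{\Gamma})$,
\[
\big|(T_\alpha g)(s)-(T_\alpha g)(s')\big|\le [T_\alpha g]_{C^{0,\tilde{\alpha}}(\widetilde{\Gamma})}\,d(s,s')^{\tilde{\alpha}}\le \|T_\alpha\|_{\mathcal{L}(L_2(\Gamma),C^{0,\tilde{\alpha}}(\widetilde{\Gamma}))}\,d(s,s')^{\tilde{\alpha}},
\]
and taking the supremum over $g$ and squaring gives exactly the claimed bound. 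For the sample-path conclusion, I would note that $u_0(s)-u_0(s')$ is centered Gaussian, so for every $p\ge 1$ there is $c_p>0$ with $\mathbb{E}(|u_0(s)-u_0(s')|^p)=c_p\big(\mathbb{E}|u_0(s)-u_0(s')|^2\big)^{p/2}\le c_p\,\|T_\alpha\|_{\mathcal{L}(L_2(\Gamma),C^{0,\tilde{\alpha}}(\widetilde{\Gamma}))}^{p}\,d(s,s')^{p\tilde{\alpha}}$. Given $0<\gamma<\tilde{\alpha}$, choosing $p$ with $p(\tilde{\alpha}-\gamma)>1$ and invoking the Kolmogorov--Chentsov continuity theorem for fields indexed by a (one-dimensional) compact metric graph --- as used in \citet{bolin2023regularity} --- yields a modification of $u_0$ whose sample paths are $\gamma$-H\"older continuous.

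I expect the only steps requiring genuine care to be (i) the admissibility of $\delta_s$, i.e.\ verifying that point evaluation is bounded on the range of $T_\alpha$ so that $T_\alpha^{\ast}\delta_s$, and hence $\mathcal{W}(T_\alpha^{\ast}\delta_s)$, are well defined --- this is precisely where the boundedness hypothesis on $T_\alpha$ (and its consequence $T_\alpha : L_2(\Gamma)\to C^{0,\tilde{\alpha}}(\widetilde{\Gamma})$) enters; and (ii) the application of the Kolmogorov--Chentsov theorem in the metric-graph setting, including the covering-number/dimension input (immediate here, since metric graphs are ``one-dimensional'') needed for a finite moment order $p$ to reach every exponent $\gamma<\tilde{\alpha}$. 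The remaining manipulations are routine; in particular no use of Assumption~\ref{assumption1} or of the specific form of $L$ is needed, since $T_\alpha$ is treated as an abstract bounded operator.
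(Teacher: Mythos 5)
Your proposal is correct and follows essentially the same route as the paper, which simply defers to \citet[Lemma 1]{bolin2022gaussian}/\citet[Lemma 1]{bolin2023regularity}: establish that $T_\alpha$ maps boundedly into $C^{0,\tilde{\alpha}}(\widetilde{\Gamma})$ so that $T_\alpha^{\ast}\delta_s\in L_2(\Gamma)$ is well defined, identify the increment variance with $\|T_\alpha^{\ast}(\delta_s-\delta_{s'})\|_{L_2(\Gamma)}^2$ and bound it by duality using the H\"older seminorm, then apply Kolmogorov--Chentsov via Gaussian moment scaling. You have merely written out explicitly the steps the paper leaves implicit by reference, so there is nothing substantive to add.
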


\begin{proof}
The proof is essentially the same as the proof of \citet[Lemma 1]{Bolin2024Gaussian}. Therefore, we will only provide the parts in which the proof differs. Begin by observing that by Lemma~\ref{lem:lem0003}, we have that ${T_\alpha : L_2(\Gamma) \to \dot{H}^{\alpha}(\widetilde{\Gamma}) \hookrightarrow C^{0,\tilde{\alpha}}(\widetilde{\Gamma})}$. Thus, $T_\alpha:L_2(\Gamma) \to C^{0,\tilde{\alpha}}(\widetilde{\Gamma})$ is a bounded operator. Let $T_\alpha^\ast : (C^{0,\tilde{\alpha}}(\widetilde{\Gamma}))^\ast \to (L_2(\Gamma))^\ast = L_2(\Gamma)$ be the adjoint of $T_\alpha$. Further, observe that $\delta_s \in (C^{0,\tilde{\alpha}}(\widetilde{\Gamma}))^*$, which yields $T_\alpha^\ast(\delta_s) \in L_2(\Gamma)$. Therefore, $u_0 := \mathcal{W}(T_\alpha^\ast (\delta_s))$, is well-defined. The remaining proof is the same as the proof of \citet[Lemma 1]{Bolin2024Gaussian}, with the replacement of $L^{-\sfrac{\alpha}{2}}$ by $T_\alpha$. \hfill
\end{proof}

The next lemma is an extension of \citet[Lemma 2]{Bolin2024Gaussian} and shows that $u_0$ defined in Lemma~\ref{lem:regularity_field_kolm_chent} is a centered Gaussian random field with covariance operator $T_\alpha T_\alpha^\ast$.

\begin{lemma}
\label{lem:regularity_field_kolm_chent_2}
    Under the assumptions of Lemma~\ref{lem:regularity_field_kolm_chent}, fix $0<\gamma<\tilde{\alpha}$ and let $u_1$ be any $\gamma$-H\"older continuous modification of $u_0$. Then, for every $h\in L_2(\widetilde{\Gamma})$,
    $(u_1, h)_{L_2(\widetilde{\Gamma})} = \mathcal{W}(T_\alpha^\ast h).$
\end{lemma}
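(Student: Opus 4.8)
The plan is to verify the claimed weak identity $(u_1, h)_{L_2(\widetilde{\Gamma})} = \mathcal{W}(T_\alpha^\ast h)$ by first establishing it for a dense class of test functions $h$ and then extending by continuity, exactly as in the proof of \citet[Lemma 2]{bolin2022gaussian}. Both sides of the claimed identity are elements of the Gaussian linear space generated by $\mathcal{W}$ (the left side because $u_1$ is, pointwise, an $L_2(\Omega)$-limit of evaluations $u_0(s) = \mathcal{W}(T_\alpha^\ast\delta_s)$, the right side manifestly), so it suffices to check that they have the same second moment against an arbitrary generator, or equivalently that $\mathbb{E}\big[ (u_1,h)_{L_2(\widetilde\Gamma)}\, \mathcal{W}(g)\big] = \mathbb{E}\big[ \mathcal{W}(T_\alpha^\ast h)\, \mathcal{W}(g)\big] = (T_\alpha^\ast h, g)_{L_2(\Gamma)}$ for all $g \in L_2(\Gamma)$.

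First I would fix $h \in L_2(\widetilde\Gamma)$ and $g \in L_2(\Gamma)$ and compute $\mathbb{E}\big[(u_1,h)_{L_2(\widetilde\Gamma)}\mathcal{W}(g)\big]$ by interchanging the spatial integral over $\widetilde\Gamma$ with the expectation. The continuity (hence local boundedness) of the sample paths of $u_1$ on the compact graph $\widetilde\Gamma$, together with $h \in L_2(\widetilde\Gamma)$ and a Fubini/dominated-convergence argument, justifies writing $\mathbb{E}\big[(u_1,h)_{L_2(\widetilde\Gamma)}\mathcal{W}(g)\big] = \int_{\widetilde\Gamma} \mathbb{E}\big[u_1(s)\,\mathcal{W}(g)\big]\, h(s)\, ds$. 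Since $u_1$ is a modification of $u_0$ we have $u_1(s) = u_0(s) = \mathcal{W}(T_\alpha^\ast \delta_s)$ almost surely for each fixed $s$, so $\mathbb{E}\big[u_1(s)\mathcal{W}(g)\big] = (T_\alpha^\ast\delta_s, g)_{L_2(\Gamma)} = (\delta_s, T_\alpha g)_{(C^{0,\tilde\alpha})^\ast, C^{0,\tilde\alpha}} = (T_\alpha g)(s)$, using that $T_\alpha g \in H^{\tilde\alpha}(\widetilde\Gamma)\hookrightarrow C^{0,\tilde\alpha}(\widetilde\Gamma)$ by \citet[Theorem 2]{bolin2022gaussian} so the point evaluation is legitimate. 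Substituting gives $\int_{\widetilde\Gamma}(T_\alpha g)(s) h(s)\, ds = (T_\alpha g, h)_{L_2(\widetilde\Gamma)} = (g, T_\alpha^\ast h)_{L_2(\Gamma)}$, which is precisely $\mathbb{E}\big[\mathcal{W}(T_\alpha^\ast h)\mathcal{W}(g)\big]$.

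Having matched second moments against every generator $\mathcal{W}(g)$, and since $(u_1,h)_{L_2(\widetilde\Gamma)} - \mathcal{W}(T_\alpha^\ast h)$ lies in the Gaussian space $H_{\mathcal{W}}$ and is centered, it is orthogonal to all of $H_{\mathcal{W}}$, hence to itself, so it is zero almost surely; this gives the identity for each fixed $h$. I expect the only delicate point to be the interchange of the $L_2(\widetilde\Gamma)$-integral with the expectation: one needs a measurability/integrability justification for $\int_{\widetilde\Gamma}\mathbb{E}|u_1(s)\mathcal{W}(g)|\,|h(s)|\,ds < \infty$, which follows from Cauchy--Schwarz together with the uniform bound $\mathbb{E}[u_1(s)^2] = \|T_\alpha^\ast\delta_s\|_{L_2(\Gamma)}^2 \le \|T_\alpha\|_{\mathcal{L}(L_2(\Gamma),C^{0,\tilde\alpha}(\widetilde\Gamma))}^2$ established in Lemma~\ref{lem:regularity_field_kolm_chent}, and joint measurability of $(s,\omega)\mapsto u_1(s,\omega)$ coming from sample-path continuity. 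Everything else is a routine transcription of the argument in \citet[Lemma 2]{bolin2022gaussian}, with $L^{-\alpha/2}$ replaced by the abstract bounded operator $T_\alpha$.
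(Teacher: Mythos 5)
Your proposal is correct and follows essentially the same route as the paper, which simply defers to \citet[Lemma 2]{bolin2022gaussian} with $L^{-\alpha/2}$ replaced by $T_\alpha$; your argument (matching second moments of $(u_1,h)_{L_2(\widetilde\Gamma)}$ and $\mathcal{W}(T_\alpha^\ast h)$ against every generator $\mathcal{W}(g)$ via Fubini and the identification $\mathbb{E}[u_1(s)\mathcal{W}(g)]=(T_\alpha g)(s)$, then concluding by orthogonality in the Gaussian space) is exactly the content of that cited lemma. You have merely written out the details the paper leaves implicit, including the correct justification of the Fubini interchange.
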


\begin{proof}
    The proof follows the same steps as in \citet[Lemma 2]{Bolin2024Gaussian}, with the substitution of $ L^{-\sfrac{\alpha}{2}} $ by $ T_\alpha $. All arguments remain identical under this replacement. \hfill
\end{proof}

We now prove some additional results regarding the derivatives of the solution to the auxiliary problem \eqref{eq:auxiliary_problem}, when $\alpha > \sfrac{3}{2}$. To this end, we first recall some results from \cite{Bolin2024Regularity}. The following lemma is a special case of \citet[Lemma 5.5]{Bolin2024Regularity}:

\begin{lemma}
\label{lem:lemma_restriction_bounded}
    For any $0\leq s \leq 1$ and any edge $e\in\mathcal{E}$, the restriction operator $R_{e,s}$ is a bounded operator from $H^s(\Gamma)$ to $H^s(e)$ and from $\widetilde{H}^{s}(\Gamma)$ to $H^{s}(e)$.
\end{lemma}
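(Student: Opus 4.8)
The plan is to verify the two endpoint cases $s=0$ and $s=1$ directly and then obtain the full range $0<s<1$ by real interpolation, exploiting that the fractional spaces on $\Gamma$ and on a single edge are, by definition, interpolation spaces between $L_2$ and $H^1$. Here $R_{e,s}$ is understood to act as the restriction $f\mapsto f_e:=f|_e$ for the first mapping property and as extension-by-zero $g\mapsto\tilde g$ (with $\tilde g_e=g$ and $\tilde g_{e'}\equiv 0$ for $e'\neq e$) for the second; note also that $\widetilde H^s(e)=H^s(e)$, since a single edge is trivially decoupled from itself. In fact the whole statement is just the specialization of \citet[Lemma~5.5]{bolin2023regularity} to the present setting, so the argument below merely writes out that special case.

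First I would treat $s=0$: using $H^0(\Gamma)=L_2(\Gamma)$, $H^0(e)=L_2(e)$, $\widetilde H^0(\Gamma)=L_2(\Gamma)$ and $\|f\|_{L_2(\Gamma)}^2=\sum_{e'\in\Ecal}\|f_{e'}\|_{L_2(e')}^2$, one obtains at once $\|f_e\|_{L_2(e)}\leq\|f\|_{L_2(\Gamma)}$ and $\|\tilde g\|_{L_2(\Gamma)}=\|g\|_{L_2(e)}$, so both maps are contractions. Next, for $s=1$ I would use $H^1(\Gamma)=\widetilde H^1_C(\Gamma)\hookrightarrow\widetilde H^1(\Gamma)=\bigoplus_{e'\in\Ecal}H^1(e')$ together with $\|f\|_{\widetilde H^1(\Gamma)}^2=\sum_{e'\in\Ecal}\|f_{e'}\|_{H^1(e')}^2$, which gives $\|f_e\|_{H^1(e)}\leq\|f\|_{H^1(\Gamma)}$; and conversely the zero-extension of $g\in H^1(e)$ restricts to an $H^1$ function on every edge (identically zero off $e$), so $\tilde g\in\widetilde H^1(\Gamma)$ with $\|\tilde g\|_{\widetilde H^1(\Gamma)}=\|g\|_{H^1(e)}$.

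For $0<s<1$ I would then invoke the real-interpolation identities $H^s(\Gamma)=(L_2(\Gamma),H^1(\Gamma))_s$, $H^s(e)=(L_2(e),H^1(e))_s$ and $\widetilde H^s(\Gamma)=(L_2(\Gamma),\widetilde H^1(\Gamma))_s$ from Appendix~\ref{app:theoretical_details}. Since $R_{e,s}$ maps the interpolation couple $(L_2(\Gamma),H^1(\Gamma))$ boundedly into $(L_2(e),H^1(e))$, the interpolation property of the real method yields a bounded operator $H^s(\Gamma)\to H^s(e)$, with norm controlled by the product of the two endpoint norms (both $\leq 1$ above); the same argument applied to the extension operator, which maps $(L_2(e),H^1(e))$ boundedly into $(L_2(\Gamma),\widetilde H^1(\Gamma))$, gives boundedness $\widetilde H^s(e)=H^s(e)\to\widetilde H^s(\Gamma)$.

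I do not expect a genuine obstacle here: the only points requiring care are matching the interpolation couples appearing in the definitions of the fractional spaces with those in the interpolation theorem, and keeping straight that the two assertions have different targets — the restriction lands in $H^s(e)$ (which on a single edge coincides with $\widetilde H^s(e)$), whereas the zero-extension can only be claimed to land in the \emph{decoupled} space $\widetilde H^s(\Gamma)$ and not in $H^s(\Gamma)$, since it generically violates the continuity and Kirchhoff constraints at the vertices of $e$.
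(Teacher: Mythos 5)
Your proposal is correct, and it is worth noting that the paper itself offers no proof of this lemma: it simply records the statement as a special case of \citet[Lemma~5.5]{bolin2023regularity}. Your argument --- verifying the endpoint cases $s=0$ and $s=1$ directly from the direct-sum structure of the norms and then invoking the exactness of the real interpolation method together with the definitions $H^s(\Gamma)=(L_2(\Gamma),H^1(\Gamma))_s$, $\widetilde H^s(\Gamma)=(L_2(\Gamma),\widetilde H^1(\Gamma))_s$ and the identification $(L_2(e),H^1(e))_s\cong H^s(e)$ on an interval --- is the standard way to prove the cited lemma, so you have essentially supplied the proof the paper delegates to its reference. Two small remarks. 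First, you correctly flag that the second mapping in the statement, ``from $\widetilde H^s(e)$ to $\widetilde H^s(\Gamma)$,'' cannot literally be a restriction; reading it as extension by zero (or, alternatively, as a typo for restriction $\widetilde H^s(\Gamma)\to\widetilde H^s(e)$) is the only sensible interpretation, and your observation that the zero extension lands only in the decoupled space $\widetilde H^s(\Gamma)$ and not in $H^s(\Gamma)$ is exactly the right caution. Second, the interpolation theorem bounds the intermediate operator norm by the geometric mean $\|T\|_0^{1-s}\|T\|_1^{s}$ rather than the product of the endpoint norms; since both endpoint norms are at most $1$ here, this does not affect your conclusion, but the phrasing should be adjusted.
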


To keep the notation simple, we will denote $R_{e,s}$ (i.e., the restriction operator from $H^s(\Gamma)$ to $H^s(e)$) by $R_{e}$. 
The next lemma is a special case of \citet[Lemma 5.4]{Bolin2024Regularity} when $\Gamma = e$:

\begin{lemma}
\label{lem:lemma_derivative_bounded}
    Let $1 < s < 2$. Then, for any $e\in\mathcal{E}$, with the identification $e = [0,\ell_e]$, the derivative operator $D$ such that $(D f_e)(x) = f_e'(x)$, for $f_e\in H^s(e)$, is a bounded operator from $H^s(e)$ to $H^{s-1}(e)$.
\end{lemma}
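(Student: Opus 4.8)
The plan is to derive the bound by real interpolation between two elementary endpoint estimates. First I would record the two trivial bounds on integer-order spaces. The derivative operator $D$ maps $H^1(e)$ boundedly into $L_2(e)$, since $\|D f_e\|_{L_2(e)} = \|f_e'\|_{L_2(e)} \leq \|f_e\|_{H^1(e)}$; and it maps $H^2(e)$ boundedly into $H^1(e)$, since $\|D f_e\|_{H^1(e)}^2 = \|f_e'\|_{L_2(e)}^2 + \|f_e''\|_{L_2(e)}^2 \leq \|f_e\|_{H^2(e)}^2$. In both cases the operator norm is at most $1$.

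Next I would invoke the definitions of the fractional Sobolev spaces on the edge $e$, which (as noted in Section~\ref{subsec:prelim} and Appendix~\ref{app:theoretical_details}) coincide on a single edge with the standard ones. Writing $\theta = s-1 \in (0,1)$ for $1 < s < 2$, we have $H^s(e) = (H^1(e), H^2(e))_{\theta}$ and $H^{s-1}(e) = (L_2(e), H^1(e))_{\theta}$; the latter is simply the definition of $H^{s-1}(e)$ for $0 < s-1 < 1$. Note that $\theta$ is strictly interior to $(0,1)$, so no endpoint subtleties arise, and on a single edge $\widetilde{H}^2(e) = H^2(e)$, so the space denoted $\widetilde{H}^2_C$ in the graph setting reduces here to $H^2(e)$.

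The conclusion then follows from the standard interpolation property of bounded linear operators: if a linear operator is bounded from $A_0$ to $B_0$ and from $A_1$ to $B_1$, then it is bounded from $(A_0, A_1)_{\theta}$ to $(B_0, B_1)_{\theta}$, with operator norm controlled by the product of the $(1-\theta)$-th power of $\|T\|_{A_0 \to B_0}$ and the $\theta$-th power of $\|T\|_{A_1 \to B_1}$. Applying this with $A_0 = H^1(e)$, $A_1 = H^2(e)$, $B_0 = L_2(e)$, $B_1 = H^1(e)$ and the two endpoint bounds above yields boundedness of $D$ from $H^s(e) = (H^1(e), H^2(e))_{\theta}$ to $(L_2(e), H^1(e))_{\theta} = H^{s-1}(e)$, with operator norm at most $1$.

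I do not expect a genuine obstacle here; the only point requiring care is to state the interpolation identities with exactly the conventions used in the paper (that $H^{s-1}(e)$ for $0<s-1<1$ is the real interpolation space $(L_2(e),H^1(e))_{s-1}$ and that $H^s(e)$ for $1\le s\le 2$ is $(H^1(e),H^2(e))_{s-1}$), which is the standard definition on individual edges. Alternatively, one could bypass interpolation by extending $f_e \in H^s(e)$ to $\widetilde{f} \in H^s(\mathbb{R})$ with comparable norm, using the Fourier characterization of $H^s(\mathbb{R})$, noting that differentiation multiplies $\hat{\widetilde{f}}(\xi)$ by $i\xi$ together with the elementary inequality $|\xi|^2(1+\xi^2)^{s-1} \leq (1+\xi^2)^s$, and then restricting back to $e$; the interpolation argument is shorter and fits the framework already in place.
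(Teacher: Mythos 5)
Your proof is correct: the two endpoint bounds are immediate from the definitions of the integer-order norms, the interpolation identities you invoke match exactly the paper's conventions (on a single edge $\widetilde{H}^2_C(e)=H^2(e)$, so $H^s(e)=(H^1(e),H^2(e))_{s-1}$ and $H^{s-1}(e)=(L_2(e),H^1(e))_{s-1}$), and the exactness of the real interpolation functor for operators finishes the argument. The paper itself does not prove this lemma but merely cites it as a special case of Lemma 5.4 of \citet{bolin2023regularity}, and the interpolation argument you give is precisely the standard proof underlying that cited result, so you have in effect supplied the self-contained proof the paper omits.
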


From the proof of \cite[Lemma 5.9]{Bolin2024Regularity}, we also have the following:
\begin{lemma}
\label{lem:lemma_secondderivative_bounded}
    Let $s\geq 2$. Then, for any $e\in\mathcal{E}$, with the identification $e = [0,\ell_e]$, the second derivative operator $D^2$ such that $(D^2 f_e)(x) = f_e''(x)$, for $f_e\in H^s(e)$, is a bounded operator from $H^s(e)$ to $H^{s-2}(e)$.
\end{lemma}

The following results are novel, even in the context of standard Whittle--Mat\'ern fields, where $\kappa$ is constant in the auxiliary problem.

\begin{prop}
\label{prp:regularity_derivatives}
    Let $\alpha > \sfrac{3}{2}$ and let $w$ be the solution to \eqref{eq:auxiliary_problem}. Then, for any $e\in\mathcal{E}$, 
    and $t,t'\in [0,\ell_e]$,
    \begin{equation}
    \label{eq:bound_deriv_holder}
        \mathbb{E}\left( |w_e'(t) - w_e'(t')|^2 \right) \leq \|D R_e L^{-\sfrac{\alpha}{2}}\|_{\mathcal{L}(L_2(\Gamma), C^{0,\alpha-\sfrac{3}{2}}(e))} |t-t'|^{2\alpha-3}, \quad t,t'\in e.
    \end{equation}
    Further, for any $0 < \gamma < \min\{\alpha-\sfrac{3}{2}, 1\}$, $w_e'$ has a modification with $\gamma$-H\"older continuous sample paths.
\end{prop}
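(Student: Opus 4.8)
The plan is to realize the edge derivative $w_e'$ as a Gaussian random field obtained by applying a fixed bounded operator to the white noise $\mathcal{W}$, and then to repeat the Kolmogorov--Chentsov argument of Lemmas~\ref{lem:regularity_field_kolm_chent} and~\ref{lem:regularity_field_kolm_chent_2} with $L^{-\nicefrac{\alpha}{2}}$ replaced by the operator $T := D R_e L^{-\nicefrac{\alpha}{2}}$, and with the single edge $e$, viewed as a compact metric graph, playing the role of $\widetilde{\Gamma}$.

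First I would show that $T$ is bounded from $L_2(\Gamma)$ into $C^{0,\alpha-\nicefrac{3}{2}}(e)$. Here $L^{-\nicefrac{\alpha}{2}}$ is an isometry of $L_2(\Gamma)$ onto $\dot{H}^\alpha$; since $\dot{H}^1 = H^1(\Gamma)$ and $\dot{H}^2 = K(\Gamma) \subset \widetilde{H}^2(\Gamma) = \bigoplus_{e'\in\Ecal} H^2(e')$ are the form and operator domains of $L$, Lemma~\ref{lem:lemma_restriction_bounded} together with interpolation (as in \citealp{bolin2023regularity}) gives that $R_e : \dot{H}^\alpha \to H^\alpha(e)$ is bounded; Lemma~\ref{lem:lemma_derivative_bounded} gives that $D : H^\alpha(e) \to H^{\alpha-1}(e)$ is bounded (the endpoint $\alpha = 2$ being classical); and, because $\nicefrac{1}{2} < \alpha - 1 \le 1 < \nicefrac{3}{2}$, applying \citet[Theorem~2]{bolin2022gaussian} to the metric graph $e$ yields $H^{\alpha-1}(e) \hookrightarrow C^{0,\alpha-\nicefrac{3}{2}}(e)$. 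Composing these four maps bounds $T$ into $C^{0,\alpha-\nicefrac{3}{2}}(e)$.

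With this in hand, point evaluations $\delta_t$, $t\in[0,\ell_e]$, belong to $\big(C^{0,\alpha-\nicefrac{3}{2}}(e)\big)^\ast$, so $\xi_t := \mathcal{W}(T^\ast\delta_t)$ defines a centered Gaussian field, and the computation in Lemma~\ref{lem:regularity_field_kolm_chent} gives, writing $\|T\| := \|T\|_{\mathcal{L}(L_2(\Gamma),\,C^{0,\alpha-\nicefrac{3}{2}}(e))}$,
\begin{align*}
\mathbb{E}\big(|\xi_t-\xi_{t'}|^2\big)
&= \big\|T^\ast(\delta_t-\delta_{t'})\big\|_{L_2(\Gamma)}^2
\le \|T\|^2\,\big\|\delta_t-\delta_{t'}\big\|_{(C^{0,\alpha-\nicefrac{3}{2}}(e))^\ast}^2 \\
&\le \|T\|^2\,|t-t'|^{2\alpha-3},
\end{align*}
which is \eqref{eq:bound_deriv_holder}. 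Since $\xi$ is Gaussian, the equivalence of moments for Gaussian random variables turns this into $\mathbb{E}\big(|\xi_t-\xi_{t'}|^p\big) \le C_p\,|t-t'|^{p(\alpha-\nicefrac{3}{2})}$ for every $p\ge 2$, so the Kolmogorov--Chentsov theorem would yield, for each $0<\gamma<\alpha-\nicefrac{3}{2}$, a modification of $\xi$ with $\gamma$-Hölder continuous sample paths.

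It would then remain to identify $\xi$ with $w_e'$. Following the proof of Lemma~\ref{lem:regularity_field_kolm_chent_2}, $\xi$ satisfies $(\xi,h)_{L_2(e)} = \mathcal{W}(T^\ast h)$ for all $h\in L_2(e)$. Taking $h\in C_c^\infty$ supported in the interior of $e$, the adjoint of the derivative operator acts as $D^\ast h = -h'$ with no boundary contribution, so that $T^\ast h = -L^{-\nicefrac{\alpha}{2}}R_e^\ast h'$ and hence $(\xi,h)_{L_2(e)} = -\big(w, R_e^\ast h'\big)_{L_2(\Gamma)} = -(w_e, h')_{L_2(e)}$; that is, $\xi$ is the weak ($L_2(\Omega)$-valued) derivative of $w_e$. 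Since $w$ lies a.s. in $\dot{H}^s \hookrightarrow \widetilde{H}^s(\Gamma)$ for every $s<\alpha-\nicefrac{1}{2}$, so that $w_e\in H^s(e)$ with $s>1$, and since $\xi$ admits a continuous modification, it follows that $w_e$ has a modification in $C^1(e)$ with $w_e'=\xi$, and the Hölder claim for $w_e'$ is then the content of the previous paragraph. I expect this last step to be the main obstacle: one must handle the adjoint $D^\ast$ and the vanishing of its boundary terms carefully, and must reconcile the several modifications involved so that the Sobolev-space and sample-path descriptions of $w_e$ and of $\xi$ hold simultaneously with probability one. The remaining ingredients are a faithful transcription of the two Kolmogorov--Chentsov lemmas to the operator $T$.
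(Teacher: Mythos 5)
Your proposal is correct and follows essentially the same route as the paper: the same factorization $T_\alpha = D R_e L^{-\alpha/2}$ with the same chain of mapping properties (restriction to $H^{\alpha}(e)$, derivative into $H^{\alpha-1}(e)$, Sobolev embedding into $C^{0,\alpha-3/2}(e)$), followed by Lemmas~\ref{lem:regularity_field_kolm_chent} and~\ref{lem:regularity_field_kolm_chent_2} applied to $\mathcal{W}(T_\alpha^\ast\delta_t)$ and Kolmogorov--Chentsov. The only difference is the final identification with $w_e'$: you integrate by parts against test functions supported in the interior of $e$, whereas the paper notes that $w_e' = D R_e w$ is a centered Gaussian field with covariance operator $D R_e L^{-\alpha}(D R_e)^\ast = T_\alpha T_\alpha^\ast$ and hence has the same finite-dimensional distributions as $\mathcal{W}(T_\alpha^\ast\delta_\cdot)$, which sidesteps the adjoint and boundary-term issues you flag as the main obstacle.
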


\begin{proof}
    First, let $T_\alpha = D R_e L^{-\sfrac{\alpha}{2}}$. Then, $T_\alpha$ is a bounded operator from $L_2(\Gamma)$ to $C^{0,\alpha-\sfrac{3}{2}}(e)$. Indeed, $L^{-\sfrac{\alpha}{2}}$ is a bounded operator from $L_2(\Gamma)$ to $\dot{H}^\alpha$. By \citet[Theorem~4.1]{Bolin2024Regularity}, we have that $\dot{H}^\alpha\hookrightarrow \widetilde{H}^\alpha(\Gamma)$. Now, by Lemma~\ref{lem:lemma_restriction_bounded}, $R_e L^{-\sfrac{\alpha}{2}}$ is a bounded operator from $L_2(\Gamma)$ to $H^\alpha(e)$. Next, by Lemma~\ref{lem:lemma_derivative_bounded}, $D$ is a bounded operator from $H^\alpha(e)$ to $H^{\alpha-1}(e)$. Finally, by the Sobolev embedding \cite[Theorem 6]{Awadelkarim2025Fractional}, we have ${H^{\alpha-1}(e)\hookrightarrow C^{0,\gamma}(e)}$, where $0<\gamma<\min\{\alpha-\sfrac{3}{2},1\}$. Therefore, $T_\alpha$ is a bounded operator from $L_2(\Gamma)$ to $C^{0,\gamma}(e)$.
    
    Let $u_0 := \mathcal{W}(T_\alpha^\ast \delta_s)$, where $\delta_s$ is the Dirac measure concentrated at $s$. By Lemma~\ref{lem:regularity_field_kolm_chent}, for any $0<\gamma<\min\{\alpha-\sfrac{3}{2}, 1\}$, $u_0$ has a modification with $\gamma$-H\"older continuous sample paths. Next, by Lemma~\ref{lem:regularity_field_kolm_chent_2}, $u_0$  is a centered Gaussian random field with covariance operator $T_\alpha T_\alpha^\ast = D R_e L^{-\alpha} (D R_e)^\ast$.

    Now, let $w$ be the solution to \eqref{eq:auxiliary_problem}, so that $w$ has covariance operator $L^{-\alpha}$ and by \citet[Theorem 4.4]{Bolin2024Regularity}, the sample paths of $w$ belong to $H^1(\Gamma)$, so that the sample paths of $w$ are weakly differentiable. Then, $w_e' = D R_e w$, the derivative of $w$ restricted to $e$, is a centered Gaussian random field with covariance operator $D R_e L^{-\alpha} (D R_e)^\ast$. Therefore, by the above, $w_e'$ and $u_0$ have the same finite-dimensional distributions, which implies \eqref{eq:bound_deriv_holder} and also  that $w'_e$ has a modification with $\gamma$-H\"older continuous sample paths. \hfill
\end{proof}

We are now in a position to prove the statement (ii) from Proposition~\ref{prp:regularity}. 

\begin{proof}[Proof of statement (ii) of  Proposition~\ref{prp:regularity}]
    Let $w$ be the solution to \eqref{eq:auxiliary_problem}. Since $\alpha>\sfrac{3}{2}$, it follows from \citet[Theorem 4.4]{Bolin2024Regularity} that the sample paths of $w$ belong to $H^1(\Gamma)$. Furthermore, by Proposition~\ref{prp:regularity_derivatives}, $w_e'$ has a modification with $\gamma$-H\"older continuous sample paths for any $0<\gamma<\min\{\alpha-\sfrac{3}{2}, 1\}$. Fix $0 < \gamma < \min\{\alpha-\sfrac{3}{2}, 1\}$ and let $w$ be a modification of $w$ such that for every $e\in\mathcal{E}$, $w_e'$ has $\gamma$-H\"older continuous sample paths. Since the sample paths of $w$ belong to $H^1(\Gamma)$, it follows from \eqref{eq:weak_deriv} that for every $t \in e$,
    \begin{align*}
        w_e(t) = \int_0^t w_e'(s)ds + w_e(0).
    \end{align*}
    Since, with probability 1, $w_e'$ is continuous, it follows that $w_e$ is continuously differentiable. Finally, since $w_e'$ has $\gamma$-H\"older continuous sample paths, it follows that the sample paths of $w_e$ belong to $C^{1,\gamma}(e)$.

    Finally, if $\tau_e \in C^{1,\gamma}(e)$, and $\tau\geq \tau_0>0$, then $u_e = \tau_e^{-1} w_e$ is continuously differentiable. Further,
    $u_e' = \tau_e^{-1} w_e' - \tau_e^{-2} w_e \tau_e',$
    which is $\gamma$-H\"older continuous by Propositions~\ref{prop:Holder_exponents}, \ref{prop:Holder_sum}, \ref{prop:Holder_product}, and~\ref{prop:Holder_quotient}. Therefore, since $e$ is compact, $u$ has a modification such that for all $e\in\mathcal{E}$, $u_e$ has sample paths belonging to $C^{1,\gamma}(e)$. \hfill
\end{proof}

Next, we present a novel result showing that for $\alpha > \sfrac{3}{2}$, the solutions to \eqref{eq:auxiliary_problem} satisfy the Kirchhoff vertex conditions. This result improves upon \citet[Proposition~11]{Bolin2024Gaussian}, where it was established only for $\alpha \geq 2$ and limited to standard Whittle--Mat\'ern fields, i.e., the case where $\kappa$ is constant.

\begin{prop}
\label{prp:kirchhoff_vertex_conditions_solution_auxiliary}
    Let $\alpha > \sfrac{3}{2}$, fix some $0 < \gamma < \min\{\alpha-\sfrac{3}{2}, 1\}$, and let $w$ be the solution to \eqref{eq:auxiliary_problem}. By Proposition~\ref{prp:regularity_derivatives}, $w$ admits a modification such that, for every $e \in \mathcal{E}$, the sample paths of $w_e$ belong to $C^{1,\gamma}(e)$. This modification of $w$ satisfies the Kirchhoff vertex conditions \eqref{eq:kirchhoff_cond}; specifically, the sample paths of $w$ are continuous, and for all $v \in \mathcal{V}$, $\sum_{e \in \mathcal{E}_v} \partial_e w(v) = 0$.
\end{prop}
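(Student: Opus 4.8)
The plan is to reduce, for each of the finitely many vertices $v\in\mathcal{V}$, the Kirchhoff identity at $w$ to the vanishing of the variance of a single centered Gaussian random variable. By Proposition~\ref{prp:regularity_derivatives} we fix the modification of $w$ with $w_e\in C^{1,\gamma}(e)$ for all $e\in\mathcal{E}$. Since by \citet[Theorem~4.4]{bolin2023regularity} the sample paths of $w$ lie in $H^1(\Gamma)\subset C(\Gamma)$, and two modifications with edgewise continuous paths agree almost surely on a countable dense subset of each closed edge (hence, by continuity, everywhere on it, including the endpoints), this modification also has continuous sample paths on $\Gamma$. Consequently, for each $v$ the quantity $\Phi_v(w):=\sum_{e\in\mathcal{E}_v}\partial_e w(v)$ is a well-defined real-valued random variable; it is centered Gaussian because it is a finite linear combination of the jointly Gaussian derivative evaluations $\partial_e w(v)=\pm w_e'(v_e)$ from Proposition~\ref{prp:regularity_derivatives}. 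It therefore suffices to show $\mathrm{Var}(\Phi_v(w))=0$ for every $v\in\mathcal{V}$ and then intersect the resulting null sets.

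The heart of the argument is a deterministic claim: if $\alpha>\nicefrac{3}{2}$, then every $f\in\dot{H}^\alpha=D(L^{\nicefrac{\alpha}{2}})$ satisfies $\sum_{e\in\mathcal{E}_v}\partial_e f(v)=0$ for all $v\in\mathcal{V}$; equivalently, the Kirchhoff functional $\Phi_v$, a priori meaningful only on $K(\Gamma)$, extends to the zero functional on $\dot{H}^\alpha$. To prove it, note first that $\Phi_v$ is bounded on $\dot{H}^\alpha$: by \citet[Theorem~4.1]{bolin2023regularity} we have $\dot{H}^\alpha\hookrightarrow\widetilde{H}^\alpha(\Gamma)$, and, exactly as in the proof of Proposition~\ref{prp:regularity_derivatives}, for each $e\in\mathcal{E}$ the map $f\mapsto f_e'$ is bounded from $\dot{H}^\alpha$ into $C^{0,\alpha-\nicefrac{3}{2}}(e)\hookrightarrow C(e)$ (restriction to $H^\alpha(e)$, differentiation into $H^{\alpha-1}(e)$, then the Sobolev embedding $H^{\alpha-1}(e)\hookrightarrow C^{0,\alpha-\nicefrac{3}{2}}(e)$ of \citet[Corollary~6]{bolin2022gaussian}, valid since $\alpha-1>\nicefrac{1}{2}$), so $f\mapsto\partial_e f(v)=\pm f_e'(v_e)$ is controlled by $\|f\|_{\dot{H}^\alpha}$. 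Second, the eigenfunctions $e_j$ of $L$ belong to $D(L)=K(\Gamma)$ and hence satisfy the Kirchhoff conditions, so $\Phi_v(e_j)=0$ for all $j$. Since finite linear combinations of the $e_j$ are dense in $D(L^{\nicefrac{\alpha}{2}})=\dot{H}^\alpha$ in the $\dot{H}^\alpha$-norm, the bounded functional $\Phi_v$ vanishes identically on $\dot{H}^\alpha$, as claimed.

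To conclude, regard $w$ as a Gaussian random element of the separable Banach space $B=\big(\bigoplus_{e\in\mathcal{E}}C^1(e)\big)\cap C(\Gamma)$, whose Cameron--Martin space is $\dot{H}^\alpha$ \citep[p.~44]{bogachev1998gaussian}. For each $v\in\mathcal{V}$ the functional $\Phi_v$ is continuous on $B$, hence a centered Gaussian random variable under the law of $w$ whose variance equals the squared norm of its restriction to the Cameron--Martin space; by the key claim this restriction is $0$, so $\mathrm{Var}(\Phi_v(w))=0$ and $\Phi_v(w)=0$ almost surely. Equivalently, one may argue directly: with $\rho$ the covariance of $w$, $\mathbb{E}[\Phi_v(w)^2]=\Phi_v^{(1)}\Phi_v^{(2)}\rho(v,v)$, where $\Phi_v^{(i)}$ applies $\sum_{e\in\mathcal{E}_v}\partial_e$ in the $i$-th argument; since $\rho(s,\cdot)=L^{-\alpha}\delta_s\in\dot{H}^\alpha$ for every fixed $s$, the inner application already yields the zero function, and hence so does the outer one. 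Taking the intersection of these finitely many null sets gives a single event of full probability on which $w$ is continuous and $\sum_{e\in\mathcal{E}_v}\partial_e w(v)=0$ for every $v\in\mathcal{V}$, which is the assertion.

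The main obstacle is the key deterministic claim for non-integer $\alpha\in(\nicefrac{3}{2},2)$: there $\dot{H}^\alpha$ is merely an interpolation space and its members need not lie in $\widetilde{H}^2(\Gamma)$, so one cannot simply invoke $D(L)=K(\Gamma)$ as in \citet[Proposition~11]{bolin2022gaussian}, which required $\alpha\geq2$. The resolution -- and the reason the threshold is exactly $\alpha>\nicefrac{3}{2}$ -- is that the Kirchhoff conditions at a vertex involve only finitely many point evaluations of edgewise first derivatives, and these remain bounded functionals on $\dot{H}^\alpha$ precisely once $\alpha>\nicefrac{3}{2}$; density of the eigenfunctions then forces them to vanish there.
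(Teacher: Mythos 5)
Your proof is correct and rests on the same central mechanism as the paper's: show that the Kirchhoff functional $\Phi_v=\sum_{e\in\mathcal{E}_v}\partial_e(\cdot)(v)$ vanishes on the Cameron--Martin space $\dot{H}^\alpha$ of $w$, then conclude that the centered Gaussian variable $\Phi_v(w)$ has zero variance. The two halves are executed differently, however. For the first half, the paper simply invokes the identification $\dot{H}^\alpha\cong\widetilde{H}^\alpha(\Gamma)\cap C(\Gamma)\cap K_\alpha(\Gamma)$ from Theorem~4.1 of \citet{bolin2023regularity}, which already encodes the Kirchhoff condition; you instead re-derive the inclusion $\dot{H}^\alpha\subset\ker\Phi_v$ from first principles, using only the embedding $\dot{H}^\alpha\hookrightarrow\widetilde{H}^\alpha(\Gamma)$ to get boundedness of $\Phi_v$ (via restriction, differentiation, and the Sobolev embedding $H^{\alpha-1}(e)\hookrightarrow C^{0,\alpha-\sfrac{3}{2}}(e)$) together with density of the eigenfunctions, which lie in $D(L)=K(\Gamma)$. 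This is more self-contained and cleanly isolates why the threshold is exactly $\alpha>\sfrac{3}{2}$. For the second half, the paper works inside the Gaussian Hilbert space $H_w$: it constructs the weak $L_2(\Omega)$-derivative to justify $\partial_e\,\mathbb{E}(w_e(\cdot)g)=\mathbb{E}(\partial_e w_e(\cdot)g)$ and then takes $g=\sum_{e\in\mathcal{E}_v}\partial_e w_e(v)$; you instead invoke the abstract fact that a bounded linear functional on a separable Banach space carrying a centered Gaussian measure has variance equal to the squared Cameron--Martin norm of its representer, hence zero variance if it annihilates $\dot{H}^\alpha$. Your route is shorter, but it implicitly requires that the chosen modification induces a Radon Gaussian measure on $B=\bigl(\bigoplus_{e}C^1(e)\bigr)\cap C(\Gamma)$ with Cameron--Martin space $\dot{H}^\alpha$ and that $\Phi_v\in B^\ast$; this is standard but is precisely the bookkeeping that the paper's explicit $L_2(\Omega)$-derivative computation replaces. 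Both arguments are sound.
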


\begin{proof}
    The continuity of $w$ follows from \citet[Theorem 4.4]{Bolin2024Regularity}. Furthermore, by \citet[p. 44]{Bogachev1998Gaussian}, the Cameron-Martin space associated with $w$ is given by $\dot{H}^\alpha$. By \citet[Theorem 4.1]{Bolin2024Regularity}, we also have $\dot{H}^\alpha \cong \widetilde{H}^\alpha \cap C(\Gamma) \cap K_\alpha(\Gamma)$, where
    \begin{equation}
    \label{eq:K_alpha_definition}
    K_\alpha(\Gamma) = \left\{f \in \widetilde{H}^\alpha(\Gamma) : \forall v \in \mathcal{V}, \; \sum_{e \in \mathcal{E}_v} \partial_e f(v) = 0 \right\}.
    \end{equation}
    Therefore, by the definition of the Cameron-Martin space, along with the above identification, we have $\{h(s) = \mathbb{E}(w(s)g) : s \in \Gamma, \; g \in H_w\} = \widetilde{H}^\alpha \cap C(\Gamma) \cap K_\alpha(\Gamma)$. From Lemma~\ref{lem:lemma_restriction_bounded}, for every $h \in \widetilde{H}^\alpha$, $h|_e = R_e(h) \in H^\alpha(e)$. By the same arguments as in the proof of Proposition~\ref{prp:regularity_derivatives}, we have $h_e \in C^{1, \gamma}(e)$. 

    Given any $g \in H_w$, define $h(s) = \mathbb{E}(w(s)g)$. For every edge $e \in \mathcal{E}$, $h_e = h|_e$ is continuously differentiable on $[0, \ell_e]$. Hence, for any $t \in [0, \ell_e]$ and any sequence $t_n \to t$, $t_n \in [0, \ell_e]$ with $t_n \neq t$, the sequence 
    \begin{align*}
        \frac{w_e(t_n) - w_e(t)}{t_n - t}
    \end{align*}
    is weakly Cauchy in $H_w$. Since $H_w$ is a Hilbert space, it is weakly sequentially complete. Thus, there exists $w'_e(t) \in H_w$ such that
    \begin{align*}
        \frac{w_e(t_n) - w_e(t)}{t_n - t} \xrightarrow{w} w'_e(t),
    \end{align*}
    where $\xrightarrow{w}$ denotes weak convergence in $H_w$. Since the sample paths of $w_e$ belong to $C^{1, \gamma}(e)$, the limit $w'_e(t)$ coincides with the classical derivative of $w_e$.

    Because $h_e(t) = \mathbb{E}(w_e(t)g)$ is differentiable, the limit $w'_e(t)$ does not depend on the choice of the sequence $t_n \to t$. Moreover, the weak convergence implies $h_e'(t) = \mathbb{E}(w'_e(t)g)$, where we take lateral derivatives if $t$ lies on the boundary of $[0, \ell_e]$. This proves the existence of the weak derivative in the $L_2(\Omega)$-sense.

    Since $h \in \mathcal{H}_w$, we have $h \in K_\alpha(\Gamma)$. Therefore, for every $v \in \mathcal{V}$,
    \begin{align*}
        0 = \sum_{e \in \mathcal{E}_v} \partial_e h(v) = \sum_{e \in \mathcal{E}_v} \partial_e \mathbb{E}(w_e(v)g) = \sum_{e \in \mathcal{E}_v} \mathbb{E}(\partial_e w_e(v)g) = \mathbb{E}\Bigl[\Bigl(\sum_{e \in \mathcal{E}_v} \partial_e w_e(v)\Bigr) g\Bigr],
    \end{align*}
    where, for $e = [0, \ell_e]$, $\partial_e w_e(v) = w_e'(0)$ if $v = 0$, and $\partial_e w_e(v) = -w_e'(\ell_e)$ if $v = \ell_e$.

    Finally, take $g = \sum_{e \in \mathcal{E}_v} \partial_e w_e(v)$, which belongs to $H_w$, since, as shown above, $w_e' \in H_w$ for every $e \in \mathcal{E}$. Then, $\mathbb{E}\Bigl[\Bigl(\sum_{e \in \mathcal{E}_v} \partial_e w_e(v)\Bigr)^2\Bigr] = 0$, which implies $\sum_{e \in \mathcal{E}_v} \partial_e w_e(v) = 0$, completing the proof. \hfill
\end{proof}

We are now in a position to prove  statement (ii) of Proposition~\ref{prp:regularity-global}.

\begin{proof}[Proof of statement (ii) of Proposition~\ref{prp:regularity-global}]
Let $w$ be the solution to \eqref{eq:auxiliary_problem} so that $ w $ has a modification with $\gamma$-H\"older continuous sample paths on $\Gamma$. By Proposition~\ref{prp:kirchhoff_vertex_conditions_solution_auxiliary}, $w$ satisfies the Kirchhoff vertex conditions \eqref{eq:kirchhoff_cond}. Assume $\tau\in\mathcal{K}(\Gamma)$. Then it immediately follows that $u\in\mathcal{K}(\Gamma)$ from the fact that $u = \tau^{-1} w$ together with Lemma~\ref{lem:kirchhof_product_stable}. Conversely, assume that $u\in\mathcal{K}(\Gamma)$. Then $u \in C(\Gamma)$, and by statement (i) of Proposition~\ref{prp:regularity-global}, $\tau \in C(\Gamma)$. By the continuity of $\tau$ and $w$, we have
\begin{align*}
    0= \sum_{s\in v} \partial u(s) = \tau^{-1}(v) \sum_{s\in v} \partial w(s) - \dfrac{w(v)}{\tau^2(v)} \sum_{s\in v} \partial\tau(s) = - \dfrac{w(v)}{\tau^2(v)} \sum_{s\in v} \partial\tau(s).
\end{align*}
Proposition~\ref{prp:kirchhoff_vertex_conditions_solution_auxiliary} now implies that $0= w(v)\sum_{s\in v} \partial\tau(s)$. By Proposition~\ref{prp:positive_definite_cov}, the covariance function of $w$ is strictly positive definite. In particular, for every $v \in \mathcal{V}$, we have that $\mathbb{P}\big(w(v) \neq 0\big)=1$. Hence, it follows that $0=\sum_{s\in v} \partial\tau(s)$. Since $v$ was arbitrary, we conclude that $\tau\in\mathcal{K}(\Gamma)$. This completes the proof. \hfill
\end{proof}

Finally, we prove statement (iii) of Proposition~\ref{prp:regularity} and statement (iii) of Proposition~\ref{prp:regularity-global}.

\begin{proof}[Proof of statement (iii) of Proposition~\ref{prp:regularity}]
The claim follows by a straightforward adaptation of the proof of statement (ii) of Proposition~\ref{prp:regularity}. 
Specifically, one replaces the operator $D$ with $D^2$ throughout and applies Lemma~\ref{lem:lemma_secondderivative_bounded} to obtain the corresponding result for $w_e''$, analogous to Proposition~\ref{prp:regularity_derivatives}. \hfill
\end{proof}

To obtain the final result, statement (iii) of Proposition~\ref{prp:regularity-global}, we state and prove some auxiliary results.
    
\begin{prop}
\label{prop:local_holder_plus_cont_at_all_vertex}
    Let $\Gamma = (\mathcal{V}, \mathcal{E})$ be a finite metric graph endowed with the geodesic distance $d(\cdot,\cdot)$. Let $0<\gamma\leq1$ and let $f = (f_e)_{e \in \mathcal{E}}$ be a function on $\Gamma$. If $f_e \in C^{0, \gamma}(e)$ for all $e \in \mathcal{E}$ and $f$ is continuous at every vertex $v\in\mathcal{V}$, then $f\in C^{0, \gamma}(\Gamma)$.
\end{prop}
\begin{proof}
    Let $x,y\in\Gamma$ and let $P = (x=s_0, s_1,\dots,s_k = y)$ denote the shortest path in $\Gamma$ connecting $x$ and $y$, with $s_1,\dots,s_{k-1}$ being vertices of $\Gamma$ (whereas $s_0=x$ and $s_k=y$ are arbitrary points on $\Gamma$). Then $d(x,y) = \sum_{i=1}^k\ell_i$, where $\ell_i$ is the length of the segment between $s_{i-1}$ and $s_i$. Since $f_e \in C^{0, \gamma}(e)$ for all $e \in \mathcal{E}$, we have $|f(s_i)-f(s_{i-1})|\leq C_{e_i} \ell_i^\gamma$, where $e_i$ is the edge containing the segment between $s_{i-1}$ and $s_i$. By continuity of $f$ at every vertex $v\in\mathcal{V}$, the triangle inequality now yields
    \begin{align*}
        |f(x)-f(y)|\leq \sum_{i=1}^k|f(s_i)-f(s_{i-1})| \leq \max_{e_i\in\mathcal{E}}C_{e_i} \sum_{i=1}^k \ell_i^\gamma,
    \end{align*}
    and Jensen's inequality for concave functions gives
    \begin{align*}
        \sum_{i=1}^k \ell_i^\gamma \leq k^{1-\gamma}\left(\sum_{i=1}^k \ell_i\right)^\gamma \leq k^{1-\gamma}d(x,y)^\gamma.
    \end{align*}
    The combination of the preceding inequalities gives the desired result. \hfill
\end{proof}

\begin{prop}
\label{prop:cont_vs_kc_and_zero_der}
Let $f \in C(\Gamma)$ be such that $f'_e \in C^{0,\gamma}(e)$ for all $e \in \mathcal{E}$, for some $0 < \gamma \leq 1$. Then $f'\in C^{0,\gamma}(\Gamma)$ (recall the pathwise definition of continuous derivatives from Section~\ref{subsec:prelim}) if and only if $f$ satisfies the Kirchhoff conditions \eqref{eq:kirchhoff_cond} at all vertices $v$ with $\deg(v) =2$ and $f'_e(v) =0$ for all $e\in\mathcal{E}_v$ at vertices $v$ with $\deg(v)\ge 3$.
\end{prop}

\begin{proof}[Proof of Proposition~\ref{prop:cont_vs_kc_and_zero_der}]
    We divide the proof into two parts. We first prove that $f$ has a continuous derivative $f'$ at all vertices of degree $2$ if and only if $f$ satisfies the Kirchhoff conditions \eqref{eq:kirchhoff_cond} at all vertices of degree $2$. In the second part, we prove that $f$ has a continuous derivative $f'$ at vertices of degree greater than $2$ if and only if $f'_e(v) =0$ for all $e\in\mathcal{E}_v$ at vertices $v$ of degree greater than $2$.
    
    Let $v\in\mathcal{V}$ be a vertex of degree $2$ with incident edges $e_1$ and $e_2$. Recall from the definition of the directional derivative that 
    \begin{align*}
        \partial f(e,v) = 
        \begin{cases}
            f'_e(0), \quad &\text{if }v=0,\\
            -f'_e(\ell_e), \quad &\text{if }v=\ell_e.
        \end{cases}
    \end{align*}
    Take parameterizations of $e_1$ and $e_2$ that make compatible orientations, that is, $e_1 = [0,\ell_1]$ and $e_2=[0,\ell_2]$ with $v$ corresponding to $\ell_1$ on $e_1$ and $0$ on $e_2$. 
    In this case, $\partial f(e_1,v) = -f'_{e_1}(\ell_1)$ and  $\partial f(e_2,v) = f'_{e_2}(0)$. Because of compatible orientation, we can glue the two edges into a single interval $[e_1,e_2]\simeq [0,\ell_1+\ell_2]$ and view $f$ as a function on this interval so that the function values $f_{e_1}(\ell_1) = f(v) = f_{e_2}(0)$ and $f'_{e_1}(\ell_1) = f'(v)= f'_{e_2}(0)$ are well-defined. Observe that 
     \begin{align*}
        f'_{e_1}(\ell_1) = f'(v)= f'_{e_2}(0) \iff \sum_{s \in v} \partial f(s) = -f'_{e_1}(\ell_1) + f'_{e_2}(0) = 0,
    \end{align*}
    which shows that $f'$ is continuous at all vertices of degree $2$ if and only if $f$ satisfies the Kirchhoff conditions \eqref{eq:kirchhoff_cond} at all vertices of degree $2$. This proves the first part.

    Assume that $f'_e(v) =0$ for all $e\in\mathcal{E}_v$ at vertices $v$ of degree greater than $2$. This means that all derivatives vanish and no jumps occur. Then it is immediate that $f'$ is continuous at vertices of degree greater than $2$. Conversely, assume that $f'$ is continuous on $\Gamma$ and let $v$ be a vertex with $\deg(v)\ge 3$.  Then, for all $e,e'\in\mathcal E_v$, continuity of $f'$ implies that $f'_e(v) = f'(v) = f'_{e'}(v)$. By the definition of continuous derivatives, we may choose the parametrization and orientation of the incident edges. 
    Fix three distinct edges $e_1,e_2,e_3\in\mathcal E_v$ and parametrize them so that $v$ corresponds to $\ell_1$ on $e_1$, $\ell_2$ on $e_2$, and $0$ on $e_3$. Continuity of $f'$ along the three paths passing through $v$ yields
    \begin{align*}
        \partial f(e_1, v) &= -\partial f(e_3, v),\qquad
        \partial f(e_2, v) = -\partial f(e_3, v),\qquad
        \partial f(e_1, v) = -\partial f(e_2, v).
    \end{align*}
    Recalling that $\partial f(e_1, v) = -f'_{e_1}(v)$, $\partial f(e_2, v) = -f'_{e_2}(v)$, and $\partial f(e_3, v) = f'_{e_3}(v)$, we obtain
    \begin{align*}
        -f'_{e_1}(v) &= -f'_{e_3}(v),\qquad
        -f'_{e_2}(v) = -f'_{e_3}(v),\qquad
        -f'_{e_1}(v) = f'_{e_2}(v).
    \end{align*}
    Combining these identities gives $-f'_{e_1}(v) = f'_{e_1}(v)$, and therefore $f'_{e_1}(v)=0$. Since $f'$ is continuous at $v$, it follows that $f'_e(v)=0$ for all $e\in\mathcal E_v$. This proves that $f'_e(v)=0$ for all $e\in\mathcal E_v$ at vertices $v$ with $\deg(v)\ge 3$, completing the second part. Finally, since  $f'_e \in C^{0,\gamma}(e)$ for all $e \in \mathcal{E}$, Proposition~\ref{prop:local_holder_plus_cont_at_all_vertex} implies that $f'\in C^{0,\gamma}(\Gamma)$. \hfill
\end{proof}

\begin{remarkapp}
\label{rem:rem_unik}
    Proposition~\ref{prop:cont_vs_kc_and_zero_der} remains valid if the spaces $C^{0,\gamma}(e)$ and $C^{0,\gamma}(\Gamma)$ are replaced by $C(e)$ and $C(\Gamma)$, respectively. The statement and its proof carry over without modification in this setting.
\end{remarkapp}

\begin{prop}
    \label{prp:regularity_second_derivative}
    Let $\alpha > \sfrac{5}{2}$ and let $w$ be the solution to \eqref{eq:auxiliary_problem}. Then for $t,t'\in \Gamma$,
    \begin{equation}
    \label{eq:bound_second_derivative_holder}
        \mathbb{E}\left( |L w(t) - L w(t')|^2 \right) \leq \|L  L^{-\sfrac{\alpha}{2}}\|_{\mathcal{L}(L_2(\Gamma), C^{0,\alpha-\sfrac{5}{2}}(\Gamma))} d(t,t')^{2\alpha-5}, \quad t,t'\in \Gamma.
    \end{equation}
    Further, for any $0 < \gamma < \min\{\alpha-\sfrac{5}{2}, 1\}$, $L w$ has a modification with $\gamma$-H\"older continuous sample paths.
\end{prop}
\begin{proof}
        The proof is in the same spirit as that of Proposition~\ref{prp:regularity_derivatives}. Let $\alpha>\sfrac{5}{2}$ and define $T_\alpha = L L^{-\sfrac{\alpha}{2}}$. Again, $L^{-\sfrac{\alpha}{2}}$ is a bounded operator from $L_2(\Gamma)$ to $\dot{H}^\alpha$ and therefore $T_\alpha = L^{-\sfrac{(\alpha-2)}{2}}$ is a bounded operator from $L_2(\Gamma)$ to $\dot{H}^{\alpha-2}$. By Lemma~\ref{lem:lem0003}, we have $\dot{H}^{\alpha-2}\hookrightarrow C^{0,\alpha-\sfrac{5}{2}}(\Gamma)$, which yields the boundedness of $T_\alpha:L_2(\Gamma)\to C^{0,\alpha-\sfrac{5}{2}}(\Gamma)$. Let $u_0 := \mathcal{W}(T_\alpha^\ast \delta_s)$. Then, by Lemmas~\ref{lem:regularity_field_kolm_chent} and~\ref{lem:regularity_field_kolm_chent_2}, $u_0$ is a centered Gaussian random field with covariance operator $T_\alpha T_\alpha^\ast = L L^{-\alpha} L$ and admits a modification with $\gamma$-H\"older continuous sample paths for any $0 < \gamma < \min\{\alpha-\sfrac{5}{2}, 1\}$. Now, let $w$ be the solution to \eqref{eq:auxiliary_problem}, so that $w$ has covariance operator $L^{-\alpha}$. We define $L w$ to be the centered Gaussian random field with covariance operator $L L^{-\alpha} L$. By construction, $L w$ and $u_0$ have the same finite-dimensional distributions. This implies \eqref{eq:bound_second_derivative_holder} and that $L w$ has a modification with $\gamma$-H\"older continuous sample paths. \hfill
\end{proof}

We are now in a position to prove  statement (iii) of Proposition~\ref{prp:regularity-global}.

\begin{proof}[Proof of statement (iii) of Proposition~\ref{prp:regularity-global}]
Let $\alpha>\sfrac{5}{2}$, and let $w$ be a solution to \eqref{eq:auxiliary_problem}. By Proposition~\ref{prp:regularity_second_derivative}, $L w = \kappa^2w-w''$ has a modification with $\gamma$-H\"older continuous sample paths on $\Gamma$ for any $0 < \gamma < \min\{\alpha-\sfrac{5}{2}, 1\}$. By Lemma~\ref{lem:lem99}, $w$ has a modification with $\gamma$-H\"older continuous sample paths on $\Gamma$. Since $\kappa\in C^{0,\gamma}(\Gamma)$ by assumption, Proposition~\ref{prop:Holder_product} implies that $\kappa^2w$ has $\gamma$-H\"older continuous sample paths on $\Gamma$. This means that $w''=\kappa^2w-Lw$ has $\gamma$-H\"older continuous sample paths on $\Gamma$. \\
Assume now that $\tau,\tau'' \in C(\Gamma)$ and observe that 
\begin{align}
\label{upprime}
    u'' = (\tau^{-1}w)'' = w''\tau^{-1}-2\tau'w'\tau^{-2}-\tau''w\tau^{-2}+2(\tau')^2w\tau^{-3}. 
\end{align}
We now assume $\tau' \in C(\Gamma)$ and aim to prove that $u''\in C(\Gamma)$. By the assumption on $\tau,\tau',\tau''$ and the fact that $w$ and $w''$ have $\gamma$-H\"older continuous sample paths on $\Gamma$, we have that $w''\tau^{-1}-\tau''w\tau^{-2}+2(\tau')^2w\tau^{-3}\in C(\Gamma)$. By Proposition~\ref{prp:kirchhoff_vertex_conditions_solution_auxiliary}, $w$ satisfies the Kirchhoff conditions \eqref{eq:kirchhoff_cond}. In particular, $w$ satisfies the Kirchhoff conditions \eqref{eq:kirchhoff_cond} at all vertices of degree $2$, and therefore $w'$ is continuous at all vertices of degree $2$. Since $\tau' \in C(\Gamma)$, we have that $-2\tau'w'\tau^{-2}$ is continuous at all vertices of degree $2$, and by Proposition~\ref{prop:cont_vs_kc_and_zero_der} together with Remark~\ref{rem:rem_unik}, we have that $\tau'_e(v) =0$ for all $e\in\mathcal{E}_v$ at vertices $v$ with $\deg(v)\ge 3$. The latter implies that $(-2\tau'w'\tau^{-2})(v) =0$ for all $e\in\mathcal{E}_v$ at vertices $v$ with $\deg(v)\ge 3$. This means that $-2\tau'w'\tau^{-2}$ is continuous at vertices $v$ with $\deg(v)\ge 3$, and consequently, $-2\tau'w'\tau^{-2}\in C(\Gamma)$. Altogether, we have show that $u''\in C(\Gamma)$.

Conversely, assume that $u''\in C(\Gamma)$. Our goal is to show that $\tau' \in C(\Gamma)$. 
From \eqref{upprime}, we have that 
\begin{align}
\label{eq:eql1}
    \tau'w' - (\tau')^2w\tau^{-1} = \dfrac{w''\tau}{2}-\dfrac{u''\tau^2}{2}-\dfrac{\tau''w}{2}.
\end{align}
Since $u'',w,w'',\tau,\tau'' \in C(\Gamma)$, the right hand side of \eqref{eq:eql1} is continuous, and therefore $\tau'w' - (\tau')^2w\tau^{-1} \in C(\Gamma)$ almost surely. Consequently, for each vertex $v\in\mathcal{V}$, we have
\begin{align}
\label{eq:eqhh}
    \dfrac{\tau'_{e}(v)w_{e}'(v)\tau(v)}{w(v)} - (\tau_{e}'(v))^2 = \dfrac{\tau'_{e'}(v)w_{e'}'(v)\tau(v)}{w(v)} - (\tau_{e'}'(v))^2, \quad \text{for all } e, e' \in \mathcal{E}_v,
\end{align}
where division by $w(v)$ is justified almost surely, since $\mathbb{P}\big(w(v) \neq 0\big)=1$ by the strict positive-definiteness of the covariance of $w$. For notational convenience, let
\begin{align*}
    A = \tau'_{e}(v),\quad  B = \tau'_{e'}(v),\quad p = \dfrac{w_{e}'(v)\tau(v)}{w(v)},\quad q = \dfrac{w_{e'}'(v)\tau(v)}{w(v)}.
\end{align*}
Then \eqref{eq:eqhh} can be written as
\begin{align}
\label{eq:eqok}
    Ap-A^2 = Bq-B^2, \quad p,q\in\mathbb{R}.
\end{align}
This equality must hold for almost every realization of the Gaussian process $w$. Towards a contradiction, assume that $B \neq 0$. Then the random variable $Bq - B^2$ is non-degenerate, since $q$ is a non-degenerate. In particular, for any $\epsilon > 0$ and any $r > 1$, $p(\omega) \in (A - \epsilon, A + \epsilon)$ and $q(\omega) \in (r B - \epsilon, r B + \epsilon)$ with positive probability. For such a realization $\omega$, we have $A p(\omega) - A^2 \in (-A \epsilon, A \epsilon)$ and $B q(\omega) - B^2 \in ((r-1) B^2 - B \epsilon, (r-1) B^2 + B \epsilon)$. Letting $\epsilon \to 0$, we have $Ap(\omega)-A^2\to0$ and $Bq(\omega)-B^2\to(r-1)B^2\neq 0$. Therefore, for this realization, \eqref{eq:eqok} fails unless $B = 0$. Since the equality must hold almost surely, we conclude $B = \tau'_{e'}(v) = 0$. This implies that $\tau'_{e}(v)=0$ for all $v\in\mathcal{V}$ and for all $e\in\mathcal{E}_v$. This shows that $\tau' \in C(\Gamma)$ and the first part of the statement.\\
Now we prove the second part of the statement. Assume that $\tau\in C^{2,\gamma}(\Gamma)$ and $\tau' \in C^{0,\gamma}(\Gamma)$. This implies that $\tau,\tau'' \in C^{0,\gamma}(\Gamma)$. Since $w,w'' \in C^{0,\gamma}(\Gamma)$, Propositions~\ref{prop:Holder_exponents}, \ref{prop:Holder_sum}, \ref{prop:Holder_product}, and~\ref{prop:Holder_quotient}, imply that $w''\tau^{-1}-\tau''w\tau^{-2}+2(\tau')^2w\tau^{-3}\in C^{0,\gamma}(\Gamma)$. As in the first part of the proof, we also have $-2\tau'w'\tau^{-2}\in C(\Gamma)$. By Proposition~\ref{prp:regularity_derivatives}, $w'_e\in C^{0,\gamma}(e)$ for all $e\in\mathcal{E}$. Since $\tau,\tau'\in C^{0,\gamma}(\Gamma)$, we also have $\tau_e,\tau'_e\in C^{0,\gamma}(e)$ for all $e\in\mathcal{E}$. Propositions~\ref{prop:Holder_reciprocal} and~\ref{prop:Holder_product} imply that $\tau^{-2}_e\in C^{0,\gamma}(e)$ for all $e\in\mathcal{E}$. Proposition~\ref{prop:Holder_product} again implies that $-2\tau'_ew'_e\tau_e^{-2}\in C^{0,\gamma}(e)$ for all $e\in\mathcal{E}$. This now allows us to use Proposition~\ref{prop:local_holder_plus_cont_at_all_vertex} to conclude that $-2\tau'w'\tau^{-2}\in C^{0,\gamma}(\Gamma) $ and consequently, $u''\in C^{0,\gamma}(\Gamma)$. This finishes the proof. \hfill

\end{proof}

\begin{remarkapp}
In the preceding proof, we implicitly assumed that $(w(v), w'_e(v), w'_{e'}(v))$ constitutes a non-degenerate Gaussian vector for every vertex $v$ with $\deg(v)\geq 3$ and every pair of distinct incident edges $e, e'\in\mathcal{E}_v$. While this can be verified in full generality via an analysis of the Cameron--Martin space (a technically involved argument we omit here), we provide a rigorous justification for the case where $\alpha$ is an integer and $\kappa$ is constant.

By the conditional representation developed in \cite{Bolin2023Statistical}, the solution $w$ is obtained by conditioning independent stationary boundaryless Mat\'ern Gaussian processes $\{u_e\}_{e\in\mathcal{E}}$ on the edges to satisfy the Kirchhoff conditions at the vertices. For a fixed vertex $v$ of degree $d\geq 3$, let
\begin{align*}
    \mathbf{X}=\bigl(u_{e_1}(v),u'_{e_1}(v),\dots,u_{e_d}(v),u'_{e_d}(v)\bigr)^\top\in\mathbb{R}^{2d}
\end{align*}
be the vector of boundary values and inward derivatives of these independent edge processes. By \cite[Theorem 6]{Bolin2023Statistical}, we have $\mathbf{X}\sim\mathcal{N}(\mathbf{0},\mathbf{\Sigma})$ with $\mathbf{\Sigma}$ positive definite and block-diagonal across edges.

The Kirchhoff conditions at $v$ impose $d$ independent linear constraints on $\mathbf{X}$: the continuity conditions $u_{e_1}(v)=\cdots=u_{e_d}(v)$ (contributing $d-1$ constraints) and the flux condition $\sum_{i=1}^d u'_{e_i}(v)=0$ (contributing one constraint). Let $\mathbf{A}\in\mathbb{R}^{d\times 2d}$ denote the corresponding full-rank constraint matrix. Then the conditional distribution of $\mathbf{X}$ given $\mathbf{A}\mathbf{X}=\mathbf{0}$ is Gaussian with support $\ker \mathbf{A}$ and covariance $\mathbf{\Sigma}_{\mathrm{cond}}=\mathbf{\Sigma}-\mathbf{\Sigma} \mathbf{A}^\top(\mathbf{A}\mathbf{\Sigma} \mathbf{A}^\top)^{-1}\mathbf{A}\mathbf{\Sigma}$, which is positive definite on $\ker A$.

The vector of interest, $\mathbf{Y}=(w(v),w'_e(v),w'_{e'}(v))$, corresponds, without loss of generality, to $\mathbf{Y}=\bigl(u_{e_1}(v),u'_{e_1}(v),u'_{e_2}(v)\bigr)=\mathbf{B}\mathbf{X}$, where $\mathbf{B}\in\mathbb{R}^{3\times 2d}$ selects the indicated coordinates. On $\ker \mathbf{A}$, the continuity constraints identify all vertex values, so every element of $\ker \mathbf{A}$ is determined by a common value $c$ and derivatives $d_1,\dots,d_d$ satisfying $\sum_{i=1}^d d_i=0$. Thus the restriction of $\mathbf{B}$ to $\ker \mathbf{A}$ is the map $(c,d_1,\dots,d_d)\longmapsto (c,d_1,d_2).$ Since $d\geq 3$, this map is subjective: given any $(a,b_1,b_2)\in\mathbb{R}^3$, we may choose $c=a$, $d_1=b_1$, $d_2=b_2$, $d_3=-(b_1+b_2)$, and $d_4=\cdots=d_d=0$. Hence $\mathbf{B}|_{\ker \mathbf{A}}$ has rank $3$. Consequently, the conditional covariance $\mathbf{B}\mathbf{\Sigma}_{\mathrm{cond}}\mathbf{B}^\top$ is positive definite, and therefore $\mathbf{Y}$ is non-degenerate.
\end{remarkapp}

\section{Proofs of Propositions~\ref{prp:variance_stationary_regularity}, \ref{prp:covariates_regularity}, and~\ref{prp:kriging_predictor_regularity}}
\phantomsection % Ensure the hyperlink points here

\begin{proof}[Proof of Proposition~\ref{prp:variance_stationary_regularity}]
First, let $w$ be the solution to \eqref{eq:auxiliary_problem} and observe that by \citet[Lemma 4.6]{Bolin2024Regularity}, 
\begin{align*}
    \|w(s) - w(s')\|_{L_2(\Omega)} = \sqrt{\mathbb{E}\big(|w(s) - w(s')|^2\big)} \leq \|L^{-\sfrac{\alpha}{2}}\|_{\mathcal{L}(L_2(\Gamma), C^{0,\tilde{\alpha}}(\Gamma))} d(s, s')^{\tilde{\alpha}}.
\end{align*}
Thus, we have that
\begin{align*}
    | \|w(s)\|_{L_2(\Omega)} - \|w(s')\|_{L_2(\Omega)}| \leq \|w(s) - w(s')\|_{L_2(\Omega)} \leq \|L^{-\sfrac{\alpha}{2}}\|_{\mathcal{L}(L_2(\Gamma), C^{0,\tilde{\alpha}}(\Gamma))} d(s, s')^{\tilde{\alpha}}.
\end{align*}
Now, note that $\|w(s)\|_{L_2(\Omega)} = \sqrt{\mathbb{E}(w(s)^2)} = \sigma_\kappa(s)$. Therefore, this shows that $\sigma_\kappa \in C^{0,\tilde{\alpha}}(\Gamma)$. This proves part of statement (i). Further, this shows that $\sigma_\kappa$ is continuous in the compact set $\Gamma$. By Proposition~\ref{prp:positive_definite_cov}, we have that for every $s\in\Gamma$, $\sigma_\kappa(s) >0$. Thus, there exists $\sigma_{min}>0$ such that for every $s\in\Gamma$, $\sigma_\kappa(s)\geq \sigma_{min} >0,$ which shows that Assumption~\ref{assumption1} holds.
The proof of statement (i) is concluded by applying statement (i) of Proposition~\ref{prp:regularity-global}.

Let us now prove statement (ii). To this end, let $\alpha > \sfrac{3}{2}$ and start by observing that for every $e\in\mathcal{E}$, and every $t\in e$, we have that
\begin{align*}
    \tau_e'(t) = \frac{\sigma_0^{-1}}{2\sigma_{\kappa,e}(t)} \left(\partial_1 \rho_e(t,t) + \partial_2 \rho_e(t,t) \right),
\end{align*}
where $\rho(\cdot,\cdot)$ is the covariance function of $w$, $\sigma_{\kappa,e} = \sigma_\kappa|_e$, $\rho_e$ is the restriction of $\rho$ to $e\times e$, $\partial_j$ stands for the partial derivative with respect to the $j$-th argument. Further, by symmetry, we have that $\partial_1 \rho_e(t,t) = \partial_2 \rho_e(t,t)$. Thus, we have that
\begin{align*}
    \tau_e'(t) = \frac{\sigma_0^{-1}\partial_1 \rho_e(t,t)}{\sigma_{\kappa,e}(t)}.
\end{align*}

Now, from the first part of the proof, $\sigma_\kappa \in C^{0,\tilde{\alpha}}(\Gamma)$. Thus, to show that $\tau_e \in C^{1,\alpha-\sfrac{3}{2}}(e)$, it suffices to show that the function $t\mapsto \partial_1 \rho_e(t,t)$ is $(\alpha-\sfrac{3}{2})$-H\"older continuous.

To such an end, observe that by the same arguments as in the proof of Proposition~\ref{prp:kirchhoff_vertex_conditions_solution_auxiliary}, we have that for every $e\in\mathcal{E}$ and every $t\in e$, $\partial_1 \rho_e(t,t) = \mathbb{E}(w'_e(t)w_e(t))$, so that for $t,t'\in e$, we have that, by Cauchy-Schwarz, Proposition~\ref{prp:regularity_derivatives} and \cite[Lemma 4.6]{Bolin2024Regularity}, there exists $C_e>0$, that does not depend on $t,t'$, such that
\begin{align*}
\left|\partial_1 \rho_e(t,t) - \partial_1\rho_e(t,t') \right| &= \left|\mathbb{E}(w_e'(t)w_e(t)) - \mathbb{E}(w_e'(t')w_e(t')) \right| \\
&= \left|\mathbb{E}(w_e'(t)(w_e(t)-w_e(t'))) + \mathbb{E}(w_e(t')(w_e'(t) - w_e'(t'))) \right| \\
&\leq \left|\mathbb{E}(w_e'(t)(w_e(t)-w_e(t'))) \right| + \left|\mathbb{E}(w_e(t')(w_e'(t) - w_e'(t'))) \right| \\
&\leq \sqrt{\mathbb{E}(w_e'(t)^2)}\sqrt{\mathbb{E}((w_e(t)-w_e(t'))^2)}\\
&\quad+ \sqrt{\mathbb{E}(w_e(t')^2)}\sqrt{\mathbb{E}((w_e'(t) - w_e'(t'))^2)}\\
&\leq C_e \sqrt{\mathbb{E}(w_e'(t)^2)} |t-t'|^{\sfrac{1}{2}} + C_e\sqrt{\mathbb{E}(w_e(t')^2)} |t-t'|^{\alpha-\sfrac{3}{2}}\\
& = C_e \sqrt{\mathbb{E}(w_e'(t)^2)} |t-t'|^{\sfrac{1}{2}} + C_e \sigma_{\kappa,e}(t') |t-t'|^{\alpha-\sfrac{3}{2}}.
\end{align*}
Now, observe that $\sigma_\kappa(\cdot)$ is continuous in the compact set $\Gamma$, so that it is bounded. Let $C_\kappa>0$ be such that $\sigma_\kappa(s)\leq C_\kappa$ for every $s\in\Gamma$. Further, let $\widehat{\sigma_\kappa}(t) = \sqrt{\mathbb{E}(w_e'(t)^2)}$, and observe that by Proposition~\ref{prp:regularity_derivatives} and the same arguments as in the first part of the proof, we have that $\widehat{\sigma_\kappa} \in C^{0,\alpha-\sfrac{3}{2}}(\Gamma)$, and since $\Gamma$ is compact, it is bounded. By changing $C_\kappa$ if necessary, we may assume that $\widehat{\sigma_\kappa}(t)\leq C_\kappa$ for every $t\in\Gamma$. Thus, we have that
\begin{align*}
    \left|\partial_1 \rho_e(t,t) - \partial_1\rho_e(t,t') \right| \leq C_e C_\kappa (|t-t'|^{\sfrac{1}{2}} + |t-t'|^{\alpha-\sfrac{3}{2}}) \leq C_e C_\kappa (\ell_e^{2-\alpha}+1) |t-t'|^{\alpha-\sfrac{3}{2}}.
\end{align*}
Therefore, for every $e\in\mathcal{E}$, we have that $t\mapsto \partial_1 \rho_e(t,t)$ is $(\alpha-\sfrac{3}{2})$-H\"older continuous. Hence, in view of the previous results, this shows that $\tau_e \in C^{1,\alpha-\sfrac{3}{2}}(e)$.

It remains to be shown that $\tau$ satisfies the Kirchhoff conditions. By the first part of the proof, $\tau$ is continuous. Now, for every $v\in\mathcal{V}$, we have, by continuity of $\sigma_\kappa$, that
\begin{align*}
    \sum_{\substack{s\in v\\ s=(e,v)}} \partial_e \tau(v) = \sum_{\substack{s\in v\\ s=(e,v)}} \frac{\sigma_0^{-1}}{\sigma_{\kappa,e}(v)} \partial_1 \rho_e(v,v) = \frac{\sigma_0^{-1}}{\sigma_{\kappa}(v)}\sum_{\substack{s\in v\\ s=(e,v)}} \partial_1 \rho_e(v,v).
\end{align*}
Thus, to conclude the proof, we need to show that $\sum_{\substack{s\in v\\ s=(e,v)}} \partial_1 \rho_e(v,v) = 0$. In fact, we have that by continuity of $w$ and Proposition~\ref{prp:kirchhoff_vertex_conditions_solution_auxiliary}, that
\begin{align*}
    \sum_{\substack{s\in v\\ s=(e,v)}}\! \partial_1 \rho_e(v,v) = \!\sum_{s\in v}\! \mathbb{E}(w(s)\partial w(s)) = \mathbb{E}\Bigl(\sum_{s\in v}\! w(s) \partial w(s) \Bigr) = \mathbb{E}\Bigl(w(v)\!\sum_{s\in v}\! \partial w(s)\Bigr) = 0.
\end{align*}
This concludes the proof. \hfill
\end{proof}

\begin{proof}[Proof of Proposition~\ref{prp:covariates_regularity}]
First, let $h:\Gamma\to\mathbb{R}$ be defined as
\begin{align*}
    h(s) = \theta_0 + \sum_{j=1}^m \theta_j g_j(s),\quad s\in\Gamma.
\end{align*}
By Proposition~\ref{prop:Holder_sum} together with the fact that $g_1,\ldots,g_m\in C^{0,\gamma}(\Gamma)$, we have $h\in C^{0,\gamma}(\Gamma)$.

Now, observe that since $\Gamma$ is a compact metric space and $h$ is continuous, there exists $M>0$ such that $|h(s)|\leq M$ for all $s\in\Gamma$. Now, observe that, by the mean-value theorem, $t\mapsto \exp(t)$ is Lipschitz (that is, $1$-H\"older continuous) on the interval $[-M,M]$. Therefore, by using the fact that $h\in C^{0,\gamma}(\Gamma)$ and $\exp(\cdot)$ is Lipschitz in $[-M,M] \supset h(\Gamma)$, we can apply Proposition~\ref{prop:Holder_composition} to obtain that
\begin{align*}
    s\mapsto \exp(h(s)) = \exp\left(\theta_0 + \sum_{j=1}^m \theta_j g_j(s)\right)
\end{align*}
is $\gamma$-H\"older continuous, that is, $\tau \in C^{0,\gamma}(\Gamma)$. This proves the first  part of the proposition.

Now, suppose that for every $e\in\mathcal{E}$, $g_{1,e},\ldots, g_{m,e}\in C^{1,\gamma}(e)$, and that each $g_j$ satisfies the Kirchhoff conditions, that is, equation \eqref{eq:kirchhoff_cond}. Then, since for every $e\in\mathcal{E}$, the functions $\exp(\cdot)$ and $h(\cdot)$ are continuously differentiable on $e$, the function $\tau(\cdot) = \exp(h(\cdot))$ is continuously differentiable on $e$. 

Now, observe that, by the chain rule, $\tau_e'(s) = \exp(h_e(s)) h_e'(s)$, where $h_e(\cdot)$ is the restriction of $h(\cdot)$ to $e$. Further, since $h_e(\cdot)$ is continuously differentiable on $e$ and $e$ is compact, we have that $h_e'(s)$ is bounded on $e$. Therefore, there exists $M>0$ such that $|\tau_e'(s)|\leq M$ for all $s\in e$, and by the mean-value theorem, $\tau_e(\cdot) = \exp(h_e(\cdot))$ is Lipschitz on $e$ (or, $1$-H\"older continuous on $e$). In particular, by Proposition~\ref{prop:Holder_exponents}, $\tau_e(\cdot) \in C^{0,\gamma}(e)$. Furthermore, since $g_{1,e},\ldots, g_{m,e}\in C^{1,\gamma}(e)$, we have that $g_{1,e}',\ldots, g_{m,e}'\in C^{0,\gamma}(e)$. Hence, $h_e'\in C^{0,\gamma}(e)$. Therefore, by Proposition~\ref{prop:Holder_product}, $\tau_e'(\cdot) = \exp(h_e(\cdot)) h_e'(\cdot)$ is $\gamma$-H\"older continuous, that is, $\tau_e'\in C^{0,\gamma}(e)$, which implies that $\tau_e \in C^{1,\gamma}(e)$.

It remains to be shown that $\tau(\cdot)$ satisfies the Kirchhoff conditions. The continuity follows from the first part of the proof. Thus, since $\tau(\cdot) = \exp(h(\cdot))$ is continuous, we have that for every $v\in\mathcal{V}$,
\begin{align*}
\sum_{s\in v} \partial \tau(s) &= \sum_{s\in v} \partial \exp(h(s)) = \sum_{s\in v} \exp(h(s)) \partial h(s)
    = \exp(h(v)) \sum_{s\in v} \partial_e h(v) \\
    &= \exp(h(v)) \sum_{s\in v} \sum_{j=1}^m \theta_j \partial g_{j}(s)
    = \exp(h(v)) \sum_{j=1}^m \theta_j \sum_{s\in v} \partial g_{j}(s) = 0.
\end{align*}
Finally, assume that $g_1,\dots,g_m\in C^{2,\gamma}(\Gamma)$ and $g_1',\dots,g'_m \in C^{0,\gamma}(\Gamma)$. Then $g_1,g''_1 \ldots, g_m,g''_m \in C^{0,\gamma}(\Gamma)$. Observe that
\begin{align}
\label{hpandhpp}
    h'(s) = \sum_{j=1}^m \theta_j g'_j(s)\quad\text{ and }\quad h''(s) = \sum_{j=1}^m \theta_j g''_j(s)
\end{align}
and
\begin{align}
\label{taupandtaupp}
    \tau'(s) = \tau(s)h'(s) \quad\text{ and }\quad\tau''(s) = \tau(s)[(h'(s))^2+h''(s)].
\end{align}
By Proposition~\ref{prop:Holder_sum} and \eqref{hpandhpp}, $h',h''\in C^{0,\gamma}(\Gamma)$, and by the first part of the proof, $\tau \in C^{0,\gamma}(\Gamma)$. By \eqref{taupandtaupp}, and Propositions~\ref{prop:Holder_sum} and~\ref{prop:Holder_product}, we have $\tau',\tau''\in C^{0,\gamma}(\Gamma)$. \hfill
\end{proof}

We will now prove Proposition~\ref{prp:kriging_predictor_regularity}. But first, let us prove an auxiliary result that tells us that the kriging predictor belongs to the Cameron-Martin space associated to the solutions $w$ to equation \eqref{eq:auxiliary_problem_2}. First, recall, from Section~\ref{app:theoretical_details},that the Cameron-Martin space associated to $w$ is $\mathcal{H}_w = \left\{h(s) = \mathbb{E}(v u(s)): s\in\Gamma, v\in H_w\right\}$, where $H_w = \overline{\text{span}\{w(s): s\in\Gamma\}}$ is the Gaussian space associated to $w$.

\begin{lemma}\label{lem:kriging_predictor_regularity_CM}
Under the same assumptions, and using the same notation, as in Proposition~\ref{prp:kriging_predictor_regularity}, the centered kriging predictor $\widetilde{z}(\cdot) := z(\cdot) - \beta_0$ belongs to $\mathcal{H}_w$, where $w$ is the solution to equation \eqref{eq:auxiliary_problem_2}.
\end{lemma}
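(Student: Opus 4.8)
The plan is to exploit the fact that, for jointly Gaussian families, the kriging predictor is an \emph{explicit finite} linear combination of covariance evaluations $s\mapsto\rho(s,s_i)$, and that any such combination is, by construction, an element of the Cameron--Martin space $\mathcal{H}_w$. So the first thing I would do is record joint Gaussianity: in either specification of the data, for every $s\in\Gamma$ the vector $(w(s),\,z_1-\beta_0,\dots,z_n-\beta_0)$ is centered and jointly Gaussian, since in the direct-observation case $z_i-\beta_0 = w(s_i)$, while under \eqref{eq:z_1_z_n_data} we have $z_i-\beta_0 = w(s_i)+\epsilon_i$ with $\epsilon_1,\dots,\epsilon_n\sim N(0,\sigma_\epsilon^2)$ independent of each other and of $w$.

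Next I would verify that $\n{\Sigma} := \mathrm{Cov}\big((z_1,\dots,z_n)^\top\big)$ is invertible. In the direct case $\n{\Sigma}$ is the covariance matrix of $(w(s_1),\dots,w(s_n))$ at the \emph{distinct} points $s_1,\dots,s_n$, which is strictly positive definite by Proposition~\ref{prp:positive_definite_cov} (applied to the solution of \eqref{eq:auxiliary_problem_2}, which differs from \eqref{eq:auxiliary_problem} only by the constant scaling $\sigma_0$); in the noisy case $\n{\Sigma} = \n{\Sigma}_w + \sigma_\epsilon^2\n{I}$ is the sum of a positive semidefinite matrix and a positive definite one, hence positive definite. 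This is the one step that uses more than bookkeeping with the Gaussian conditioning formula, and it is exactly where strict positive definiteness of the covariance of $w$ is indispensable — so I expect it to be the main (though mild) obstacle.

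With invertibility in hand, the standard Gaussian conditioning formula, together with $\mathbb{E}(w(s))=0$ and $\mathbb{E}(z_i)=\beta_0$, yields
\[
\widetilde{z}(s) = \mathbb{E}\big(w(s)\mid z_1,\dots,z_n\big) = \sum_{i=1}^n \lambda_i\,\mathrm{Cov}\big(w(s),z_i\big), \qquad \n{\lambda} = \n{\Sigma}^{-1}\big(\n{z}-\beta_0\n{1}\big)\in\ar^n,
\]
a fixed vector. Since the measurement error (when present) is independent of $w$, we have $\mathrm{Cov}(w(s),z_i)=\mathrm{Cov}(w(s),w(s_i))=\rho(s,s_i)$ in both cases, so $\widetilde{z}(s) = \sum_{i=1}^n \lambda_i\,\rho(s,s_i)$ for all $s\in\Gamma$. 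Finally I would set $v := \sum_{i=1}^n \lambda_i\, w(s_i)$; as a finite linear combination of the random variables $w(s_i)$ it lies in $H_w$, and $\mathbb{E}\big(v\,w(s)\big) = \sum_{i=1}^n \lambda_i\,\rho(s_i,s) = \widetilde{z}(s)$ for every $s\in\Gamma$. By the definition of the Cameron--Martin space this gives $\widetilde{z}\in\mathcal{H}_w$, which is the desired conclusion.
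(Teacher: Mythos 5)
Your proof is correct and follows essentially the same route as the paper's: both express $\widetilde{z}(s)$ as a finite linear combination $\sum_i \lambda_i\,\rho(s,s_i)$ via the kriging formula (with invertibility of the covariance matrix secured by Proposition~\ref{prp:positive_definite_cov}), and then observe that each $\rho(\cdot,s_i)=\mathbb{E}\big(w(s_i)\,w(\cdot)\big)$ lies in $\mathcal{H}_w$. The only cosmetic difference is that you derive the kriging formula from the Gaussian conditioning identity and package the coefficients into a single element $v=\sum_i\lambda_i w(s_i)\in H_w$, whereas the paper states the formula directly and invokes the vector-space structure of $\mathcal{H}_w$.
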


\begin{proof}
    First, let $\rho(\cdot, \cdot)$ denote the covariance function of $w$. By Proposition~\ref{prp:positive_definite_cov}, $\rho(\cdot, \cdot)$ is strictly positive definite. Define $\boldsymbol{\Sigma}_n = (\rho(s_i, s_j))_{i,j=1}^n$. The kriging predictor $z(\cdot)$ is 
\begin{equation}
\label{eq:kriging_formula}
    z(s) = \beta_0 + \boldsymbol{c}_n(s)^\top \widetilde{\boldsymbol{\Sigma}}_n^{-1} (\boldsymbol{z} - \beta_0),
\end{equation}
where $\boldsymbol{z} = (z_1, \ldots, z_n)$, $\boldsymbol{c}_n(s) = (\rho(s, s_1), \ldots, \rho(s, s_n))^\top$ for all $s \in \Gamma$. For direct observations, $\widetilde{\boldsymbol{\Sigma}}_n = \boldsymbol{\Sigma}_n$, while for noisy observations following \eqref{eq:z_1_z_n_data}, $\widetilde{\boldsymbol{\Sigma}}_n = \boldsymbol{\Sigma}_n + \sigma_\epsilon^2 \boldsymbol{I}_n$, where $\boldsymbol{I}_n$ is the $n \times n$ identity matrix. The matrix $\widetilde{\boldsymbol{\Sigma}}_n^{-1}$ exists because $\rho(\cdot, \cdot)$ is strictly positive definite.

Now, observe that the expression for $\boldsymbol{c}_n(\cdot)$ allows us to write
\begin{align*}
    \widetilde{z}(s) = \boldsymbol{c}_n(s)^\top \widetilde{\boldsymbol{\Sigma}}_n^{-1} (\boldsymbol{z} - \beta_0) = \sum_{j=1}^n \alpha_{j,n} \rho(s, s_j),
\end{align*}
where $\alpha_{j,n}\in\mathbb{R}$ do not depend on $s$. Therefore, by taking $v_j = w(s_j)$, we have that
\begin{align*}
    \rho(\cdot, s_j) = \mathbb{E}(w(s_j)w(\cdot)) = \mathbb{E}(v_j w(\cdot)) \in \mathcal{H}_w,
\end{align*}
for all $j=1,\ldots,n$. The proof is completed by observing that $\widetilde{z}(\cdot)$ is a linear combination of elements in $\mathcal{H}_w$, and that $\mathcal{H}_w$ is a vector space. \hfill
\end{proof}

We are now in a position to prove Proposition~\ref{prp:kriging_predictor_regularity}:

\begin{proof}[Proof of Proposition~\ref{prp:kriging_predictor_regularity}]
    Begin by observing that, by Lemma~\ref{lem:kriging_predictor_regularity_CM}, $\widetilde{z}(\cdot) \in \mathcal{H}_w$. Further, as in the proof of Proposition~\ref{prp:kirchhoff_vertex_conditions_solution_auxiliary}, we have that $\mathcal{H}_w$ is given by $\dot{H}^\alpha$. By Lemma~\ref{lem:lem0003}, we have that $\dot{H}^\alpha \hookrightarrow C^{0,\tilde{\alpha}}(\Gamma)$, where ${\tilde{\alpha} = \min\{\alpha - \sfrac{1}{2}, 1\}}$. This proves that $\widetilde{z}(\cdot) \in C^{0,\tilde{\alpha}}(\Gamma)$. Further, since the constant function equal to $\beta_0$ also belongs to $C^{0,\tilde{\alpha}}(\Gamma)$, we have that $z(\cdot) \in C^{0,\tilde{\alpha}}(\Gamma)$. By Proposition~\ref{prop:Holder_exponents}, and the fact that $\Gamma$ is compact, we have that for any $0 < \gamma \leq \tilde{\alpha}$,
    $z(\cdot) \in C^{0,\gamma}(\Gamma).$
    This proves the first claim. 
    
    Let us now prove the second claim. To such an end, let $\alpha > \sfrac{3}{2}$, and observe that by \citet[Theorem 4.1]{Bolin2024Regularity}, we have that $\dot{H}^\alpha \subset \widetilde{H}^\alpha(\Gamma) \subset \widetilde{H}^1(\Gamma)$. In particular, $z\in \widetilde{H}^1(\Gamma)$, so that for every $e\in\mathcal{E}$, with the identification $e = [0,\ell_e]$, we have 
    \begin{align*}
        z_{e}(x) = z_e(0) + \int_0^x z_e'(t)\, dt,
    \end{align*}
    where $z_e'(\cdot)$ is the weak derivative of $z_e(\cdot)$. Now, let $D$ be the derivative operator acting on $H^\alpha(e)$, and $R_e$ the restriction operator from $H^{\alpha}(\Gamma)$ to $H^{\alpha}(e)$. Then, we have by Lemmas~\ref{lem:lemma_restriction_bounded} and~\ref{lem:lemma_derivative_bounded} that $D R_e : \widetilde{H}^\alpha(\Gamma) \to H^{\alpha-1}(e)$, so that
    \begin{align*}
        z_e'(\cdot) = D R_e z(\cdot) \in H^{\alpha-1}(e).
    \end{align*}
    On the other hand, observe that, by the Sobolev embedding \cite[Corollary 6]{Bolin2024Gaussian}, we have $H^{\alpha-1}(e)\hookrightarrow C^{0,\alpha-\sfrac{3}{2}}(e)$. Thus, $z_{e}'(\cdot) \in C^{0,\alpha-\sfrac{3}{2}}(e)$. In particular, for every $e\in\mathcal{E}$, we have $z_{e}(\cdot) \in C^{1,\alpha-\sfrac{3}{2}}(e)$. By Proposition~\ref{prop:Holder_exponents}, and the fact that $\Gamma$ is compact, we have that for any $0 < \gamma \leq \alpha-\sfrac{1}{2}$, $z_{e}(\cdot) \in C^{1,\gamma}(e)$. To conclude the proof, note that by \citet[Theorem 4.1]{Bolin2024Regularity} and the fact that $\alpha>\sfrac{3}{2}$, we have that $\dot{H}^\alpha \subset K_\alpha(\Gamma)$, where $K_\alpha(\Gamma)$ is given in \eqref{eq:K_alpha_definition}. Thus, since $z(\cdot) \in \dot{H}^\alpha$, it follows that $z(\cdot) \in K_\alpha(\Gamma)$, which in turn implies that $z(\cdot)$ satisfies the Kirchhoff conditions given by \eqref{eq:kirchhoff_cond}. 
    This concludes the proof. \hfill
\end{proof}

\section{FEM details and proof of Proposition~\ref{conv_cov_func}}
\phantomsection % Ensure the hyperlink points here
\label{app:fem_and_conv_cov_f}

Following \cite{Arioli2018AFinite}, the finite element discretization is constructed by subdividing each edge $e\in\mathcal{E}$ into $n_{e}\geq 2$ regular segments of length $h_{e}$, which are delimited by the nodes 
\begin{align*}
    0 = x_0^{e}<x_1^{e}<\dots<x_{n_{e}-1}^{e}< x_{n_{e}}^{e} = \ell_{e}. 
\end{align*}
For each $j = 1,\dots,n_{e}-1$, we consider the following standard hat basis functions 
\begin{equation*}
    \varphi_j^{e}(x)=\begin{cases}
        1-\dfrac{|x_j^{e}-x|}{h_{e}},&\text{ if }x_{j-1}^{e}\leq x\leq x_{j+1}^{e},\\
        0,&\text{ otherwise}.
    \end{cases}
\end{equation*}
For each $e\in\mathcal{E}$, the set of hat functions $\{\varphi_1^{e},\dots,\varphi_{n_{e}-1}^{e}\}$ is a basis for the space
\begin{equation*}
    V_{h_{e}} = \left\{w\in H_0^1(e)\;\Big|\;\forall j = 0,1,\dots,n_{e}-1:w|_{[x_j^{e}, x_{j+1}^{e}]}\in\mathbb{P}^1\right\},
\end{equation*}
where $\mathbb{P}^1$ is the space of linear functions on $[0,\ell_{e}]$. For each vertex $v\in\mathcal{V}$, we define
\begin{equation*}
    \mathcal{N}_v = \left\{\bigcup_{e\in\left\{e\in\mathcal{E}_v: v = x_0^e\right\}}[v,x_1^e]\right\}\bigcup\left\{\bigcup_{e\in\left\{e\in\mathcal{E}_v: v = x^e_{n_e}\right\}}[x^e_{n_e-1},v]\right\},
\end{equation*}
which is a star-shaped set with center at $v$ and rays made of the segments contiguous to $v$. On $\mathcal{N}_v$, we define the hat functions as
\begin{equation*}
    \phi_v(x)=\begin{cases}
        1-\dfrac{|x_v^{e}-x|}{h_{e}},&\text{ if }x\in\mathcal{N}_v\cap e \text{ and }e\in\mathcal{E}_v,\\
        0,&\text{ otherwise},
    \end{cases}
\end{equation*}
where $x_v^e$ is either $x_0^e$ or $x_{n_e}^e$ depending on the edge direction and its parameterization.

Figure~\ref{app:basisfunctions} provides an illustration of the system of basis functions $\{\varphi_j^e, \phi_v\}$ (solid gray lines) on the tadpole graph. Note that for all $e_i\in\mathcal{E}$, $h_{e_i} = 1/4$. Corresponding to node $x_5^{e_2}$, we have plotted the basis function $\varphi_5^{e_2}$ in blue. The sets $\mathcal{N}_{v_1}$ and $\mathcal{N}_{v_2}$ are depicted in green and their corresponding basis functions $\phi_{v_1}$ and $\phi_{v_2}$ are shown in red.

\begin{figure}[!t]
    \centering
    \includegraphics[width=0.5\textwidth]{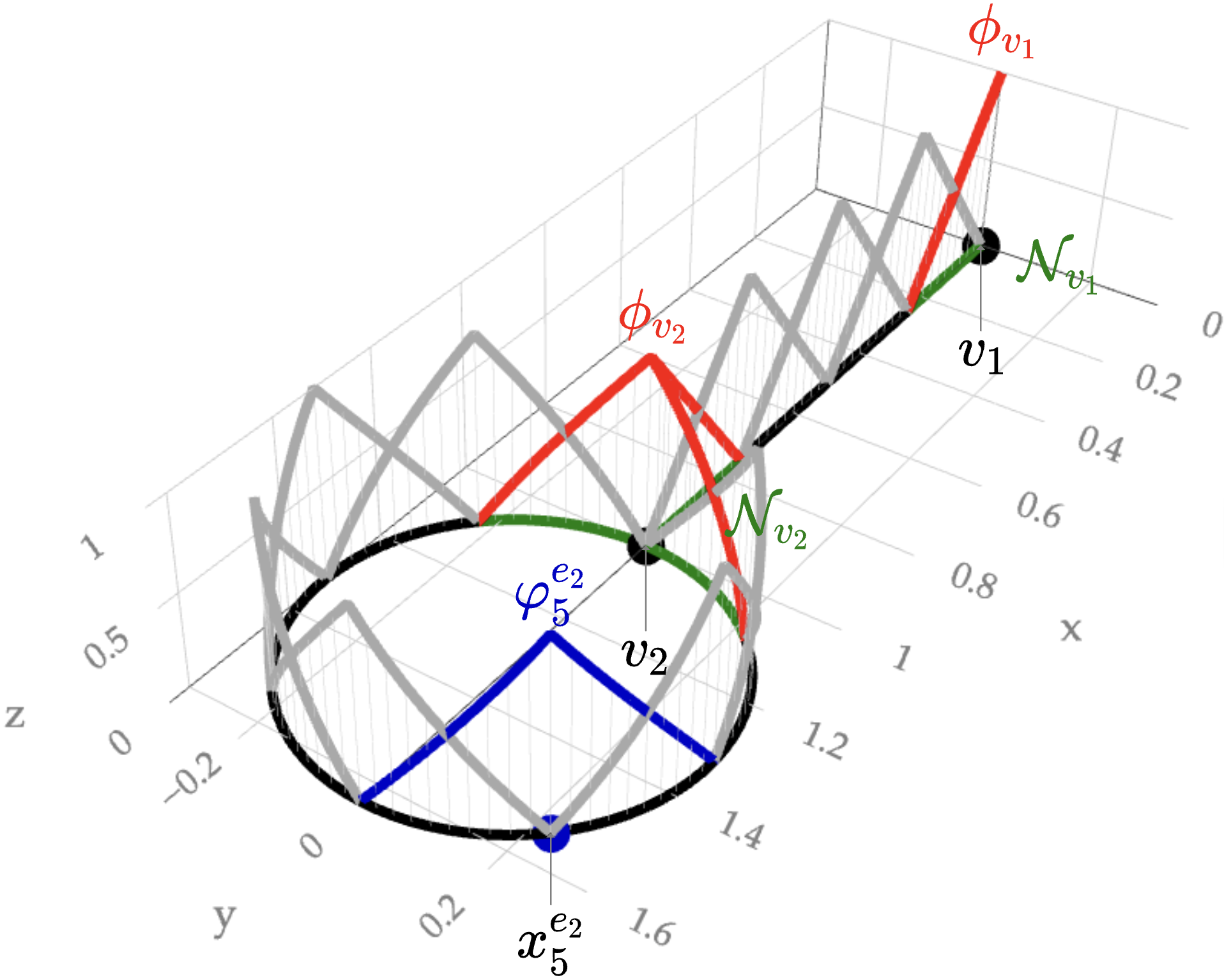}
    \caption{Illustration of the system of basis functions $\{\varphi_j^e, \phi_v\}$ on the tadpole graph.}
    \label{app:basisfunctions}
\end{figure}

Recall, from Subsection~\ref{fem_approx}, the definition of the discretized operator $L_h$. This operator $L_h$ is positive definite and has a collection of eigenvalues $(\lambda_{j,h})_{j=1}^{N_h}$ that can be arranged as $0<\lambda_{1,h}\leq \dots\leq \lambda_{N_h,h}$ and satisfy $\lambda_j\leq\lambda_{j,h},j\in\mathbb{N}$ \citep[Sec. 6.3]{Bolin2024Regularity}. The corresponding eigenfunctions, $(e_{j,h})_j$, are orthonormal in $L_2(\Gamma)$. 

Having introduced the system of basis functions $\{\varphi_j^e, \phi_v\}$ (in the main text, we do not distinguish between them and refer to them jointly as $\{\psi_j\}_{j=1}^{N_h}$), we can now define the finite element space $V_h\subset H^1(\Gamma)$ as $V_h = (\bigoplus_{e\in\mathcal{E}} V_{h_e})\bigoplus V_v$, where $V_v = \text{span}(\{\phi_v:v\in\mathcal{V}\})$ and $\dim(V_h)$ is given by $N_h = |\mathcal{V}| + \sum_{e\in\mathcal{E}}n_e$. 

In what follows, $P_h:L_2(\Gamma)\longrightarrow V_h$ represents the $L_2(\Gamma)$-orthogonal projection onto $V_h$, $h$ denotes $\max_{e\in\mathcal{E}}h_e$, and the expression $A\lesssim_{\alpha_1,\dots,\alpha_k}B$ indicates that $A\leq CB$ for some constant $C= C(\alpha_1,\dots,\alpha_k)$, where the parameters $\alpha_1,\dots,\alpha_k$ represent given data.
\begin{prop}
\label{propfemapprox}
    Let Assumption~\ref{assumption1} hold, let $\alpha>\sfrac{1}{2}$, and $\varrho^\alpha$ be the covariance function of the solution to the auxiliary problem \eqref{eq:auxiliary_problem}. Let, now, $\varrho^\alpha_h$ be the covariance function of the solution to the discretized auxiliary problem 
    $L_h^{\sfrac{\alpha}{2}}w_h = \mathcal{W}_h,$
    where is Gaussian white noise defined on $V_h$. Then,
    \begin{equation}
        \|\varrho^\alpha - \varrho^\alpha_h\|_{L_2(\Gamma\times\Gamma)}\lesssim_{\sigma,\alpha,\kappa,\Gamma}h^\sigma,
    \end{equation}
    where $\sigma < \min\{2\alpha-\sfrac{1}{2},2\}$.
\end{prop}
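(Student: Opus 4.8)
The plan is to recast the statement as a Hilbert--Schmidt operator bound and then combine the deterministic finite element error estimates for the fractional operator with a summation over the spectrum of $L$. First I would observe that, since $\alpha>\sfrac12$, one has $\sum_j \lambda_j^{-\alpha}<\infty$, so $L^{-\sfrac{\alpha}{2}}$ is Hilbert--Schmidt and $w$ is a genuine $L_2(\Gamma)$-valued Gaussian field; the covariance operator of $w$ is then $\mathcal{C}_\alpha:=L^{-\alpha}$, and (recalling $\mathcal{W}_h=P_h\mathcal{W}$, so that $w_h=L_h^{-\sfrac{\alpha}{2}}P_h\mathcal{W}$) the covariance operator of $w_h$ is $\mathcal{C}_{\alpha,h}:=L_h^{-\alpha}P_h$, both viewed as operators on $L_2(\Gamma)$. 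The functions $\varrho^\alpha$ and $\varrho^\alpha_h$ are exactly the integral kernels of $\mathcal{C}_\alpha$ and $\mathcal{C}_{\alpha,h}$, and for Hilbert--Schmidt integral operators the $L_2(\Gamma\times\Gamma)$ norm of the kernel equals the Hilbert--Schmidt norm of the operator; hence it suffices to bound $\|\mathcal{C}_\alpha-\mathcal{C}_{\alpha,h}\|_{\mathcal{L}_2(L_2(\Gamma))}$.

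The key deterministic ingredient I would invoke is a finite element error estimate for the fractional power with data of possibly negative Sobolev order: for each $\delta$ in a one-sided neighbourhood of $-\sfrac12$ (say $-\sfrac12-\varepsilon_0\le\delta\le 0$ for some small $\varepsilon_0>0$) there is a constant $C=C(\delta,\alpha,\kappa,\Gamma)$ with
$$\|(L^{-\alpha}-L_h^{-\alpha}P_h)f\|_{L_2(\Gamma)}\le C\,h^{\min\{2\alpha+\delta,\,2\}}\,\|f\|_{\dot{H}^{\delta}(\Gamma)},\qquad f\in\dot{H}^{\delta}(\Gamma).$$
This is the metric-graph analogue of the Euclidean fractional FEM estimates and should follow from the finite element analysis of \citet{bolin2023regularity}: for integer $\alpha$ it is C\'ea's lemma together with an Aubin--Nitsche duality argument (legitimate because $L$ is self-adjoint and $L^{-1}$ maps $L_2(\Gamma)$ into $K(\Gamma)\subset\widetilde{H}^2_C(\Gamma)$ under Assumption~\ref{assumption1}, which supplies the needed elliptic regularity shift, and then a further duality step for the negative-order data); the general case follows by writing $L^{-\alpha}=L^{-\floorf{\alpha}}L^{-\{\alpha\}}$, using the Balakrishnan/Dunford--Taylor representation $L^{-\{\alpha\}}=c_{\{\alpha\}}\int_0^\infty t^{-\{\alpha\}}(tI+L)^{-1}\,dt$ with the corresponding representation for $L_h^{-\{\alpha\}}$, uniform-in-$t$ resolvent FEM estimates, and interpolation in the data space. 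Here $P_h$ extends boundedly to $\dot{H}^{\delta}(\Gamma)$ for $\delta\ge -1$ since $V_h\subset H^1(\Gamma)$, and $2\alpha+\delta>0$ for admissible $\delta$ by the hypothesis $\alpha>\sfrac12$, so all terms are meaningful.

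Next I would expand the Hilbert--Schmidt norm in the $L_2(\Gamma)$-orthonormal eigenbasis $(e_j)_j$ of $L$, with eigenvalues $(\lambda_j)_j$, which by the standard spectral (Weyl) asymptotics for the Kirchhoff Laplacian (\citet{berkolaiko2013introduction}) satisfy $\lambda_j\gtrsim j^2$. Applying the displayed estimate with $f=e_j$, for which $\|e_j\|_{\dot{H}^{\delta}(\Gamma)}=\lambda_j^{\delta/2}$,
$$\|\mathcal{C}_\alpha-\mathcal{C}_{\alpha,h}\|_{\mathcal{L}_2(L_2(\Gamma))}^2=\sum_j\|(L^{-\alpha}-L_h^{-\alpha}P_h)e_j\|_{L_2(\Gamma)}^2\le C^2\,h^{2\min\{2\alpha+\delta,2\}}\sum_j\lambda_j^{\delta}.$$
Choosing $\delta$ slightly below $-\sfrac12$ makes $\sum_j\lambda_j^{\delta}\lesssim\sum_j j^{2\delta}<\infty$, and letting $\delta\uparrow-\sfrac12$ gives $\|\mathcal{C}_\alpha-\mathcal{C}_{\alpha,h}\|_{\mathcal{L}_2}\lesssim_{\sigma,\alpha,\kappa,\Gamma} h^{\min\{2\alpha-\sfrac12,2\}-\varepsilon}$ for every $\varepsilon>0$; taking $\sigma=\min\{2\alpha-\sfrac12,2\}-\varepsilon$ then finishes the proof in view of the reduction in the first paragraph.

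The hard part will be the deterministic estimate in the second paragraph: obtaining the $L_2(\Gamma)$ finite element error bound of order $h^{\min\{2\alpha+\delta,2\}}$ for $L^{-\alpha}$ with data $f\in\dot{H}^{\delta}(\Gamma)$ for $\delta$ down to (just above) $-\sfrac12$, which is precisely the critical value at which the spectral series in the last display barely diverges. This is where the $\varepsilon$-loss — equivalently the strict inequality $\sigma<\min\{2\alpha-\sfrac12,2\}$ — originates, and it is also where $\alpha>\sfrac12$ is genuinely used, since that assumption both makes $\varrho^\alpha$ an element of $L_2(\Gamma\times\Gamma)$ in the first place and guarantees $2\alpha+\delta>0$ for the admissible range of $\delta$, so that $L^{-\alpha}$ acts boundedly from $\dot{H}^{\delta}(\Gamma)$ into $L_2(\Gamma)$.
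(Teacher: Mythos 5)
Your proposal is correct and follows essentially the same route as the paper, which reduces the claim to bounding $\norm{L^{-\alpha}-L_h^{-\alpha}P_h}_{\mathcal{L}_2(L_2(\Gamma))}$ via the identification of the kernel's $L_2(\Gamma\times\Gamma)$ norm with the Hilbert--Schmidt norm and then defers entirely to \citet[Thm.~6.9]{bolin2023regularity}; your spectral summation over the eigenbasis combined with the negative-order deterministic FEM estimate (with the $\varepsilon$-loss arising from $\delta\downarrow$ past $-\sfrac{1}{2}$) is precisely the argument carried out in that reference. Note only that, like the paper, you ultimately outsource the key deterministic estimate to the same source rather than proving it.
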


A detailed proof of this result can be found in \citet[Thm. 6.9]{Bolin2024Regularity}. Proposition~\ref{propfemapprox} provides an approximation of $\varrho^\alpha$ via $\varrho_h^\alpha$, which essentially translates to approximating its corresponding covariance operator $L^{-\alpha}$ with $L_h^{-\alpha}$. This is a consequence of the relationship between the norm of a Hilbert-Schmidt integral operator and its kernel. With this result at hand, we are ready to prove Proposition~\ref{conv_cov_func}.

\begin{proof}[Proof of Proposition~\ref{conv_cov_func}]
    By the relationship of a kernel operator and its kernel function, together with the triangle inequality, we have that
    \begin{align*}
        \|\varrho^\alpha - \varrho_{h,m}^\alpha\|_{L_2(\Gamma\times \Gamma)} &= \|M_{\tau^{-1}}L^{-\alpha}M_{\tau^{-1}}  - M_{\tau^{-1}}r_m(L_h^{-1})P_hM_{\tau^{-1}}\|_{\mathcal{L}_2(L_2(\Gamma))}\\
       &\leq \|M_{\tau^{-1}}L^{-\alpha}_hP_hM_{\tau^{-1}} - M_{\tau^{-1}}r_m(L_h^{-1})P_hM_{\tau^{-1}}\|_{\mathcal{L}_2(L_2(\Gamma))}\\
       &+ \|M_{\tau^{-1}}L^{-\alpha}M_{\tau^{-1}} - M_{\tau^{-1}}L^{-\alpha}_hP_hM_{\tau^{-1}}\|_{\mathcal{L}_2(L_2(\Gamma))}
    \end{align*}
    We can estimate each of the above terms as follows.
    \begin{align*}
        \big\|M_{\tau^{-1}}L^{-\alpha}_hP_hM_{\tau^{-1}} - M_{\tau^{-1}}&r_m(L_h^{-1})P_hM_{\tau^{-1}}\big\|_{\mathcal{L}_2(L_2(\Gamma))}\\
        &= \|M_{\tau^{-1}}\left(L_h^{-\alpha}P_h- r_m(L_h^{-1})P_h\right)M_{\tau^{-1}}\|_{\mathcal{L}_2(L_2(\Gamma))}\\
        & \leq \|M_{\tau^{-1}}\|_{\mathcal{L}(L_2(\Gamma))}\|\left(L_h^{-\alpha}P_h- r_m(L_h^{-1})P_h\right)M_{\tau^{-1}}\|_{\mathcal{L}_2(L_2(\Gamma))}\\
        & \leq \|M_{\tau^{-1}}\|^2_{\mathcal{L}(L_2(\Gamma))}\|L_h^{-\alpha}P_h- r_m(L_h^{-1})P_h\|_{\mathcal{L}_2(L_2(\Gamma))}\\
        & \leq \|\tau^{-1}\|^2_{L^\infty(\Gamma)}\|L_h^{-\alpha}P_h- r_m(L_h^{-1})P_h\|_{\mathcal{L}_2(L_2(\Gamma))},\\
        \big\|M_{\tau^{-1}}L^{-\alpha}M_{\tau^{-1}} - M_{\tau^{-1}}&L^{-\alpha}_hP_hM_{\tau^{-1}}\big\|_{\mathcal{L}_2(L_2(\Gamma))}\\
         &= \|M_{\tau^{-1}}\left(L^{-\alpha}-L^{-\alpha}_hP_h\right)M_{\tau^{-1}}\|_{\mathcal{L}_2(L_2(\Gamma))}\\
        & \leq \|M_{\tau^{-1}}\|_{\mathcal{L}(L_2(\Gamma))}\|\left(L^{-\alpha}-L^{-\alpha}_hP_h\right)M_{\tau^{-1}}\|_{\mathcal{L}_2(L_2(\Gamma))}\\
        & \leq \|M_{\tau^{-1}}\|^2_{\mathcal{L}(L_2(\Gamma))}\|L^{-\alpha}-L^{-\alpha}_hP_h\|_{\mathcal{L}_2(L_2(\Gamma))}\\
        & \leq \|\tau^{-1}\|^2_{L^\infty(\Gamma)}\|L^{-\alpha}-L^{-\alpha}_hP_h\|_{\mathcal{L}_2(L_2(\Gamma))}
    \end{align*}
    Therefore, 
    \begin{align*}
        \|\varrho^\alpha - \varrho_{h,m}^\alpha\|_{L_2(\Gamma\times \Gamma)} & \leq \|\tau^{-1}\|^2_{L^\infty(\Gamma)}\|L_h^{-\alpha}P_h- r_m(L_h^{-1})P_h\|_{\mathcal{L}_2(L_2(\Gamma))}\\
        &\quad+\|\tau^{-1}\|^2_{L^\infty(\Gamma)}\|L^{-\alpha}-L^{-\alpha}_hP_h\|_{\mathcal{L}_2(L_2(\Gamma))}.
    \end{align*}
    From Proposition~\ref{propfemapprox}, we have that 
    \begin{equation}
    \label{1bound}
        \|L^{-\alpha} - L^{-\alpha}_hP_h\|_{\mathcal{L}_2(L_2(\Gamma))}\lesssim_{\sigma,\alpha,\kappa,\Gamma} h^{\sigma}.
    \end{equation}
    This means that we only need to upper bound the term $\|L^{-\alpha}_hP_h - r_m(L_h^{-1})P_h\|_{\mathcal{L}_2(L_2(\Gamma))}$. Recall that the eigenvalues of $L_h$ are $0<\lambda_{1,h}\leq \dots\leq \lambda_{N_h,h}$ with corresponding eigenfunctions $(e_{j,h})_j$, which are orthonormal in $L_2(\Gamma)$. Since $\lambda_j\leq\lambda_{j,h},j\in\mathbb{N}$, we have that
    \begin{equation*}
        0<\dfrac{1}{\lambda_{N_h,h}}\leq\dots\leq\dfrac{1}{\lambda_{1,h}}\leq\dfrac{1}{\lambda_{1}}.
    \end{equation*}
    and therefore $J_h:=[\lambda^{-1}_{N_h,h},\lambda^{-1}_{1,h}]\subset [0,\lambda^{-1}_1] =: J$. Normalizing $L$, we get that $J_h\subset J\subset [0,1]$. Let $f(x) = x^{\alpha}$ and $\hat{f}(x) = x^{\{\alpha\}}$. Since $\alpha=  \lfloor \alpha \rfloor+\{\alpha\}$, we have that $f(x) = x^{\lfloor \alpha \rfloor}\hat{f}(x)$. Let $\hat{r}_m(x)=\dfrac{p(x)}{q(x)}$ be the best $L_\infty$-approximation of $\hat{f}(x)$ on $J_h$. Define $r_m(x) = x^{\lfloor \alpha \rfloor}\hat{r}_m(x)$. Recall that $r_m(L_h^{-1}) = L_h^{-{\lfloor \alpha \rfloor}} p(L_h^{-1})q(L_h^{-1})^{-1}$. Now
    \begin{align}
        \|L^{-\alpha}_hP_h - &r_m(L_h^{-1})P_h\|_{\mathcal{L}_2(L_2(\Gamma))}^2 \notag\\ 
        &= \sum_{j=1}^{N_h} \|L^{-\alpha}_he_{j,h} - r_m(L_h^{-1})e_{j,h}\|_{L_2(\Gamma)}^2
        = \sum_{j=1}^{N_h} \|\lambda^{-\alpha}_{j,h}e_{j,h} - r_m(\lambda_{j,h}^{-1}) e_{j,h}\|_{L_2(\Gamma)}^2\notag\\
        &= \sum_{j=1}^{N_h} (\lambda^{-\alpha}_{j,h}- r_m(\lambda_{j,h}^{-1}))^2\| e_{j,h}\|_{L_2(\Gamma)}^2
        = \sum_{j=1}^{N_h} (\lambda^{-\alpha}_{j,h}- r_m(\lambda_{j,h}^{-1}))^2
        \notag\\
        &\leq \sum_{j=1}^{N_h} \max\big|\lambda^{-\alpha}_{j,h}- r_m(\lambda_{j,h}^{-1})\big|^2
         = {N_h} \max_{1\leq j \leq N_h}\big|\lambda^{-\alpha}_{j,h}- r_m(\lambda_{j,h}^{-1})\big|^2. \label{eqproof1}
    \end{align}
    Note that $x^{\lfloor \alpha \rfloor}\leq1$ on $J_h\subset[0,1]$ because $\lfloor \alpha \rfloor>0$. Note also that $f(x) = x^{\alpha}\leq x^{\{\alpha\}} = \hat{f}(x)$ on $[0,1]$. Hence,
    \begin{equation}
        \max_{1\leq j \leq N_h}\big|(\lambda^{-1}_{j,h})^{\alpha}- r_m(\lambda_{j,h}^{-1})\big| \leq \sup_{x\in J_h}\big|f(x)- r_m(x)\big| \leq \sup_{x\in [0,1]}\big|\hat{f}(x)- \hat{r}_m(x)\big|.\label{eqproof2}
    \end{equation}
    From \citet[Theorem 1]{Stahl2003Best}, we have that 
    \begin{equation}
        \sup_{x\in [0,1]}\big|\hat{f}(x)- \hat{r}_m(x)\big| \lesssim e^{-2\pi\sqrt{\{\alpha\}m}}.\label{eqproof3}
    \end{equation}
    Combining \eqref{eqproof1}, \eqref{eqproof2}, and \eqref{eqproof3}, we obtain that
    \begin{equation*}
        \|L^{-\alpha}_hP_h - r_m(L_h^{-1})P_h\|_{\mathcal{L}_2(L_2(\Gamma))}\lesssim N_h^{\sfrac{1}{2}}e^{-2\pi\sqrt{\{\alpha\}m}}.
    \end{equation*}
    Using that $N_h\lesssim h^{-1}$, 
    $\|L^{-\alpha}_hP_h - r_m(L_h^{-1})P_h\|_{\mathcal{L}_2(L_2(\Gamma))}\lesssim h^{-\sfrac{1}{2}}e^{-2\pi\sqrt{\{\alpha\}m}}$.
    Since this source of error only occurs when we employ the rational approximation, 
    \begin{equation}
    \label{2bound}
        \|L^{-\alpha}_hP_h - r_m(L_h^{-1})P_h\|_{\mathcal{L}_2(L_2(\Gamma))}\lesssim 1_{\alpha\not\in\mathbb{N}} h^{-\sfrac{1}{2}}e^{-2\pi\sqrt{\{\alpha\}m}}.
    \end{equation}
    Combining \eqref{1bound} and \eqref{2bound}, we obtain the desired result.
    \begin{align*} \|\varrho^\alpha - \varrho_{h,m}^\alpha\|_{L_2(\Gamma\times \Gamma)}
    &\lesssim_{\sigma,\alpha,\kappa,\Gamma} \|\tau^{-1}\|^2_{L^\infty(\Gamma)} \left( h^{\sigma} + 1_{\alpha\not\in\mathbb{N}}\cdot h^{-\sfrac{1}{2}}e^{-2\pi\sqrt{\{\alpha\}m}}\right)
    \\
    &\lesssim_{\sigma,\alpha,\kappa,\tau,\Gamma} h^{\sigma} + 1_{\alpha\not\in\mathbb{N}}\cdot h^{-\sfrac{1}{2}}e^{-2\pi\sqrt{\{\alpha\}m}}.
    \end{align*}
    \hfill
\end{proof}

\bibliographystyle{chicago}
\bibliography{reference}
\end{document}